    \def\arcr{\@arraycr}
  \tikzset{
    role/.style={
      draw,
      rectangle,
      thick,
      minimum width=1cm,
      minimum height=.4cm}
  }
\tiny\color[gray]{0.3},
\apptocmd{\sloppy}{\hbadness 10000\relax}{}{}
\begin{document}
\title{\CAMP: Cost-Aware Multiparty Session Protocols}

\author{David Castro-Perez}
\email{d.castro-perez@imperial.ac.uk}
\orcid{0000-0002-6939-4189}
\author{Nobuko Yoshida}
\email{n.yoshida@imperial.ac.uk}
\orcid{0000−0002−3925−8557}
\affiliation{%
  \institution{Imperial College London}
  \department{Computing}
  \streetaddress{180 Queen's Gate}
  \city{London}
  \state{}
  \postcode{2AZ SW11}
\country{United Kingdom}
}

\begin{abstract}
This paper presents \CAMP, a new static performance analysis framework for
message-passing concurrent and distributed systems, based on the theory of
\emph{multiparty session types} (\MPST). Understanding the run-time
performance of concurrent and distributed systems is of great importance for
the identification of bottlenecks and optimisation opportunities. In the
message-passing setting, these bottlenecks are generally \emph{communication
overheads} and \emph{synchronisation times}. Despite its importance,
reasoning about these \emph{intensional} properties of software, such as
performance, has received little attention, compared to verifying
\emph{extensional} properties, such as correctness. Behavioural protocol
specifications based on sessions types capture not only extensional, but also
intensional properties of concurrent and distributed systems. \CAMP{}
augments \MPST{} with annotations of \emph{communication latency} and
\emph{local computation cost}, defined as estimated execution times, that we
use to extract cost equations from protocol descriptions. \CAMP{} is also
extendable to analyse \emph{asynchronous communication optimisation} built on
a recent advance of session type theories. We apply our tool to different
existing benchmarks and use cases in the literature with a wide range of
communication protocols, implemented in C, MPI-C, Scala, Go, and
OCaml. Our benchmarks show that, in most of the cases, we predict an
upper-bound on the real execution costs with < 15\% error.
\end{abstract}

%

\keywords{protocols, cost, message-passing communications, static
  verification, session types}  

\begin{CCSXML}
<ccs2012>
   <concept>
       <concept_id>10003752.10010124.10010138.10010143</concept_id>
       <concept_desc>Theory of computation~Program analysis</concept_desc>
       <concept_significance>500</concept_significance>
       </concept>
   <concept>
       <concept_id>10011007.10010940.10011003.10011002</concept_id>
       <concept_desc>Software and its engineering~Software performance</concept_desc>
       <concept_significance>500</concept_significance>
       </concept>
   <concept>
       <concept_id>10011007.10011006.10011060</concept_id>
       <concept_desc>Software and its engineering~System description languages</concept_desc>
       <concept_significance>500</concept_significance>
       </concept>
 </ccs2012>
\end{CCSXML}

\ccsdesc[500]{Theory of computation~Program analysis}
\ccsdesc[500]{Software and its engineering~Software performance}
\ccsdesc[500]{Software and its engineering~System description languages}

\keywords{parallel programming, session types, cost models, message optimisations}

\maketitle

\section{Introduction}
\label{sec:intro}
Understanding the amount of resources, e.g.\ time or memory that are required
by a computation, is of great importance. Correct but slow-performing
software can cause a number of problems, ranging from the unnecessary use of
resources, to exploitable security vulnerabilities. Worse still, performance
issues are very difficult to detect in runtime because of their non fail-stop
nature; and although the root causes of performance bugs can be very diverse,
\emph{uncoordinated functions} and \emph{synchronisation issues} are
prevalent, i.e.\ inefficient composition of efficient functions, and
unnecessary synchronisation that increases thread competition
~\cite{DBLP:conf/pldi/JinSSSL12}. These inefficient compositions have more
impact in a distributed setting, where the \emph{communication overhead} and
\emph{synchronisation cost} may become the bottleneck of the whole system.

The development of new static performance analysis tools will reduce the
impact of bad performing software, by allowing the identification of their
bottlenecks and optimisation. Further, for concurrency and distribution, such
a tool must take into account communication and synchronisation overheads.
This paper presents a new static performance analysis framework, \CAMP{}
(\emph{Cost-Aware Multiparty Protocols}), that can identify potential
performance bottlenecks in concurrent and distributed systems. Specifically,
\CAMP{} addresses the following two main challenges: \emph{non-determinism}
and \emph{practicality}. Firstly, the non-deterministic nature of concurrent
and distributed systems makes it hard to reason statically about the
performance of alternative interleavings of actions in a program trace; and
secondly, making the performance analysis practically useful for already
existing implementations is not trivial.

\CAMP{} solves the non-determinism issue by building on top of
\emph{multiparty session types} (\MPST) \cite{Honda2008Multiparty,CDPY2015}.
\MPST{} is a well-established theory that describes not only
\emph{extensional}, but also \emph{intensional} information about
communicating systems. Specifically, \MPST{} captures the communication
structure, or \emph{protocol} among distributed peers. Protocols appear not
only in distributed networks but also in parallel multicore programming as
\emph{patterns} or \emph{topologies}
\cite{Rauber10,Taubenfeld06,Lea97,Goetz06}.
\MPST{} uses \emph{global types} for describing such
protocols from a global point of view, and can be used to ensure
\emph{deadlock-freedom} and \emph{session fidelity}: every send has a
matching receive, and every component of the concurrent/distributed system
complies with its part in the global protocol. Built on the \MPST{} theory,
\CAMP{} enables the protocol-based performance analysis, giving a precise
abstraction as (correct) communication structures of programs. By tying the
analysis to a particular protocol specification that is statically enforced
on the concurrent/distributed system, \CAMP{} solves the issue of
non-determinism.

On the practical side, since all we require is a global type, \CAMP{} can be
readily applied to existing implementations, as long as they are proven to
comply with a known global type. We show this by taking existing benchmarks,
either implemented using \MPST{}-based tools, or following a known protocol.
Different extensions of the core \MPST{} have been already used to implement
a wide range of applications written in different programming languages
through several transports and architectures e.g.
\cite{CHJNY:2019,HY2017,NCY2015,bettytoolbook,CY:2020, INY19}, and our
methodology is easily adaptable to these variants.
In addition, not only \CAMP{} is immediately usable
for analysis of representative parallel
patters \cite{dwarfs-cacm,Krommydas16,Rauber10}, but also
it is applicable to Savina benchmarks \cite{IS14}
or multicore algorithms
which incur more complex patterns and synchronisations (\S\ref{sec:benchstructure}).

The key notion in \CAMP{} is that of execution \emph{cost}: the amount of
time that it takes a protocol, participant or function to run from beginning
to end. To statically compute execution costs for concurrency and
distribution, \CAMP{} extends global types with \emph{sizes} for values of
messages (encoded in the payload types) and \emph{local computation cost}
information. This size and cost information can be obtained via profiling, or
further static analysis, such as using \emph{sized-types}
\cite{DBLP:conf/popl/HughesPS96}. Our cost models take these extended
protocols, and compute a set of equations which describe the total cost of
each participant. These measurements provide us with
fine-grained information to obtain communication overhead and synchronisation
cost among participants of a protocol. For recursive protocols, \CAMP{}
produces a set of \emph{recurrence equations} that describe the total cost
after each iteration of the protocol. For non-terminating protocols (e.g.\
streaming computation split in multiple stages), \CAMP{} computes the
\emph{latency}, or the average cost per iteration.

\CAMP{} enables to quantify the performance gain of \emph{asynchronous
communication optimisation}. 
We evaluate this using non-optimised and optimised benchmarks.
The optimisation
analysis by \CAMP{} is grounded on \emph{asynchronous session subtyping},
which is one of the most advanced session types theories in the literature, and has been
actively studied over a decade using various different formalisms, e.g.,
\emph{first and higher-order mobile processes}
\cite{mostrous_yoshida_honda_esop09,MostrousY15,DBLP:conf/tlca/MostrousY09,CDSY2017,cdy14,GPPSY2021},
\emph{denotational semantics} \cite{Dezani16,DemangeonY15} and \emph{automata
  theories} \cite{BravettiCZ17,BravettiCZ18,LY2017,BCLYZ2019}.

\textsl{\textbf{Contributions.}} \ We present a compile-time performance
analysis framework, \CAMP{}, for concurrent/distributed systems that infers
upper execution cost bounds of multiparty session protocols. The cost models
in \CAMP{} are parametric, and can combine both static and dynamic (e.g.,
profiling) information to produce accurate results. We prove that the cost
analysis by \CAMP{} is sound with respect to the operational semantics of a
given global type instrumented with sizes and execution costs; and
extensible to analyse communication optimisations. Our main
contributions are:
\begin{enumerate}[label=\alph*), labelindent=0em, leftmargin=*]
\item we define the semantics of \CAMP{}, integrating \emph{global} and
\emph{local type semantics} with \emph{local computation costs}, that can be
used to explore the costs of particular traces (\S\ref{sec:mpst});
\item we instrument \emph{global types} with size and local cost information,
and use it to statically estimate an upper bound of the execution cost of a
protocol, that we prove sound with respect to the operational semantics
(\S\ref{sec:cost});
\item we define multiple metrics on the \emph{cost recurrences} associated with
recursive protocols, that can be used to effectively analyse the
performance behaviour of potentially infinite executions (\S\ref{sec:thro});
\item we extend \CAMP{} to handle \emph{asynchronous message optimisations},
enabling us to statically quantify the potential performance gains when
performing such optimisations/reordering (\S\ref{sec:extn});
\item we implement a DSL for specifying global types, from which we can
extract cost equations (\S\ref{sec:impl}), and we compare our cost model
predictions with real benchmarks used in \MPST{} implementations in
different languages: C-MPI \cite{NCY2015}, C+pthreads \cite{CY:2020}, Go
\cite{CHJNY:2019}, \OCaml{} \cite{INY19} and F$\star$ \cite{OOPSLA20FStar}.
Additionally, we apply \CAMP{} to a subset of the Savina benchmarking suite
(Scala) \cite{IS14}. These benchmarks include examples of common, and complex
topologies, such as ring, butterfly and a double-buffering protocol
(\S\ref{sec:eval}).
\end{enumerate}
\S\ref{sec:relw} discusses related work and \S\ref{sec:concl}
concludes the paper.
\textbf{\textsl{Appendix}}
includes additional definitions and
full proofs.
The anonymised git repository \url{https://github.com/camp-cost/camp}
provides a working
prototype implementation, described in \S\ref{sec:impl} and the data used in
\S\ref{sec:eval}, with
instructions for replicating our experiments. We will also submit
it as an  {\bf \textsl{artifact}}.

\section{Overview}
\label{sec:overview}
\begin{wrapfigure}{l}{.55\columnwidth}
  \centering
  \vspace{-5mm}
	\begin{tikzpicture}[x=1.5cm, y=-1.25cm, align=center, font=\footnotesize, draw, minimum width=.375cm, minimum height=.375cm, inner sep=0pt, outer sep=2pt]
		\node [draw, inner sep=1mm] (G) at (0,0) {$G$};
		\node [ellipse, draw, inner sep=.5mm] (cG) at (-1.5,0) {$\cost$};
		\node [draw, inner sep=1mm] (L1) at (-.75,1) {$L_1$};
		\node [draw, inner sep=1mm] (L2) at (-.25,1) {$L_2$};
		\node [] (LL) at (.25,1) {\strut...};
		\node [draw, inner sep=1mm] (Ln) at (.75,1) {$L_n$};
		\node [draw, inner sep=1mm] (P1) at (-.75,2) {$P_1$};
		\node [draw, inner sep=1mm] (P2) at (-.25,2) {$P_2$};
		\node [] (PP) at (.25,2) {\strut...};
		\node [draw, inner sep=1mm] (Pn) at (.75,2) {$P_n$};
        \draw [->] (G.west) to (cG.east);
        \draw [dashed, -] (cG.south) to ($(cG.south) + (0,2.3)$);
        \draw [dashed, -] ($(cG.south) + (0,2.3)$) -| (Pn.south);
        \draw [dashed, -] ($(cG.south) + (0,2.3)$) -| (P2.south);
        \draw [dashed, -] ($(cG.south) + (0,2.3)$) -| (P1.south);
		\draw [->] (G.south) to (L1);
		\draw [->] (G.south) to (L2);
		\draw [->, dashed] (G.south) to (LL);
		\draw [->] (G.south) to (Ln);
		\draw [->] (L1) to (P1);
		\draw [->] (L2) to (P2);
		\draw [->, dashed] (LL) to (PP);
		\draw [->] (Ln) to (Pn);
		\draw [draw=none, xshift=0pt, yshift=0pt] (-2.3,1) to node [anchor=west, text width=1cm, font=\scriptsize, align=center] {cost \emph{approximates} the execution time of each $P_i$} (-2.3,1.75);
		\draw [draw=none, xshift=0pt, yshift=0pt] (-1.75,-.00) to node [anchor=west, text width=3cm, font=\scriptsize, align=center] {\emph{estimate} cost of each role in $G$} (-1.75,-.75);
		\draw [draw=none, xshift=0pt, yshift=0pt] (1,.25) to node [anchor=west, text width=2cm, font=\scriptsize, align=center] {\emph{project} global type $G$ onto each role} (1,.75);
		\draw [draw=none, xshift=0pt, yshift=0pt] (1,1.25) to node [anchor=west, text width=2cm, font=\scriptsize, align=center] {\emph{type-check} each process $P_i$ against local type $L_i$} (1,1.75);
	\end{tikzpicture}
	\caption{\CAMP{} framework}
\vspace{-8mm}
\label{fig:mpst}
\end{wrapfigure}

\myparagraph{\MPST{} Basics.}
We first explain how \MPST{} satisfies extensional properties.
Fig.\ \ref{fig:mpst} depicts the standard top-down methodology
of \MPST{} enhanced with cost-analysis, which we illustrate by a
simple \emph{scatter-gather} example between two Masters
($\roleM_1$, $\roleM_2$) and two Workers ($\roleW_1$,
$\roleW_2$).

\vskip.2cm
{\small
  \begin{center}
    $
    \begin{array}{l @{\;} l}
      G =
      &
      \roleM_{1} \gMsg \roleW_1 \gTy{\tau_1}.
      \roleM_1 \gMsg \roleW_2 \gTy{\tau_1}.
      \\
      &
      \roleW_1 \gMsg \roleM_2 \gTy{\tau_2}.
      \roleW_2 \gMsg \roleM_2 \gTy{\tau_2}
    \end{array}
    $
  \end{center}
  }
\vskip.2cm
\noindent $G$ is a \emph{global type}: a specification of the protocol
between participants from a global perspective. $G$ says
master $\roleM_1$ first sends a message with type $\tau_1$ to
worker $\roleW_1$ then to worker $\roleW_2$,
and finally master $\roleM_2$ collects
a message with type $\tau_2$ from each worker.
For each participant $\Rp$, the global type is projected to a \emph{local type}, which
describes localised send and receive actions from $\Rp$ viewpoint:
\[
\begin{array}{c}
  L_1 = \roleW_1\tSend\tau_1. \roleW_2\tSend\tau_1.\lEnd
  \qquad
  L_2 = \roleM_1\tRecv\tau_1. \roleM_2\tSend\tau_2.\lEnd
  \qquad L_3 = \roleW_1\tRecv\tau_2. \roleW_2\tRecv\tau_2.\lEnd
\end{array}
\]
$L_1$ says $\roleM_1$ should send ($\tSend$) a
$\tau_1$ message to $\roleW_1$, then to $\roleW_2$, while
$L_2$ says $\roleW_1$ receives ($\tRecv$) a $\tau_1$ message
from  $\roleM_1$, followed by sending a $\tau_2$ message
to $\roleM_2$ (worker $\roleW_2$ has the same type $L_2$).
Local types are used to statically check local programs $P_i$
implementing these types, i.e. the communication structures
of each program complies with their local type. A well-typed system of
programs is guaranteed free from deadlock and type
errors, following the protocol given by $G$ (session fidelity).

\myparagraph{Cost-Aware \MPST{}}
Now we consider the cost to run $G$:
{\small
\[
\roleM_1 \gMsg \roleW_1 \gTy{\tau_1 \hasCost \ccc_1}.
  \roleM_1 \gMsg \roleW_2 \gTy{\tau_1 \hasCost \ccc_1}.
  \roleW_1 \gMsg \roleM_2 \gTy{\tau_2}.
  \roleW_2 \gMsg \roleM_2 \gTy{\tau_2}
\]
}
Here $\ccc_1$ represents the \emph{local computation cost} at the
receiver side. In this example, we are assuming
the computational cost at $\roleW_1$ and $\roleW_2$ is
$\ccc_1$,  while such cost at $\roleM_2$ is $0$.
Another factor we should take into account
is the \emph{communication cost}, which is parameterised
by types, i.e. the time required for sending ($\csend(\tau)$) and
receiving ($\crecv(\tau)$)
a value of type $\tau$.

We assume our transport is
\emph{asynchronous}, i.e.~sending is non-blocking
and the order of messages are preserved, like TCP communications,
hence there is no communication ordering between $\roleW_1$
and $\roleW_2$. Both workers can process the values
independently at two different locations or CPUs.
Then, the total execution cost at
$\roleM_1$, $\roleW_1$ and $\roleW_2$ are:
\[
  \begin{array}{@{}c@{}}
    \roleM_1\mapsto 2\times \csend(\tau_1)
    \quad
    \roleW_1\mapsto \csend(\tau_1) + \crecv(\tau_1) + \ccc_1 + \csend(\tau_2)
    \quad
    \roleW_2\mapsto 2 \times \csend(\tau_1) + \crecv(\tau_1) + \ccc_1 + \csend(\tau_2)
  \end{array}
\]
To consider the cost for $\roleM_2$, we should take care of the dependencies
in the protocol. Each $\roleW_i$ can operate in parallel, and they exhibit
almost the same cost. The only difference is that worker $\roleW_1$ can
perform its computation as soon as $\roleM_1$ sends one message, but worker
$\roleW_2$ can only proceed after $\roleM_1$ sends the second message. This
difference means that $\roleM_2$ can start gathering one of the messages,
while the other worker finishes its actions, which will be delayed by the time
it takes $\roleM_1$ to send one message. Hence the cost of $\roleW_2$
is:
\[
       \roleM_2\mapsto
        \max(\crecv(\tau_2), \csend(\tau_1))
        + \csend(\tau_1) + \crecv(\tau_1) + \ccc_1 + \csend(\tau_2) + \crecv(\tau_2)
\]
In \S\ref{sec:cost}, we shall prove the cost calculated based on local types
and global types semantics coincide.

In many scenarios, we do not know how many iterations recursive protocols are
going to run, or this number of iterations is large. In such cases, computing
the cost of the protocol is not useful or meaningful. In such scenarios, we
calculate the average cost per iteration of a protocol (\emph{latency}) from a
set of recurrences. From this latency, we calculate other useful metrics, such
as the latency divided by the number of messages exchanged per iteration by
participant (\emph{latency relative to a particular participant}). The latency
relative to a participant is used to estimate how much work can a participant do
per iteration of the protocol.

\section{Cost-Aware Multiparty Session Protocols}
\label{sec:mpst}

\begin{wrapfigure}{r}{.5\columnwidth}
  \begin{center}
%
%
  \begin{tikzpicture}[every node/.style={font=\small,
                                         minimum height=0.42cm,
                                         minimum width=1.2cm}]

   \node [matrix, very thin,column sep=0.4cm,row sep=0.1cm] (matrix) at (0,0) {
    &
    \node(0,0) (stage10) {}; & &
    \node(0,0) (stage20) {}; & &
    \node(0,0) (stage30) {}; & \\
    &
    \node(0,0) (stage11) {}; & &
    \node(0,0) (stage21) {}; & &
    \node(0,0) (stage31) {}; & \\
    &
    \node(0,0) (stage12) {}; & &
    \node(0,0) (stage22) {}; & &
    \node(0,0) (stage32) {}; & \\
    &
    \node(0,0) (stage13) {}; & &
    \node(0,0) (stage23) {}; & &
    \node(0,0) (stage33) {}; & \\
    &
    \node(0,0) (stage14) {}; & &
    \node(0,0) (stage24) {}; & &
    \node(0,0) (stage34) {}; & \\
    &
    \node(0,0) (stage15) {}; & &
    \node(0,0) (stage25) {}; & &
    \node(0,0) (stage35) {}; & \\
    &
    \node(0,0) (stage16) {}; & &
    \node(0,0) (stage26) {}; & &
    \node(0,0) (stage36) {}; & \\
    &
    \node(0,0) (stage17) {}; & &
    \node(0,0) (stage27) {}; & &
    \node(0,0) (stage37) {}; & \\
    &
    \node(0,0) (stage18) {}; & &
    \node(0,0) (stage28) {}; & &
    \node(0,0) (stage38) {}; & \\
    &
    \node(0,0) (stage19) {}; & &
    \node(0,0) (stage29) {}; & &
    \node(0,0) (stage39) {}; & \\
    &
    \node(0,0) (stage110) {}; & &
    \node(0,0) (stage210) {}; & &
    \node(0,0) (stage310) {}; & \\
    &
    \node(0,0) (stage111) {}; & &
    \node(0,0) (stage211) {}; & &
    \node(0,0) (stage311) {}; & \\
    &
    \node(0,0) (stage112) {}; & &
    \node(0,0) (stage212) {}; & &
    \node(0,0) (stage312) {}; & \\
  };

  \fill
    (stage10) node[fill=white] {\Large $\Rp$}
    (stage20) node[fill=white] {\Large $\Rq$}
    (stage30) node[fill=white] {\Large $\Rr$};


  \filldraw[fill=blue!10]
    (stage11.north west) rectangle (stage11.south east)

    (stage25.north west) rectangle (stage25.south east)

    (stage39.north west) rectangle (stage39.south east)
    ;

  \filldraw[fill=green!10]
    (stage110.north west) rectangle (stage110.south east)

    (stage22.north west) rectangle (stage22.south east)

    (stage36.north west) rectangle (stage36.south east)
    ;

  \filldraw[fill=purple!10]
    (stage111.north west) rectangle (stage112.south east)

    (stage23.north west) rectangle (stage24.south east)

    (stage37.north west) rectangle (stage38.south east)
    ;

    \draw [dashed]
      (stage12.north west) rectangle (stage19.south east)
      (stage21.north west) rectangle (stage21.south east)
      (stage31.north west) rectangle (stage35.south east)
      ;

  \fill
    (stage16) node { \emph{wait} }
    (stage21) node { \emph{wait} }
    (stage33) node { \emph{wait} }
    ;

  \fill
    (stage11) node { send }
    (stage110) node { receive }
    (stage111) node [yshift=-.25cm] { compute }

    (stage22) node { receive }
    (stage23) node [yshift=-.25cm] { compute }
    (stage25) node { send }

    (stage36) node { receive }
    (stage37) node [yshift=-.25cm] { compute }
    (stage39) node { send }
    ;

  \draw [-Latex] (stage11.south east) -- (stage22.north west);
  \draw [-Latex] (stage25.south east) -- (stage36.north west);
  \draw [-Latex] (stage39.south west) -- (stage110.north east);

  \node at ($(stage11.north west) + (-.2,0)$) (co1) {} ;
  \node at ($(stage11.south west) + (-.2,0)$) (co2) {} ;
  \draw[thick, blue, -] ($(co1) + (-.1,0)$) --
                  ($(co1) + ( .1,0)$) ;
  \draw[thick, blue, -] ($(co2) + (-.1,0)$) --
                  ($(co2) + (.1,0)$) ;
  \draw[thick, blue, -, dotted] (co1.center) -- (co2.center) ;
  \node [xshift=-.2cm] at ($(co1)!.5!(co2)$) { $\csend$ };

  \node at ($(stage22.north east) + (.2,0)$) (ciq1) {} ;
  \node at ($(stage22.south east) + (.2,0)$) (ciq2) {} ;
  \draw[thick, blue, -]                     ($(ciq1) + (-.1,0)$) --
                                            ($(ciq1) + ( .1,0)$) ;
  \draw[thick, blue, -]                     ($(ciq2) + (-.1,0)$) --
                                            ($(ciq2) + (.1,0)$) ;
  \draw[thick, blue, -, dotted]              (ciq1.center) --
                                              (ciq2.center) ;
  \node [xshift=.2cm] at ($(ciq1)!.5!(ciq2)$) { $\crecv$ };

  \node at ($(stage23.north east) + (.2,0)$) (ccq1) {} ;
  \node at ($(stage24.south east) + (.2,0)$) (ccq2) {} ;
  \draw[thick, blue, -]                     ($(ccq1) + (-.1,0)$) --
                                            ($(ccq1) + ( .1,0)$) ;
  \draw[thick, blue, -]                     ($(ccq2) + (-.1,0)$) --
                                            ($(ccq2) + (.1,0)$) ;
  \draw[thick, blue, -, dotted]              (ccq1.center) --
                                              (ccq2.center) ;
  \node [xshift=.2cm] at ($(ccq1)!.5!(ccq2)$) { $\ccc_\Rq$ };

  \node at ($(stage25.north east) + (.2,0)$) (coq1) {} ;
  \node at ($(stage25.south east) + (.2,0)$) (coq2) {} ;
  \draw[thick, blue, -]                     ($(coq1) + (-.1,0)$) --
                                            ($(coq1) + ( .1,0)$) ;
  \draw[thick, blue, -]                     ($(coq2) + (-.1,0)$) --
                                            ($(coq2) + (.1,0)$) ;
  \draw[thick, blue, -, dotted]              (coq1.center) --
                                              (coq2.center) ;
  \node [xshift=.2cm] at ($(coq1)!.5!(coq2)$) { $\csend$ };

  \node at ($(stage36.north east) + (.2,0)$) (cir1) {} ;
  \node at ($(stage36.south east) + (.2,0)$) (cir2) {} ;
  \draw[thick, blue, -]                     ($(cir1) + (-.1,0)$) --
                                            ($(cir1) + ( .1,0)$) ;
  \draw[thick, blue, -]                     ($(cir2) + (-.1,0)$) --
                                            ($(cir2) + (.1,0)$) ;
  \draw[thick, blue, -, dotted]              (cir1.center) --
                                              (cir2.center) ;
  \node [xshift=.2cm] at ($(cir1)!.5!(cir2)$) { $\crecv$ };

  \node at ($(stage37.north east) + (.2,0)$) (ccr1) {} ;
  \node at ($(stage38.south east) + (.2,0)$) (ccr2) {} ;
  \draw[thick, blue, -]                     ($(ccr1) + (-.1,0)$) --
                                            ($(ccr1) + ( .1,0)$) ;
  \draw[thick, blue, -]                     ($(ccr2) + (-.1,0)$) --
                                            ($(ccr2) + (.1,0)$) ;
  \draw[thick, blue, -, dotted]              (ccr1.center) --
                                              (ccr2.center) ;
  \node [xshift=.2cm] at ($(ccr1)!.5!(ccr2)$) { $\ccc_\Rr$ };

  \node at ($(stage39.north east) + (.2,0)$) (cor1) {} ;
  \node at ($(stage39.south east) + (.2,0)$) (cor2) {} ;
  \draw[thick, blue, -]                     ($(cor1) + (-.1,0)$) --
                                            ($(cor1) + ( .1,0)$) ;
  \draw[thick, blue, -]                     ($(cor2) + (-.1,0)$) --
                                            ($(cor2) + (.1,0)$) ;
  \draw[thick, blue, -, dotted]              (cor1.center) --
                                              (cor2.center) ;
  \node [xshift=.2cm] at ($(cor1)!.5!(cor2)$) { $\csend$ };

  \node at ($(stage110.north west) + (-.2,0)$) (ci1) {} ;
  \node at ($(stage110.south west) + (-.2,0)$) (ci2) {} ;
  \draw[thick, blue, -] ($(ci1) + (-.1,0)$) --
                  ($(ci1) + ( .1,0)$) ;
  \draw[thick, blue, -] ($(ci2) + (-.1,0)$) --
                  ($(ci2) + (.1,0)$) ;
  \draw[thick, blue, -, dotted] (ci1.center) -- (ci2.center) ;
  \node [xshift=-.2cm] at ($(ci1)!.5!(ci2)$) { $\crecv$ };

  \node at ($(stage111.north west) + (-.2,0)$) (cc1) {} ;
  \node at ($(stage112.south west) + (-.2,0)$) (cc2) {} ;
  \draw[thick, blue, -] ($(cc1) + (-.1,0)$) --
                  ($(cc1) + ( .1,0)$) ;
  \draw[thick, blue, -] ($(cc2) + (-.1,0)$) --
                  ($(cc2) + (.1,0)$) ;
  \draw[thick, blue, -, dotted] (cc1.center) -- (cc2.center) ;
  \node [xshift=-.2cm] at ($(cc1)!.5!(cc2)$) { $\ccc_\Rp$ };

  \end{tikzpicture}
  \end{center}
  \vskip-0.2cm
  \caption{A ring protocol, and an execution trace.}
  \label{fig:ring-proto}
  \vspace{-8mm}
\end{wrapfigure}
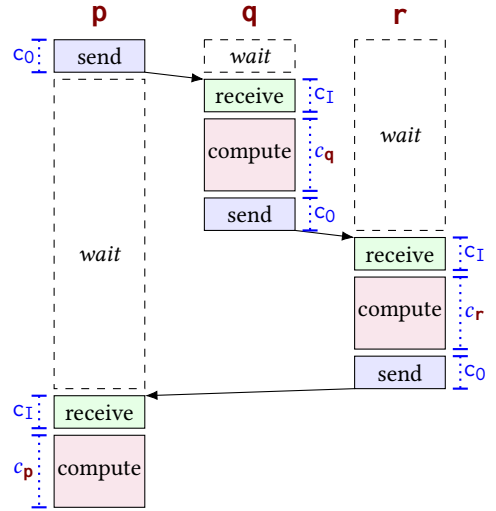

This section introduces cost-aware multiparty session protocols (\CAMP{}) which is
an extension of multiparty session types (\MPST{})
\cite{CDPY2015,DY13,Honda2008Multiparty} where the payload types ($\tau$) are
types that have been extended with \emph{size annotations}, adapted from the
literature on \emph{sized types}~\cite{DBLP:conf/popl/HughesPS96,
DBLP:journals/pacmpl/AvanziniL17}, and interactions have been extended with
\emph{cost annotations} ($\ccc$) which represent the local execution time at the
receiver:

\vskip.2cm
{\small
$
\begin{array}{@{}l@{}}
  \tau \coloneqq \mathsf{int}^i \mid \cdots \mid D^i \; \vec{\tau}
  \\
  \ccc \Coloneqq i \mid k \mid \ccc + \ccc \mid \max (\ccc, \ccc) \mid k \times \ccc
\end{array}
$
}
\vskip.2cm

\noindent
Our types are base types (integer, boolean, \ldots) annotated with a size $i$,
type constructors $D$ applied to a sequence of sized types $\vec{\tau}$,
annotated with a size $i$. Cost expressions $\ccc$
are either sizes $i$, constants
$k$, the addition of two costs, the maximum of two costs, or a constant
multiplied by a cost. A size $i$ an arithmetic expression that may contain
constants ($k$) or size variables ($n$, $m$, \ldots).
Definitions of global and local types are based on the most
commonly used \MPST{} in the literature
\cite{CDPY2015}. The syntax of global ($G$)
and local ($L$) types in \MPST{} is given below:
\[
   \begin{array}{@{}l@{\;}l@{}}
      G  & \Coloneqq \role{p} \gMsg \role{q} \gTy{\tau \hasCost \ccc} . G
           \mid      \role{p} \gMsg \role{q} : \{l_i  .  G_i\}_{i\in I}
           \mid \gFix X .  G
           \mid X
           \mid \gEnd\\[1mm]

      L  & \Coloneqq \role{p} \tSend \tau . L
           \mid      \role{p} \tSelect \{l_i . L_i\}_{i \in I}
           \mid      \role{p} \tRecv \tau \hasCost \ccc . L
           \mid      \role{p} \tBranch \{l_i . L_i\}_{i \in I}
           \mid      \lFix X . L
           \mid X
           \mid \lEnd
 \end{array}
\]
We start with a set of \emph{roles}, $\role{p}$, $\role{q}$, \ldots, and a set
of \emph{labels}, $l_1$, $l_2$, \ldots.  These are considered as natural
numbers: roles are \emph{participant} identifiers, e.g.\ thread or process ids;
and labels are tags that differentiate branches in the data/control flow.

Global type $\role{p} \gMsg \role{q} \gTy{\tau \hasCost \ccc} . G$ denotes
\emph{data} interactions from role $\role{p}$ to role $\role{q}$ with value of
type $\tau$ and local computation cost $\ccc$;
\emph{Branching} is represented by $\role{p} \gMsg \role{q} : \{ l_i .
G_i\}_{i\in I}$ with actions $l_i$ from $\role{p}$ to $\role{q}$. $\gEnd$
represents a \emph{termination} of the protocol. $\gFix X .  G$ represents a
\emph{recursion}, which is \emph{equivalent} to $[\gFix X. \; G / X]
G$.  We assume recursive types are guarded \cite{pierce2002types}.

Each role in $G$ represents a different participant in a parallel
process. \emph{Local types} represent the communication actions
performed by each role.  The \emph{send} type $\role{p} \lSend{\tau} . L$
expresses sending of a value of type $\tau$ to role $\role{p}$ followed by
interactions specified by $L$.  The \emph{receive} type $\role{p}\tRecv \tau \hasCost \ccc . L$ receives a value of type
$\tau$ from role $\role{p}$ with local computation
cost $\ccc$.  The \emph{selection} type
represents the transmission to role $\role{p}$ of label $l_i$ chosen
in the set of labels ($i\in I$) followed by $L_i$. The {\em branching}
type is its dual.  The rest are the same as $G$.
$\getRoles(G)/\getRoles(L)$ denote the set of roles that occur in
$G$/$L$.

\begin{myremark}\label{rem:syntax}\rm
  Global types which combine
  label and data messages are also used in the literature. They can be encoded as global types in this paper
  by using singleton labels (see \cite[p.178]{DY13}). E.g.
$\role{p} \gMsg \role{q} : \{l_i(\tau_i)  .  G_i\}_{i\in I}$
is encoded as
$\role{p} \gMsg \role{q} : \{l_i  .  G_i'\}_{i\in I}$
and $G_i' = \role{p} \gMsg \role{q} \gTy{\tau_i} . G_i$.
It is possible to account for the differences in cost by setting the cost of
sending/receiving labels appropriately, e.g.\ removing the cost of sending
labels, and slightly increasing the size of the data messages, to account for
the fact that they must be sent alongside a label.
\end{myremark}

\myparagraph{End Point Projection}
The local type $L$ of a participant $\role{p}$ in a global type $G$
can be obtained by the \emph{end point projection} (EPP) of $G$ onto
$\role{p}$, denoted by $G$ as $G \project \role{p}$. The local type
gives a local view of a global protocol onto each participant.
Our definition of EPP follows the standard
projection rules in \cite{DY13,Demangeon2012Nested}.
The projection uses the
\emph{full merging} operator \cite{DY13,Demangeon2012Nested}, which
allows more well-formed global types than the original projection
rules \cite{Honda2008Multiparty}.

\begin{definition}[Projection and Merging]
  \label{def:session-project-merging}
The \emph{end point projection} (EPP) of $G$ onto
$\role{p}$, denoted by $G$ as $G \project \role{p}$,
is the partial function defined below, together with
the merging of local types $L_i$:\\[1mm]
{\bf\emph{Projection:}}
\[
\small
      \begin{array}{@{}l@{}l@{}}
        \begin{array}[t]{@{}l@{}}
          (\role{q} \gMsg \role{r} \gTy{\tau \hasCost \ccc } . \; G) \project \role{p}
          \\ \quad =
          \left\{
          \begin{array}{@{}l@{\;\;}l@{}}
            \role{r} \lSend{\tau} . \; (G \project \role{p})
            & \text{if} \; \role{p} = \role{q} \neq \role{r}
            \\
            \role{q} \lRecv{\tau} \hasCost \ccc . \; (G \project \role{p})
            & \text{if} \; \role{p} = \role{r} \neq \role{q}
            \\
            G \project \role{p}
            & \text{otherwise}
          \end{array} \right.
        \end{array}
&
        \quad
        \begin{array}[t]{@{}l@{}}
          (\role{q} \gMsg \role{r} : \{l_i .  G_i\}_{i\in I}) \project \role{p}
          \\ \quad =
          \left\{
          \begin{array}{@{}l l@{}}
            \role{r} \lChoice \{l_i .  G_i \project \role{p} \}
            & \text{if} \; \role{p} = \role{q} \neq \role{r}
            \\
            \role{q} \lBranch \{l_i .  G_i \project \role{p} \}
            & \text{if} \; \role{p} = \role{r} \neq \role{q}
            \\
            \sqcap_{i\in I} (G_i \project \role{p})
            & \text{otherwise}
          \end{array}\right.\\
        \end{array}\\
        \\[-2mm]
        (\gFix X .  G) \project \role{p}\\
        \quad\quad= \left\{
        \begin{array}{ll}
          \gFix X . ( G \project \role{p})
          & \text{if} \; G \project \role{p} \neq X', \; \forall X'
          \\
          \gEnd
          & \text{otherwise}
        \end{array}\right.
&\quad\quad
        \begin{array}{@{}l@{}}
          (X) \project \role{p} =  X
          \hspace{1cm}
          \gEnd \project \role{p} =  \gEnd
        \end{array}
      \end{array}
    \]
    {\bf\emph{Merging:}}
    \[
\small
      \begin{array}{@{}l@{}}
        \role{p} \lBranch \{ l_i .  L _i\}_{i \in I} \sqcap \role{p} \lBranch \{l_j .  L _j'\}_{j \in J}
         =
        \role{p} \lBranch \{l_k .  L _k \sqcap  L _k' \}_{k \in I \cap J} \cup
        \{ l_l .  L _l \}_{l \in I \setminus J}\cup 
        \{ l_m .  L _m \}_{m \in J \setminus I} 
        \\[.2cm]
        \lFix X .  L _1 \sqcap \lFix X .  L _2
        = \lFix X . ( L _1 \sqcap  L _2)
        \hspace{1cm}
        L  \sqcap  L =  L
      \end{array}
    \]
\end{definition}
The first line of the projection rule defines a case where the sender and
receiver are the same \cite{DYBH12}.  The global type projection onto a role is
not necessarily defined. Particularly, projecting $\role{q}\gMsg \role{r} :
\{l_i .  G_i\}$ onto $\role{p}$, with $\role{r} \neq \role{p}$ and $\role{q}\neq
\role{p}$, is only defined if the
projection of all $G_i$ onto $\role{p}$ can be \emph{merged}
(Def.\ \ref{def:session-project-merging}).
Two local types can be
merged only if they are the same, or if they branch on the same role, and
their continuations can be merged.
For example, $\role{p}$'s local type of the global type
$\role{p} \gMsg \role{q} \gTy{\tau \hasCost \ccc} . \gEnd$
is $\role{q} \tSend \tau . \lEnd$, while $\role{q}$'s is
$\role{p} \tRecv \tau \hasCost \ccc . \lEnd$.
As a more complex example, $\role{r}$'s local type of the branching
global type:
\[
\gFix X . \role{p} \gMsg \role{q}
\left
\{
\begin{array}{@{}l@{}}
             l_1  .  \role{q} \gMsg \role{r}
             : l_2  . \role{p} \gMsg \role{r}
             : l_3 .  X, \
             l_4 . \role{q} \gMsg \role{r}
             : l_5 . \role{p} \gMsg \role{r}
             : l_6 . \gEnd
           \end{array}\right
\}
\]
is
$\gFix X . \role{q}\lBranch
 \{l_2 . \role{p}\lBranch \{l_3 . X\}, l_5 . \role{p}\lBranch\{l_6 . \lEnd\}\}$.

We say that a global type is \emph{well formed}, if its projection on all its
roles is defined. We denote:
$\WF(G)
  = \forall \role{p} \in \getRoles(G)
  , \exists  L ,  G \project \role{p} =  L$.

\begin{definition}[Label Broadcasting] We define a macro to represent the
broadcasting of a label to multiple
participants in a choice. We write $\role{p} \gMsg \{ \role{q}_j \}_{j \in [1, n]}
\bm{:} \{ l_i . G_i \}_{i \in I}$ as a synonym to $\role{p} \to \role{q}_1 \{
l_i. \role{p} \to \role{q}_2 \{ l_i. \ldots . \role{p} \to \role{q}_n \{ l_i . G_i \} \ldots
\} \}_{i \in I}$.  Similarly, for local types, $\{ \role{q} \}_{j \in [1,n]}
\lChoice \{ l_i . L_i \}_{i \in I}$ expands to $\role{q} \lChoice \{ l_i
. \role{q}_2 \lChoice \{l_i . \ldots \role{q}_n \lChoice \{ l_i . L_i \} \ldots \}
\}_{i \in I}$.
\end{definition}
\noindent It is straightforward to derive: $(\role{p}\to\{\role{q}_j\}_{j \in
J} \bm{:} \{l_i . G_i\}_{i \in I}) \project \role{r} = \{ \role{q} \}_{j \in J}
\lChoice \{ l_i . G_i \project \role{r} \}_{i\in I}$, if $\role{p} = \role{r}$,
and
$(\role{p}\to\{ \role{q}_j\}_{j \in J} \bm{:} \{l_i . G_i\}_{i \in I}) \project
\role{r} = \role{q}_j \lBranch \{ l_i . G_i \project \role{r} \}_{i\in I}$, if
$\role{r} = \role{q}_j$ for some $j \in J$.

\subsection{Labelled Transition System of Global Types}
\label{sec:GTLTS}
We introduce the labelled transition system (LTS) of global types to associate
protocol execution costs with cost annotations.  Our semantics is based on the
LTS for global and local types in \Citet{DY13} that define their asynchronous
operational semantics, and prove their sound and complete correspondence.

We designate the \emph{observables} ($\act, \act', \ldots$) to be the send, receive,
branch and select actions that trigger a transition, and
an \emph{internal} transition at a role, which represents the cost $\ccc$ a
role spends performing computation at the receiver (denoted by $\role{p} \tRun
\ccc$).  The syntax of the observables is:
\[ \act \Coloneqq \role{p}\role{q}\tSend \tau \mid \role{p}\role{q}\tRecv \tau
\mid
\role{p}\role{q}\tSelect \mid \role{p}\role{q}\tBranch \mid
\role{p} \tRun \ccc
\]
The $\role{p} \tRun \ccc$ does not affect the communication structure of the
protocol, similar to the silent actions of common
process calculi. We say that the \emph{subject} of an action $\act$ is the
role in charge of performing it: $\role{p} = \subj(\role{p}\role{q}\tSend\tau) =
\subj(\role{q}\role{p}\tRecv\tau) = \subj(\role{p}\role{q}\tSelect) =
\subj(\role{q}\role{p}\tBranch ) = \subj(\role{p} \tRun \ccc)$.

Following
\cite{DY13}, we extend the grammar of $G$ to represent the intermediate steps in
the execution with the construct $\role{p} \gMsgt \role{q} \gTy{\tau \hasCost \ccc} . G$ to
represent the fact that $\role{p}$ has sent the message of type
$\tau$ but $\role{q}$ has not received it yet,
and $\role{p} \gEval (\tau \hasCost \ccc).G$ to represent that $\role{p}$ is performing a
computation of type $\tau$ and cost $\ccc$.
For the branching
we use
$\role{p} \gMsgt \role{q} \; j \; \{l_i. \;
G_i\}_{i\in I}$ to represent the fact that $\role{p}$ has sent label $l_j$ to
$\role{q}$. Then the LTS for global types is defined as below. The
main rules different from \cite{DY13} are
[GR1a,GR2a,GR2b] which consider the execution cost. When we send a
message or a label, the type becomes the received
mode $\role{p} \gMsgt \role{q}$ (e.g. [GR1a]) and then it
asynchronously receives the corresponding message (e.g. [GR2a]).
We also observe the actions under the prefix if the participants
are unrelated (e.g. [GR4a]).

\begin{definition}[LTS for Global Types]\label{def:lts-global}
  The relation $G \xrightarrow{\act} G'$ is defined as follows:

  \scalebox{.9}{%
  \begin{mathpar}\mprset{sep=0.5em}
     \rulename{GR1a} \;
       \role{p} \gMsg  \role{q} \gTy{\tau \hasCost \ccc} . G
       \xrightarrow{\role{p}\role{q}\tSend \tau}
       \role{p} \gMsgt \role{q} \gTy{\tau \hasCost \ccc} . G
     \and
     \rulename{GR1b} \;
         \role{p} \gMsg  \role{q} \{l_{i}. G_i\}_{i \in I}
         \xrightarrow{\role{p}\role{q}\tSelect l_{j} }
         \role{p} \gMsgt \role{q} \; : j \; \{l_{i}. G_i\}_{i \in I} \quad (j \in I)
     \and
     \rulename{GR2a} \;
         \role{p} \gMsgt \role{q} \gTy{\tau \hasCost \ccc} . G
         \xrightarrow{\role{p}\role{q}\tRecv \tau}
         \role{q} \gEval (\tau \hasCost \ccc) . G
     \and
     \rulename{GR2b} \;
         \role{q} \gEval (\tau \hasCost \ccc) . G
         \xrightarrow{\role{q}\tRun \ccc}
         G
     \and
     \rulename{GR2c} \;
         \role{p} \gMsgt \role{q} \; : j \; \{l_{i}. G_i\}_{i \in I}
         \xrightarrow{\role{p}\role{q}\tBranch l_j}
         G_j
     \and
     \rulename{GR4a}
     \inferrule{%
       G \xrightarrow{\act} G'
       \\
       \role{p}, \role{q} \not\in \subj(\act)
     }{%
       \role{p} \gMsg \role{q} \gTy{\tau \hasCost \ccc} . G
       \xrightarrow{\act}
       \role{p} \gMsg \role{q} \gTy{\tau \hasCost \ccc} . G'
     }
     \and
     \rulename{GR4b}
     \inferrule{%
       \forall i \in I
       \\
       G_i \xrightarrow{\act} G_i'
       \\
       \role{p}, \role{q} \not\in \subj(\act)
     }{%
       \role{p} \gMsg \role{q} \{l_{i}. \; G_i\}_{i \in I}
       \xrightarrow{\act}
       \role{p} \gMsg \role{q} \{l_{i}. \; G_i'\}_{i \in I}
     }
     \and
     \rulename{GR5a}
     \inferrule{%
       G \xrightarrow{\act} G'
       \\
       \role{q} \not\in \subj(\act)
     }{%
       \role{p} \gMsgt \role{q} \gTy{\tau \hasCost \ccc} . G
       \xrightarrow{\act}
       \role{p} \gMsgt \role{q} \gTy{\tau \hasCost \ccc} . G'
     }
\and
\hspace*{-4mm}
     \rulename{GR5b}
     \inferrule{%
       G \xrightarrow{\act} G'
       \\
       \role{p} \not\in \subj(\act)
     }{%
       \role{p} \gEval (\tau \hasCost \ccc) . G
       \xrightarrow{\act}
       \role{p} \gEval (\tau \hasCost \ccc) . G'
     }
\quad
     \rulename{GR3}
     \inferrule{%
       G[\gFix X. G/X] \xrightarrow{\act} G'
     }{%
       \gFix X. G \xrightarrow{\act} G'
     }
     \quad
     \rulename{GR5c}
     \inferrule{%
       G_j \xrightarrow{\act} G_j'
       \\
       \role{q} \not\in \subj(\act)
       \\
       \forall i \in I \setminus j, G_i = G_i'
     }{%
       \role{p} \gMsgt \role{q} \; j \; \{l_{i}. \; G_i \}_{i \in I}
       \xrightarrow{\act}
       \role{p} \gMsgt \role{q} \; j \; \{l_{i}. \; G_i' \}_{i \in I}
     }
  \end{mathpar}
  }
\end{definition}

\subsection{Labelled Transition System of Local Types}
\label{sec:LTLTS}
The labelled transition system (LTS) of local types are given for configurations
($C$) which map each participant to its local type and a set of FIFO queues
($Q$) where each represents a queue from sender $\role{p}$ to receiver
$\role{q}$.  We also extend the syntax $(\tau \hasCost
\ccc).L$ to represent the intermediate state where the receiver executes a local
computation with cost $\ccc$.
In the definition below, [LR2,LR3] formalise the observability of the local
computation cost $\ccc$ when receiving the value. Other rules are the
standard FIFO enqueue and dequeue rules.

\begin{definition}[LTS for Local Types]\label{def:lts_local}%
The relation
$\langle C, Q \rangle \xrightarrow{\ell} \langle C', Q' \rangle$
where $C = [\role{p} \mapsto L_i]_{i\in I}$ and
$Q = [\role{p}\role{q} \mapsto w]_{i\in I, j \in I}$
is defined as follows:
\begin{center}
\scalebox{.9}{%
$
\begin{array}{l r l}
\text{[LR1]}
&
\langle C[\role{p} \mapsto \role{q}\lSend{\tau}. \; L],
Q[\role{p}\role{q}\mapsto w] \rangle
&
\xrightarrow{\role{p}\role{q}\lSend \tau}
\langle C[\role{p} \mapsto L], Q[\role{p}\role{q}\mapsto a \cdot
w] \rangle
\\
\text{[LR2]}
&
\langle C[\role{p} \mapsto \role{q}\lRecv{\tau\hasCost \ccc}. \; L],
Q[\role{q}\role{p}\mapsto w\cdot \tau] \rangle
&
\xrightarrow{\role{q}\role{p}\lRecv \tau}
\langle C[\role{p} \mapsto (\tau \hasCost \ccc).L],
Q[\role{q}\role{p}\mapsto w] \rangle
\\
\text{[LR3]}
&
\langle C[\role{p} \mapsto (\tau \hasCost \ccc). L], Q \rangle
&
\xrightarrow{\role{p} \gEval \ccc}\
\langle C[\role{p} \mapsto L], Q \rangle
\\
\text{[LR4]}
&
\langle
C[\role{p} \mapsto \role{q}\lChoice\{l_{i}. L_i\}_{i\in I}],
Q[\role{p}\role{q}\mapsto w]
\rangle
&
\xrightarrow{\role{p}\role{q}\lChoice l_i}
\langle C[\role{p} \mapsto L_i], Q[\role{p}\role{q}\mapsto l_i \cdot w] \rangle
\\
\text{[LR5]}
&
\langle
C[\role{p} \mapsto \role{q}\lBranch\{l_{i}. L_i\}_{i\in I}],
Q[\role{q}\role{p}\mapsto w\cdot l_i] \rangle
&
\xrightarrow{\role{q}\role{p}\lBranch l_i}
\langle C[\role{p} \mapsto L_i], Q[\role{q}\role{p}\mapsto w] \rangle
\end{array}
$
}
\end{center}
\end{definition}
The following theorem shows the global type semantics is exactly
matched with local asynchronous interactions between participants.

\begin{restatable}{theorem}{ltsSoundComplete}[Soundness and Completeness]\label{thm:lts-sound-complete}
  Let $G$ be a global type with $\roles = \getRoles(G)$ and let
  $C = [\role{p} \mapsto G \project \role{p}]_{\role{p} \in \roles}$.
  Then $G \approx (C;[\role{p}\role{q} \mapsto \epsilon]_{\role{p},\role{q} \in \roles})$.
\end{restatable}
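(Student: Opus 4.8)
The equivalence $\approx$ is a (weak) bisimulation between the global-type LTS (Def.~\ref{def:lts-global}) and the configuration LTS (Def.~\ref{def:lts_local}), so the goal is to exhibit such a bisimulation containing the pair $(G,(C;Q_0))$ with $Q_0=[\role{p}\role{q}\mapsto\epsilon]$. A single global move produces a \emph{runtime} global type (with the in-transit forms $\role{p}\gMsgt\role{q}\gTy{\tau\hasCost\ccc}.G$, $\role{p}\gEval(\tau\hasCost\ccc).G$, $\role{p}\gMsgt\role{q}\,j\,\{l_i.G_i\}_{i\in I}$) and a configuration with non-empty queues, so the claim must first be strengthened to all \emph{reachable} runtime global types. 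I would define an \emph{extended end-point projection} mapping a runtime $G$ to a configuration: on the pure fragment it is the componentwise $G\project\role{p}$ of Def.~\ref{def:session-project-merging}; it sends $\role{p}\gMsgt\role{q}\gTy{\tau\hasCost\ccc}.G$ to the extended projection of $G$ with $\tau$ appended to the tail of the queue $\role{p}\role{q}$ and $\role{q}$'s local type prefixed by the (to-be-consumed) $\role{p}\lRecv{\tau\hasCost\ccc}$; it sends $\role{p}\gEval(\tau\hasCost\ccc).G$ to the extended projection of $G$ with $\role{p}$'s local type prefixed by the runtime computation form $(\tau\hasCost\ccc).(-)$; and analogously for the in-transit branching form (appending $l_j$ to the queue, giving the receiver $\role{q}\lBranch\{l_i.\dots\}$). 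Let $\mathcal{R}$ be the set of pairs $(G,(C;Q))$ such that $G$ is reachable from a well-formed closed global type and $(C;Q)$ is its extended projection; the theorem is then the instance of ``$\mathcal{R}$ is a bisimulation'' obtained from $G$ itself, since its extended projection is exactly $(C;Q_0)$.

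\textbf{Supporting lemmas and the two directions.}
A few auxiliary facts come first: (i) projection commutes with unfolding, so $(\gFix X.G)\project\role{p}$ and $(G[\gFix X.G/X])\project\role{p}$ coincide and rule [GR3] is transparent; (ii) a \emph{locality} lemma: when $\role{p},\role{q}\notin\subj(\act)$, the components of $(C;Q)$ touched by a transition labelled $\act$ are disjoint from those introduced by the prefix $\role{p}\gMsg\role{q}\gTy{\tau\hasCost\ccc}.(-)$ (and similarly for the in-transit prefixes), so the context rules [GR4a], [GR4b], [GR5a], [GR5b], [GR5c] are matched by ``the same'' configuration move on a disjoint sub-configuration; (iii) inversion lemmas for both LTSs. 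For \emph{soundness} --- if $G\xrightarrow{\act}G'$ then $(C;Q)\xrightarrow{\ell}(C';Q')$ with $\ell$ the corresponding label and $(G',(C';Q'))\in\mathcal{R}$ --- I would induct on the derivation of $G\xrightarrow{\act}G'$: the axioms [GR1a], [GR1b], [GR2a], [GR2b], [GR2c] are matched one-for-one by [LR1], [LR4], [LR2], [LR3], [LR5], where the new cost axioms [GR2a]/[GR2b] line up with the runtime local prefix $(\tau\hasCost\ccc).L$ precisely because the extended projection was designed that way; the context rules use the induction hypothesis plus (i)/(ii) and $\WF(G)$ to keep the residual inside $\mathcal{R}$. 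For \emph{completeness} I would induct on the configuration derivation, using the shape of the extended projection to reconstruct the unique enabling global move; the point requiring care is that [LR2]/[LR5] fire on whatever sits at the head of a queue, so one must check that the extended projection always places the head of channel $\role{p}\role{q}$ at the \emph{outermost} unreceived in-transit prefix for that channel, which is what makes the matching [GR2a]/[GR2c] derivable (through [GR5a]/[GR5c]).

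\textbf{Main obstacle.}
The delicate case is branching against \emph{full merging}. Projecting $\role{p}\gMsg\role{q}\{l_i.G_i\}_{i\in I}$ onto a bystander $\role{r}\notin\{\role{p},\role{q}\}$ gives the merge $\sqcap_{i\in I}(G_i\project\role{r})$, whereas after the choice is resolved to $l_j$ the runtime type is $G_j$, whose projection onto $\role{r}$ is $G_j\project\role{r}$ --- typically a strict refinement of the merged type still held in $C(\role{r})$, since $\role{r}$ has not (yet) learned which branch fired. Hence the extended projection cannot be literally a function into configurations for bystanders of a choice, and I see two routes: either enrich the runtime syntax so a resolved branch records $l_j$ and the extended projection threads this refinement into the bystanders' types (as in \Citet{DY13}), or keep $\mathcal{R}$ coarser by relating $G_j$ with any configuration whose $C(\role{r})$ is mergeable with $G_j\project\role{r}$, and prove a ``merge simulates each summand'' lemma --- $L_1\sqcap L_2$ weakly simulates both $L_1$ and $L_2$ --- that is itself closed under [LR1]--[LR5] and under the context rules [GR4b]/[GR5c]. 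Establishing that this merge-refinement invariant is preserved by all transition rules is where I expect the bulk of the work, and it is exactly the place where full merging (rather than plain syntactic merging) is needed. By contrast the cost layer is essentially inert: the cost expression $\ccc$ is carried passively through projection, and the only genuinely new transitions --- [GR2b] and [LR3] --- form a matched pair of internal computation steps, so the whole development is a conservative extension of the cost-free correspondence of \Citet{DY13}.
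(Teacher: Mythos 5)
Your plan is essentially the paper's own argument: the paper proves this theorem only as a short sketch, deferring to the soundness/completeness proof of \Citet{DY13} and observing that the send/receive versus select/branch split and the cost actions ([GR2b]/[LR3], which you correctly identify as an inert matched pair) do not complicate that adaptation. Your fleshed-out bisimulation via an extended projection of runtime global types, including the merge-refinement subtlety for bystanders of a choice, is exactly the DY13-style development the paper is appealing to, so the proposal is correct and follows the same route.
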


\begin{proof}(Sketch)
The proof of soundness and completeness is a straightforward adaptation of that
in \cite{DY13}. The distinction between send/receive and select/branch actions
is straightforward, e.g.\ actions GR1a and GR1b are special cases of rule GR1 in
\cite{DY13} (see Remark ~\ref{rem:syntax}). The addition of cost actions to
local types is does not complicate the proof, since it only happens after
communication has taken place, which is ensured by being local to each role, and
the local context.
\end{proof}

\begin{definition}[Deadlock-freedom]
\label{def:deadlock}
We call $\langle C_0, Q_0 \rangle$ \emph{deadlock-free} if
for all $C$ and $Q$ such that
$\langle C_0, Q_0 \rangle\xrightarrow{\vec{\ell}}\langle C, Q
\rangle$, either (1) $\forall \role{p} \in \dom{C}.\ C(\role{p})=\lEnd$
and $Q=\emptyset$; or (2)
$\langle C, Q \rangle\xrightarrow{\ell}\langle C', Q' \rangle$
for some $C'$ and $Q'$. We call global type $G$
\emph{deadlock-free}
if $\langle C_0, \emptyset \rangle$ such that
$C_0 = [\role{p} \mapsto G \project \role{p}]_{\role{p} \in \roles}$
with $\roles = \getRoles(G)$
is deadlock-free.
\end{definition}

Note that the definition of deadlock-freedom is not affected by cost
annotations. Hence by Remark~\ref{rem:syntax}, we can directly apply the result
in \cite{DY13} to obtain:

\begin{theorem}[Deadlock-freedom]{\rm (\cite{DY13})}\label{thm:deadlockfree}
$\WF(G)$ is deadlock-free.
\end{theorem}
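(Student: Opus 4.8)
The plan is to reduce deadlock-freedom of $\WF(G)$ to the progress theorem of \cite{DY13} by exhibiting a cost-erasing simulation between the LTS of Def.~\ref{def:lts_local} and the asynchronous local-type LTS of \cite{DY13}. First I would define an erasure $\lfloor\cdot\rfloor$ on local types, configurations and queues that (i) drops every cost annotation $\hasCost\ccc$, (ii) rewrites each intermediate computation state $(\tau\hasCost\ccc).L$ to $L$, and (iii) leaves queue contents unchanged, since they already carry only payload types $\tau$ or labels $l_i$, exactly as in \cite{DY13}. Under this erasure, together with the singleton-label encoding of Remark~\ref{rem:syntax}, our global and local type syntax, the projection/merging of Def.~\ref{def:session-project-merging}, and the predicate $\WF$ all coincide with their counterparts in \cite{DY13}; in particular $\WF(G)$ holds iff $\lfloor G\rfloor$ is projectable in the sense of \cite{DY13}, and $\lfloor G\project\role{p}\rfloor = \lfloor G\rfloor\project\role{p}$ for every $\role{p}\in\getRoles(G)$.

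Next I would establish the operational correspondence at the level of configurations. In one direction, any transition $\langle C,Q\rangle\xrightarrow{\ell}\langle C',Q'\rangle$ of Def.~\ref{def:lts_local} that is not an [LR3] step maps to a transition $\langle\lfloor C\rfloor,Q\rangle\to\langle\lfloor C'\rfloor,Q'\rangle$ of the \cite{DY13} LTS with the erased label, while an [LR3] step erases to the identity on $\langle\lfloor C\rfloor,Q\rangle$. In the other direction, every transition of the \cite{DY13} LTS out of $\langle\lfloor C\rfloor,Q\rangle$ lifts back: output/enqueue and select/branch steps lift directly, and a [LR2] receive leaves the receiver in an intermediate state $(\tau\hasCost\ccc).L$ from which [LR3] is \emph{unconditionally} enabled. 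The crucial observation — and I expect this to be the only subtle point — is that the new intermediate constructs can never be the cause of a stuck configuration, since [LR3] (and symmetrically [GR2b] on the global side) carries no side condition; hence from any $\langle C,Q\rangle$ one can always reach, via [LR3] steps alone, a configuration $\langle C'',Q\rangle$ with $C''$ free of pending-computation states and $\lfloor C''\rfloor=\lfloor C\rfloor$.

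The conclusion then goes as follows. Take $G$ with $\WF(G)$, let $\roles=\getRoles(G)$ and $C_0=[\role{p}\mapsto G\project\role{p}]_{\role{p}\in\roles}$ as in Def.~\ref{def:deadlock}, and suppose $\langle C_0,\emptyset\rangle\xrightarrow{\vec{\ell}}\langle C,Q\rangle$. Discharge any pending computation states in $C$ by [LR3] steps to obtain $\langle C'',Q\rangle$ with $\lfloor C''\rfloor=\lfloor C\rfloor$; by the forward direction this yields $\langle\lfloor C_0\rfloor,\emptyset\rangle\to^{*}\langle\lfloor C\rfloor,Q\rangle$ in the \cite{DY13} system, and $\lfloor C_0\rfloor=[\role{p}\mapsto\lfloor G\rfloor\project\role{p}]_{\role{p}\in\roles}$ is the canonical initial configuration of the projectable global type $\lfloor G\rfloor$. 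By the progress (deadlock-freedom) theorem of \cite{DY13}, either (a) all local types in $\lfloor C\rfloor$ are $\lEnd$ and $Q=\emptyset$, whence $C''$ consists only of $\lEnd$ entries (which have no [LR3] redex, so in fact $C=C''$) and $Q=\emptyset$, giving case (1) of Def.~\ref{def:deadlock}; or (b) $\langle\lfloor C\rfloor,Q\rangle$ has an outgoing transition, which by the backward direction lifts to a transition of $\langle C'',Q\rangle$, hence, composing with the [LR3] steps relating $C$ and $C''$, to a transition of $\langle C,Q\rangle$, giving case (2). Either way $G$ is deadlock-free.

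The main obstacle is the bookkeeping in the simulation lemma: stating it so that the intermediate constructs $\role{p}\gMsgt\role{q}$, $\role{p}\gEval(\tau\hasCost\ccc).G$ and $(\tau\hasCost\ccc).L$ are handled uniformly, and checking that queue shapes and the enqueue/dequeue discipline match those of \cite{DY13} after erasure (including the data/label unification of Remark~\ref{rem:syntax}). Once that lemma is in place the progress argument is immediate; alternatively, one can run the entire argument on global types, using Theorem~\ref{thm:lts-sound-complete} to move between the global and local views.
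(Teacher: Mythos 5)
Your proposal is correct and is essentially the paper's argument: the paper simply observes that deadlock-freedom is unaffected by cost annotations (and, via Remark~\ref{rem:syntax}, by the label/payload encoding) and then invokes the result of \cite{DY13} directly, which is exactly the cost-erasing reduction you spell out, with your simulation lemma and the unconditional enabledness of [LR3] supplying the details the paper leaves implicit. (One tiny slip: in your case (a) it need not hold that $C=C''$, since $C$ may still contain a state $(\tau\hasCost\ccc).\lEnd$; but then [LR3] itself gives an outgoing transition, so case (2) of Def.~\ref{def:deadlock} applies and the conclusion is unaffected.)
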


\section{Cost for Multiparty Session Protocols (1): Bounded Recursion}
\label{sec:cost}

This section presents the cost analysis for protocols with bounded recursions.
We first define the cost of a trace, as the total cost accumulated by each
participant at the end of the execution of a trace. Next we introduce the cost
model using global types and show that it provides an upper bound
of the cost of any possible trace for each participant.

\subsection{Cost of Local Traces}
\label{subsec:trace}
We first explain several assumptions for giving a calculation of
cost.
Our first assumption in theory is that
participants do not share resources with other participants,
i.e.\ they can run on
independent CPUs, with no source of contention such as memory or
shared cache.
\begin{wrapfigure}{l}{.35\columnwidth}
  \begin{center}
  \begin{tikzpicture}[baseline={([xshift=-2cm, yshift=-4cm]current bounding box.north)}]
    \foreach[count=\i] \a in {\Rp,\Rr,\Rq}{
      \node[draw, circle] (nd\i) at (\i*360/3: 1.1cm) {\small $\a$ };
    }
    \node at ($(nd1)+(.8,.3)$) {$\csend$ };
    \node at ($(nd2)+(-.7,0.3)$) {$\csend$ };
    \node at ($(nd3)+(0,-.8)$) {$\csend$ };

    \node at ($(nd1)+(-.8,-.3)$) {$\crecv$ };
    \node at ($(nd2)+(.7,-.2)$) {$\crecv$ };
    \node at ($(nd3)+(0,.8)$) {$\crecv$ };

    \node at ($(nd1)+(0,.55)$) { $\ccc_\Rp$ };
    \node at ($(nd2)+(0,-.55)$) { $\ccc_\Rr$ };
    \node at ($(nd3)+(.55,0)$) { $\ccc_\Rq$ };
    \path[draw, ->, >=stealth] (nd3) edge[bend left=35] (nd2);
    \path[draw, ->, >=stealth] (nd2) edge[bend left=35] (nd1);
    \path[draw, ->, >=stealth] (nd1) edge[bend left=35] (nd3);
  \end{tikzpicture}
  \end{center}
  \vskip-1cm
  \caption{A cost-annotated ring protocol.}
  \label{fig:ring-cost}
\end{wrapfigure}
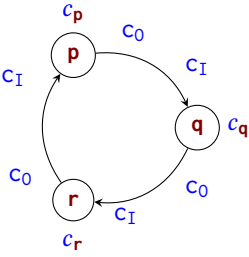

The cost of a trace is the total execution time
taken by each participant to run
the protocol from start to end.
We compute this total execution cost by
tracking the dependency from the input to the output (\emph{IO-dependency}),
and by associating each action in a
trace (Def. \ref{def:lts-global}) with an execution cost. Assuming that every
participant has access to their own set of resources (including CPU) implies
that every action in the trace (Def.\ \ref{def:lts-global}) will be
triggered as early as possible, e.g.\ send actions will not be arbitrarily delayed by other
actions. This also implies that any pair of actions will happen in parallel if
swapped freely according to the semantics.

We explain these assumptions with simple examples.
Consider the following trace:

\vskip.1cm
{\small
\begin{center}
$
\Rp \gMsg \Rr \gTy{\tau \hasCost \ccc}. \Rq \gMsg \Rr \gTy{\tau' \hasCost \ccc'}
\xrightarrow{{\Rp\Rr \tSend \tau \cdot \; \Rq\Rr \tSend \tau'}}
\Rp \gMsgt \Rr \gTy{\tau \hasCost \ccc}. \Rq \gMsgt \Rr \gTy{\tau' \hasCost \ccc'}
$
\end{center}
}
\vskip.1cm

\noindent
According to Def.\ \ref{def:lts-global}, since
$\subj(\Rp\Rr\tSend \tau) = \Rp \neq \Rq = \subj(\Rq\Rr \tSend \tau')$,
because the sender is different in each action, this would be another possible trace
for the same global type:

\vskip.1cm
{\small
\noindent
\begin{center}
$
\Rp \gMsg \Rr \gTy{\tau \hasCost \ccc}. \Rq \gMsg \Rr \gTy{\tau' \hasCost \ccc'}
\xrightarrow{{\Rq\Rr \tSend \tau' \cdot \; \Rp\Rr \tSend \tau}}
\Rp \gMsgt \Rr \gTy{\tau \hasCost \ccc}. \Rq \gMsgt \Rr \gTy{\tau' \hasCost \ccc'}
$
\end{center}
}
\vskip.1cm

\noindent
The intuition is that, since $\Rp$ and $\Rq$ are running on different CPUs, and
their actions are independent, both $\Rq\Rr \tSend \tau'$ and $\Rp\Rr \tSend
\tau$ can happen in parallel.

We assume that the cost of the message-passing operations depend on the size of
the data that is sent and that the costs of sending and receiving
messages are known, and that they are functions on the size of the
data.
Due to the presence of IO-dependencies, we record when the \emph{send}
actions have happened: a participant cannot perform a computation until received
a value, and it cannot receive a value until at least the time it took for the
sender to finish sending the data has passed.

\myparagraph{Cost Environments, Queues and Trace Cost}
To record the cost, we
use \emph{queues} that record when the data becomes available. A cost dependency
queue is, similarly to $Q$ in Def. \ref{def:lts_local}, a mapping from pairs of
participants to queues of execution times, that records when the data in the queue
becomes available. We use $\Queue$ for these cost queues.

The cost of a sequence of actions is defined as a mapping from
participants to the total execution time accumulated by each
participant, defined as \emph{cost environments}.
It is computed by adding the cost of each individual
action to the cost of the participant that performs it, taking into
account the cost dependencies recorded by the queue.


We call mappings from participants to total accumulated
costs \emph{cost environments}. If $\ccc$ is an execution time estimation and $R$
is a set of participants, then
$\Time = [\role{r} \mapsto \ccc]_{r\in R}$, with the usual
extension $\Time[\role{r} \mapsto \ccc]$ and indexing $\Time(\role{r})$ operations. We define
$\Time(\role{r}) = 0$, if $\role{r} \notin \Time$ and the following
operations:
{
\begin{center}
$
\begin{array}{@{}l@{\;}l@{}l@{\;}l}
  \Time[\ccc' | \role{r} \addcost \ccc] & = \Time[\role{r} \mapsto
\max(\Time(\role{r}), \ccc') + \ccc]
\end{array}
$
\end{center}
}
\noindent
This operation is used to record cost dependencies. Specifically, $\Time[\ccc' | \role{r} \addcost \ccc]$ means
that $\role{r}$ incurs additional cost $\ccc$, after possibly waiting for an
action by an external process with total cost $\ccc'$. If
the action that took $\ccc$ time depends on an action that took $\ccc'$ time, then the
cost of $\role{r}$ will be updated with the maximum of either the time $\ccc'$, or the
total accumulated cost by $\role{r}$. We write:
\begin{center}
{
$\Time[\role{r}' | \role{r} \addcost \ccc]$ for
$\Time[\Time(\role{r}') | \role{r} \addcost \ccc]$ \quad and \quad
$\Time[\role{r} \addcost \ccc]$ for $\Time[0|\role{r} \addcost \ccc]$.
}
\end{center}
\noindent For the cost of actions, we define:
\begin{enumerate*}
\item[(1)] $\crecv(\tau)$ is the time required for \emph{receiving}
a value of type $\tau$;
\item[(2)] $\csend(\tau)$ is the time required for \emph{sending} a value of type
$\tau$;
\item[(3)] $\ccc^\tau$ is cost associated to type $\tau$; and
\item[(4)] the cost of labels $l$ is calculated as unit type $1$.
\end{enumerate*}

\begin{definition}[Cost of a Trace and Action Cost]\label{def:cost_trace}
The cost of a trace $\vec{\act}$ takes as an input an initial cost $\Time$, an
input dependency queue $\Queue$, and produces a pair of a final cost $\Time'$
and queue $\Queue'$.
\begin{center}
$
  \cost(\epsilon)(\Time, \Queue) = (\Time, \Queue)
  \hspace{.5cm}
  \cost(\act \cdot \vec{\act}) = \cost(\vec{\act})(\cost(\act)(\Time, \Queue))
$
\end{center}
\noindent The initial cost is
$\cost(\vec{\act}) =
\cost(\vec{\act})([\role{r} \mapsto 0]_{\role{r} \in \vec{\act}},
                  [\role{p}\role{q} \mapsto
                  \epsilon]_{\role{p}\role{q} \in \vec{\act}})$
with empty initial queues and zero costs.
The cost of individual actions is defined below:
\begin{center}
  {\small
    \noindent%
  $\begin{array}{@{}r@{\;}l}
     \cost(\Rp\Rq\tRecv \tau) (\Time, \Queue[\Rp\Rq \mapsto \ccc \cdot w])
     & = (\Time[\ccc | \Rq \addcost \crecv(\tau)], \Queue[\Rp\Rq \mapsto w]))
     \\[1mm]
     \cost(\Rp\Rq\tBranch l_k) (\Time, \Queue[\Rp\Rq \mapsto \ccc \cdot w])
     & = (\Time[\ccc | \Rq \addcost \crecv(1)], \Queue[\Rp\Rq \mapsto w]))
     \\[1mm]
     \cost(\Rp\Rq\tSend \tau) (\Time, \Queue[\Rp\Rq \mapsto w]))
     & = (\Time[\Rp \addcost \csend(\tau)], \Queue[\Rp\Rq \mapsto w \cdot (\Time(\Rp) + \csend(\tau))])
     \\[1mm]
     \cost(\Rp\Rq\tSelect l_k) (\Time, \Queue[\Rp\Rq \mapsto w]))
     & = (\Time[\Rp \addcost \csend(1)], \Queue[\Rp\Rq \mapsto w \cdot (\Time(\Rp) + \csend(1))])
     \\[1mm]
     \cost(\Rp \tRun \ccc) (\Time, \Queue)
     & = (\Time[\Rp \addcost \ccc], \Queue)
   \end{array}$}
  \end{center}
\end{definition}

\myparagraph{Example of Trace Cost}
We show an example of calculating the cost of a trace. Consider the following
global type:
{\small
\begin{center}
$
G =
\Rp \gMsg \Rq \gTy {\kw{str}^n \hasCost n \times 3 \text{ms}} .
\Rq \gMsg \Rp \gTy {\kw{int}^i \hasCost 6 \text{ms}} .
\gEnd
$
\end{center}
}
\noindent
In this protocol, there are two participants, $\Rp$ and $\Rq$. First, $\Rp$
sends a string of size $n$ to $\Rq$, that requires $n \times 3 \text{ms}$ of
local computation time. Then, $\Rq$ replies with an integer of size $i$ (i.e.\
smaller than $i$) to $\Rp$, that takes a constant computation time of $6
\text{ms}$. We represent this scenario as a trace of actions:
{\small
\noindent
\begin{center}
$
\begin{array}{@{}l@{\;}l@{}}
  tr = &
  \Rp\Rq \tSend \kw{str}^n \cdot
  \Rp\Rq \tRecv \kw{str}^n \cdot
  \Rq \tRun (n \times 3 \text{ms}) \cdot
  \Rq\Rp \tSend \kw{int}^i \cdot
  \Rq\Rp \tRecv \kw{int}^i \cdot
  \Rp \tRun (6 \text{ms})
\end{array}
$
\end{center}
}
\noindent
To compute the cost, we traverse the trace, record at which time each event
happened in the message queue, and add the cost of each action to the total
execution time accumulated by the subject of the action. For example,
$\cost(\Rp\Rq \tSend \kw{str}^n)([], []) = ([\Rp \mapsto
\csend(\kw{str}^n)], [\Rp\Rq \mapsto \csend(\kw{str}^n)]) $, i.e.\ the cost
of sending a string of size $n$ is added to the cost for $\Rp$, and the message
queue now records that this message was sent after $\csend(\kw{str}^n)$ time.
Then, $\cost(\Rp\Rq\tRecv \kw{str}^n)([\Rp \mapsto
\csend(\kw{str}^n)], [\Rp\Rq \mapsto \csend(\kw{str}^n)])
=
([\Rp \mapsto \csend(\kw{str}^n); \; \Rq \mapsto \csend(\kw{str}^n) + \crecv(\kw{str}^n)]
, [])$. That means that the cost of receiving a string of size $n$ is added to
the cost of $\Rq$, after the time recorded in the queue $\Rp\Rq$, in this case
the cost of sending a string of size $n$, and the message queue would now be
empty. By following the cost rules with the remaining actions, we produce the following cost equation:
{\small
\noindent
\begin{center}
$
\cost(\mathsf{tr}) =
\left[
\begin{array}{@{}l@{\;}l@{\;}l}
  \Rp
  & \mapsto
  & \csend(\kw{str}^n) + \crecv(\kw{str}^n) + n \times 3 \text{ms} +
    \csend(\kw{int}^i) + \crecv(\kw{int}^i) + 6 \text{ms}
  \\
  \Rq
  & \mapsto
  & \csend(\kw{str}^n) + \crecv(\kw{str}^n) + n \times 3 \text{ms}
    + \csend(\kw{int}^i)
\end{array}
\right]
$
\end{center}
}
\noindent
By instantiating the sizes of the messages and the send/receive costs with e.g.\
profiling information, we can now estimate how much time it will take the
protocol to complete.

\begin{example}[Scatter/Gather]\label{exn:sc}
  This global type represents a scatter/gather protocol, where $\role{p}$
  distributes tasks to $\role{q}$ and $\role{r}$, and $\role{s}$ collects the
  results. We omit the cost on the receiving end of $\role{s}$ to represent that
  $\role{s}$ simply gathers the results, and that has computation cost $0$.

  \vskip.2cm
  {\noindent
  \small
  \begin{center}
  $
  G =
  \role{p} \gMsg \role{q} \gTy{ \tau_1 \hasCost \ccc_1 }.
  \role{p} \gMsg \role{r} \gTy{ \tau_1 \hasCost \ccc_1 }.
  \role{q} \gMsg \role{s} \gTy{ \tau_2 }.
  \role{r} \gMsg \role{s} \gTy{ \tau_2 }.
  \gEnd
  $
  \end{center}
  }
  \vskip.2cm

  \noindent
  We show below two examples of the possible traces:
  {\small
    \noindent
    \begin{center}
    $
    \begin{array}{l@{}}
      tr_1 =
      \role{p}\role{q}\tSend\tau_1 \cdot
      \role{p}\role{q}\tRecv\tau_1 \cdot
      \role{q}\tRun   \ccc_1 \cdot
      \role{q}\role{s}\tSend\tau_2 \cdot
      \role{q}\role{s}\tRecv\tau_2 \cdot
      \role{s} \tRun 0 \cdot
      \role{p}\role{r}\tSend\tau_1 \cdot
      \role{p}\role{r}\tRecv\tau_1 \cdot
      \role{r}\tRun \ccc_1 \cdot
      \role{r}\role{s}\tSend\tau_2 \cdot
      \role{r}\role{s}\tRecv\tau_2 \cdot
      \role{s} \tRun 0
      \\
      tr_2 =
      \role{p}\role{q}\tSend\tau_1 \cdot
      \role{p}\role{r}\tSend\tau_1 \cdot
      \role{p}\role{r}\tRecv\tau_1 \cdot
      \role{r}\tRun \ccc_1 \cdot
      \role{p}\role{q}\tRecv\tau_1 \cdot
      \role{q}\tRun \ccc_1 \cdot
      \role{q}\role{s}\tSend\tau_2 \cdot
      \role{r}\role{s}\tSend\tau_2 \cdot
      \role{q}\role{s}\tRecv\tau_2 \cdot
      \role{s} \tRun 0 \cdot
      \role{r}\role{s}\tRecv\tau_2 \cdot
      \role{s} \tRun 0
    \end{array}
    $
    \end{center}
  }
  Since we assume that each participant can run in parallel to the remaining of
  the participants, the cost of both traces yield the same result:
  {\small
    \noindent
    \begin{center}
    $
    \begin{array}{l@{}}
      \cost(tr_1)([], [])
      = \cost(\role{p}\role{q}\tRecv\tau_1\cdots)
      ([\role{p}\mapsto\csend(\tau_1)], [\role{p}\role{q}\mapsto \csend(\tau_1)])
      \\
      \quad
      = \cost(\role{q}\tRun \ccc_1 \cdots)
      ([\role{p}\mapsto\csend(\tau_1)]; \role{q}\mapsto \csend(\tau_1) + \crecv(\tau_1), [])
      \\
 \quad     = \left[
      \begin{array}{@{}l@{}}
        \role{p}\mapsto
        2\times \csend(\tau_1) ; \;
        \role{q}\mapsto
        \csend(\tau_1) + \crecv(\tau_1) + \ccc_1 + \csend(\tau_2) ; \;
        \;\;
        \role{r}\mapsto
        2 \times \csend(\tau_1) + \crecv(\tau_1) + \ccc_1 + \csend(\tau_2) ; \;
        \\
        \role{s}\mapsto
        \max(\crecv(\tau_2), \csend(\tau_1))
        + \csend(\tau_1) + \crecv(\tau_1) + \ccc_1 + \csend(\tau_2) + \crecv(\tau_2)
      \end{array}
      \right]
    \end{array}
    $
    \end{center}
  }

\end{example}

\begin{example}[Parallel Pipeline]\label{exn:pipe}
  We now show the cost of a fragment of the trace that corresponds with
  $G = \gFix X.
  \Rp \to \Rq \gTy{\tau_1\hasCost \ccc_1}.
  \Rq \to \Rr \gTy{\tau_2 \hasCost \ccc_2}. X$.
  Two possible traces for two iterations of this protocol are as follows:

  {\small
    \vskip.2cm
    \noindent
    \begin{center}
    $
    \begin{array}{@{}l@{}}
      tr_1 =
      \Rp\Rq\tSend\tau_1 \cdot
      \Rp\Rq\tSend\tau_1 \cdot
      \Rp\Rq\tRecv\tau_1 \cdot
      \Rq\tRun \ccc_1 \cdot
      \Rq\Rr\tSend\tau_2 \cdot
      \Rp\Rq\tRecv\tau_1 \cdot
      \Rq\tRun \ccc_1 \cdot
      \Rq\Rr\tSend\tau_2 \cdot
      \Rq\Rr\tRecv\tau_2 \cdot
      \Rr\tRun \ccc_2 \cdot
      \Rq\Rr\tRecv\tau_2 \cdot
      \Rr\tRun \ccc_2
      \\
      tr_2 =
      \Rp\Rq\tSend\tau_1 \cdot
      \Rp\Rq\tRecv\tau_1 \cdot
      \Rq\tRun \ccc_1 \cdot
      \Rq\Rr\tSend\tau_2 \cdot
      \Rq\Rr\tRecv\tau_2 \cdot
      \Rr\tRun \ccc_2 \cdot
      \Rp\Rq\tSend\tau_1 \cdot
      \Rp\Rq\tRecv\tau_1 \cdot
      \Rq\tRun \ccc_1 \cdot
      \Rq\Rr\tSend\tau_2 \cdot
      \Rq\Rr\tSend\tau_2 \cdot
      \Rr\tRun \ccc_2
    \end{array}
    $
    \end{center}
    \vskip.2cm
  }

  \noindent
  In the first trace, $\Rp$ sends first two messages to $\Rq$. Then, $\Rq$
receives, computes them and sends the results to $\Rr$. Finally, $\Rr$ receives
the results, and performs their computation with cost $\ccc_2$. The second trace,
instead, is the repetition of two single iterations of the protocol, where $\Rp$
sends one message, $\Rq$ receives, processes and sends the result to $\Rr$, and
$\Rr$ performs its local computation. Note, however, that since the cost models
assume that each participant runs at a different CPU, the costs of both traces
is the same. To help readability, we name $\Time_\role{p} = \csend(\tau_1)$,
$\Time_\role{q} = \crecv(\tau_1) + \ccc_1 + \csend(\tau_2)$ and $\Time_\role{q} =
\crecv(\tau_2) + \ccc_2$. The trace cost is:

{\small
    \noindent
    \begin{center}
    $
    \begin{array}{l@{}}
      \cost(tr_1)([], [])
      = \cost(\role{p}\role{q}\tRecv\tau_1\cdots)
      ([\role{p}\!\mapsto\!\csend(\tau_1)], [\role{p}\role{q}\!\mapsto\! \csend(\tau_1)])
      = \cost(\role{q}\tRun \ccc_1 \cdots)
      ([\role{p}\!\mapsto\!\csend(\tau_1)]; \role{q}\!\mapsto\! \csend(\tau_1) + \crecv(\tau_1), [])
      \\
      \quad = \left[
      \begin{array}{@{}l@{}}
        \role{p}\mapsto
        2\times \Time_\role{p} ; \;
        \hspace{.2cm}
        \role{q}\mapsto
        \Time_\role{p} + \Time_\role{q} + \max(\Time_\role{p}, \Time_\role{q})
        \;\;
        \role{r}\mapsto
        \Time_\role{p} + \Time_\role{q} + \Time_\role{r} +
        \max(\Time_\role{p}, \Time_\role{q}, \Time_\role{r})
      \end{array}
      \right]
    \end{array}
    $
    \end{center}
  }
  \vskip.2cm

  \noindent
  We can see that the cost is the expected one, where the cost includes the
  initialisation and finalisation of the protocol, where the costs are added, and
  a pipeline steady state, where the cost is the maximum of the costs of each
  participant.
\end{example}

\begin{example}[Dependency Cycle]\label{exn:dep_cy}
  We change slightly the pipeline example, to illustrate what happens to the
trace cost when we introduce a dependency cycle in the protocol. The protocol
that we show below is a recursive ping-pong, where $\Rp$ sends to $\Rq$, and
then $\Rq$ replies to $\Rp$: $\gFix X. \Rp \to \Rq \gTy{\tau_1\hasCost \ccc_1}. \Rq
\to \Rp \gTy{\tau_2 \hasCost \ccc_2}. X$. There is only one possible trace for such
protocol, due to the input/output dependencies between $\Rq$ and $\Rp$ (see
conditions $\Rp, \Rq \notin \subj(\act)$ in Def. \ref{def:lts-global}, e.g.\
[GR4a]). The trace and cost in this instance is:
{\small
\begin{center}
  $
  \begin{array}{@{}c@{}}
    tr =
    \Rp\Rq\tSend\tau_1 \cdot
    \Rp\Rq\tRecv\tau_1 \cdot
    \Rq\tRun \ccc_1 \cdot
    \Rq\Rp\tSend\tau_2 \cdot
    \Rq\Rp\tRecv\tau_2 \cdot
    \Rp\tRun \ccc_2 \cdot
    \Rp\Rq\tSend\tau_1 \cdot
    \Rp\Rq\tRecv\tau_1 \cdot
    \Rq\tRun \ccc_1 \cdot
    \Rq\Rp\tSend\tau_2 \cdot
    \Rq\Rp\tRecv\tau_2 \cdot
    \Rp\tRun \ccc_2
    \\[.1cm]
    \cost(tr)([], [])
      = \left[
      \begin{array}{@{}l@{}}
        \role{p}\mapsto 2 \times (\Time_\Rp + \Time_\Rq)
        \hspace{.2cm}
        \role{q}\mapsto \csend(\tau_1) + \Time_\Rp + 2 \times \Time_\Rq
      \end{array}
      \right]
    \end{array}
    $
  \end{center}
  }
  \noindent
  Here, $\Time_\Rp = \csend(\tau_1) + \crecv(\tau_2) + \ccc_2$ and $\Time_\Rq =
  \crecv(\tau_1) + \ccc_1 + \csend(\tau_2)$. Participant $\Rp$ needs to send
  $\tau_1$, then wait for $\Rq$ to complete its part of the protocol, and then
  receive $\tau_2$ and process it.  Therefore, the cost per iteration is $\Time_\Rp +
  \Time_\Rq$ in all cases.  For participant $\Rq$, the situation is slightly
  different.  A single iteration of $\Rq$ only requires it to wait until $\Rp$
  sends $\tau_1$, and then perform its part of the protocol. Hence, the cost is
  $\csend(\tau_1) + \Time_\Rq$. However, on the next iteration, $\Rq$ needs to wait
  until $\Rp$ finishes with its actions for the previous iteration. This implies
  that the cost of a single iteration for $\Rq$ ($\csend(\tau_1) + \Time_\Rq$) is
  \emph{less} than the average cost per iteration ($\Time_\Rp + \Time_\Rq$).
\end{example}

\subsection{Cost of Global Protocols}
\label{sec:}
We have introduced a way to compute the cost of \emph{one} trace. This cost is
useful to statically analyse the potential execution times of particular
executions of a protocol. However, it is in general not feasible to produce all
possible traces to analyse the cost of a
concurrent/distributed system.
Our \emph{global type cost} addresses this issue,
by providing a syntactic method to estimate an upper bound of the execution
cost.

The global type cost produces, just like Def.\ \ref{def:cost_trace}, a
\emph{cost environment}, with a per-participant estimation.
The protocol will complete when all the
participants have finished their tasks, and so the overall cost is the
maximum of the cost per participant. The global type cost is a function from a
global type, an estimation of the number of iterations for the recursive
protocols, and an initial cost environment.
For proving completeness,
we use a dependency queue as an input to the global type cost,
that will only be used at intermediate
stages of the execution.

\begin{definition}[Global Type Cost]\label{def:global-cost}
Let the maximum operation that combines two cost environments compute a per participant
maximum $\max(\Time, \Time') = [\role{p} \mapsto \max(\Time(\role{p}),
\Time'(\role{p}))]_{\role{p}\in\mathsf{dom}(\Time)\cup\mathsf{dom}(\Time')}$\footnote{The
maximum operation is defined even if $\role{p}$ is not in one of the environments:
recall that $\Time(\role{p}) = 0$ if $\role{p} \notin \mathsf{dom}(\Time)$}. We define
the function $\unroll$, that unrolls the recursive
protocol $\gFix X. G$
$k$ times:

{\small
\begin{center}
$
  \unroll^{k+1}(X,G) = [\unroll^k(X, G) / X] G
  \hspace{.4cm}
  \unroll^0(X, G) = \gEnd
$
\end{center}
}

Then the \emph{global type cost} is defined recursively on the structure of
global types:\\[1mm]
{\small
$\begin{array}{l}
\bullet \ \ \cost(\role{p}\gMsg \role{q} \gTy{\tau\hasCost \ccc}. G, \vec{k}) (\Time,
\Queue)
= \cost(G, \vec{k}) (\Time[\role{p} \addcost \csend(\tau)][\role{p} | \role{q} \addcost \crecv(\tau) + c], \Queue)
   \\[1mm]
\bullet \ \   \cost(\role{p}\gMsg \role{q} \{l_{i}. G_i\}_{i\in I}, \vec{k}) (\Time, \Queue)
= \max\{\cost(G_i, \vec{k}) (\Time[\role{p} \addcost \csend(1)][\role{p} | \role{q} \addcost \crecv(1)], W)\}_{i\in I}
   \\[1mm]
\bullet \   \ \cost(\gFix X. G, k \cdot \vec{k}) (\Time, \Queue)
   = \cost(\unroll^k(X, G), \vec{k}) (\Time, \Queue)
   \\[1mm]
\bullet \   \ \cost(\gEnd) (\Time, \Queue) = (\Time, \Queue)
\end{array}
$
}

\noindent
For completeness, and for the proofs, we define the cost rules for the extended
global types used in the semantics.

\noindent
{\small
$\begin{array}{l}
\bullet \ \   \cost(\role{p}\gMsgt \role{q} \gTy{\tau\hasCost \ccc}. G ,\vec{k}) (\Time, \Queue[\role{p}\role{q} \mapsto \ccc_\role{p} \cdot w])
= \cost(G, \vec{k}) (\Time[\ccc_\role{p} | \role{q} \addcost \crecv(\tau) + c], \Queue[\role{p}\role{q} \mapsto w])
   \\[1mm]
\bullet \  \  \cost(\role{p}\gMsgt \role{q} \; j \; \{l_{i}. G_i\}_{i\in I}, \vec{k}) (\Time, \Queue[\role{p}\role{q} \mapsto \ccc_\role{p} \cdot w])
 = \cost(G_j, \vec{k}) (\Time[\ccc_\role{p} | \role{q} \addcost \crecv(1)], \Queue[\role{p}\role{q} \mapsto w])
   \\[1mm]
\bullet \  \  \cost(\role{p}\gEval (\tau\hasCost \ccc). G, \vec{k})(\Time, \Queue)
 = \cost(G, \vec{k}) (\Time[\role{p} \addcost c], \Queue)
   \\[1mm]
\end{array}
$
}

\end{definition}

\noindent
We write $\cost(G, \vec{k})$ to represent $\cost(G, \vec{k})([], [])$.
Since the dependency queue is only used in the definitions for the intermediate
stages of the execution ($\gMsgt$), we can write
$\cost(G, \vec{k})(T)$. When we compare the output of the cost
functions, we refer to the per-participant cost, ignoring the dependency queue:
$(\Time, \Queue) \leq (\Time', \Queue')$,
$\forall \role{p} \in \Time, \Time(\role{p}) \leq \Time(\role{p}')$.

The first rule in Def. \ref{def:global-cost} explains the cost of an interaction
from $\role{p}$ to $\role{q}$. Participant $\role{p}$ needs to send a message, and this is what the
cost $\Time[\role{p} \addcost \csend(\tau)]$ reflects. Participant $\role{q}$ will receive a
value from $\role{p}$, and then take $\ccc$ time performing a computation. Since $\role{q}$ needs
to wait until $\role{p}$ finishes, we add this dependency to the cost:
$[\role{p} | \role{q} \addcost \crecv(\tau) + \ccc]$.
The cost of a choice is computed similarly, but to produce
an upper bound of the cost of all branches, we compute the maximum cost
per-participant. The cost of the intermediate stages of the execution of the
protocol ($\role{p} \gMsgt \role{q}$) requires accessing the information in $\Queue$, and
retrieving when $\role{p}$ completed the send operation. The cost of a computation
$(\role{p}\gEval(\tau \hasCost \ccc))$ is added to accumulated cost of
participant $\role{p}$.
The cost of a
recursive protocol uses parameter $k$ to first unroll the recursion, and then
compute the cost. We go back to the Examples \ref{exn:sc}, \ref{exn:pipe} and
\ref{exn:dep_cy} and show the computed cost by their global type.

\begin{example}[Scatter/Gather]
  We illustrate the global type cost using the scatter/gather protocol:
  $G =
  \Rp \gMsg \Rq \gTy{\tau_1 \hasCost \ccc_1}.
  \Rp \gMsg \Rr \gTy{\tau_1 \hasCost \ccc_1}.
  \Rq \gMsg \Rs \gTy{\tau_2}.
  \Rr \gMsg \Rs \gTy{\tau_2}.
  \gEnd$.

  {\small%
    \noindent
    \begin{center}
    $
    \begin{array}{@{}l@{}}
      \cost(\Rp \gMsg \Rq \gTy{\tau_1 \hasCost \ccc_1}\ldots)([])
      = \cost(\Rp \gMsg \Rr \gTy{\tau_1 \hasCost \ccc_1}\ldots)%
      \left(
      \left[
      \begin{array}{@{}l@{}}
        \Rp \mapsto \csend(\tau_1); \;
        \Rq \mapsto \csend(\tau_1) + \crecv(\tau_1) + \ccc_1
      \end{array}
      \right]
      \right)
      \\
      = \cost(\Rq \gMsg \Rs \gTy{\tau_2}\ldots)%
      \left(
      \left[
      \begin{array}{@{}l@{}}
        \Rp \mapsto 2 \times \csend(\tau_1); \;
        \Rq \mapsto \csend(\tau_1) + \crecv(\tau_1) + \ccc_1;
        \Rr \mapsto 2 \times \csend(\tau_1) + \crecv(\tau_1) + \ccc_1
      \end{array}
      \right]
      \right)
      \\
      = \cost(\Rr \gMsg \Rs \gTy{\tau_2}\ldots)%
      \left(
      \left[
      \begin{array}{@{}l@{}}
        \Rp \mapsto 2 \times \csend(\tau_1); \;
        \Rq \mapsto \csend(\tau_1) + \crecv(\tau_1) + \ccc_1 + \csend(\tau_2);
        \Rr \mapsto 2 \times \csend(\tau_1) + \crecv(\tau_1) + \ccc_1;
        \\
        \Rs \mapsto \csend(\tau_1) + \crecv(\tau_1) + \ccc_1 + \csend(\tau_2) + \crecv(\tau_2);
      \end{array}
      \right]
      \right)
      \\
      =
      \left[
      \begin{array}{@{}l@{}}
        \Rp \mapsto 2 \times \csend(\tau_1); \;
        \Rq \mapsto \csend(\tau_1) + \crecv(\tau_1) + \ccc_1 + \csend(\tau_2);
        \Rr \mapsto 2 \times \csend(\tau_1) + \crecv(\tau_1) + \ccc_1 + \csend(\tau_2);
        \\
        \Rs \mapsto \csend(\tau_1) +
                    \crecv(\tau_1) + \ccc_1 +
                    \csend(\tau_2) +
                    \max(\crecv(\tau_2), \csend(\tau_1)) + \crecv(\tau_2);
      \end{array}
      \right]
    \end{array}
    $
  \end{center}
  }
\vspace{0.1cm}

  \noindent
  The final cost produced by the global type predicts the same as the one
  taking any possible trace.
\end{example}

\begin{example}[Parallel Pipeline]
  $G = \gFix X. \Rp \to \Rq \gTy{\tau_1\hasCost \ccc_1}. \Rq \to \Rr \gTy{\tau_2 \hasCost \ccc_2}. X$.
  Applying the cost models with $\vec{k}= 2$, $\cost(G, 2)([])$, produces the
  same cost as the trace cost. Particularly, for any arbitrary $k$, $\cost(G, k)([])$
  produces:

  \vskip.2cm
  {\small
    \noindent
    \begin{center}
    $
    \left[
      \begin{array}{@{}l@{}}
        \role{p}\mapsto
        k\times \Time_\role{p} ; \;
        \hspace{.2cm}
        \role{q}\mapsto
        \Time_\role{p} + \Time_\role{q} + (k-1)\times\max(\Time_\role{p}, \Time_\role{q})
        \role{r}\mapsto
        \Time_\role{p} + \Time_\role{q} + \Time_\role{r} +
        (k-1)\times\max(\Time_\role{p}, \Time_\role{q}, \Time_\role{r})
      \end{array}
      \right]
      $
    \end{center}
    }
\end{example}

\begin{example}[Dependency Cycle]\label{exn:pingpong-cost}
  $G = \gFix X. \Rp \to \Rq \gTy{\tau_1\hasCost \ccc_1}.
  \Rq \to \Rp \gTy{\tau_2 \hasCost \ccc_2}. X$.
  For any arbitrary $k>1$, $\cost(G, k)([])$
  produces the following cost, which corresponds to the trace cost:

\vskip.2cm
  {\small
    \noindent
    \begin{center}
    $
    \begin{array}{l@{}}
      \left[
      \begin{array}{@{}l@{}}
        \role{p}\mapsto k \times (\Time_\Rp + \Time_\Rq)
        \hspace{.2cm}
        \role{q}\mapsto \csend(\tau_1) + \Time_\Rq + (k-1) \times (\Time_\Rp + \times \Time_\Rq)
      \end{array}
      \right]
    \end{array}
    $
  \end{center}
  }
  \vskip.2cm

\end{example}

\noindent
We showed in the previous examples that
function $\cost(G)$ accurately predicts an upper
bound of the cost obtained from any trace of the protocol. We formalise this
statement below in Theorem \ref{thm:global-cost-sound}, and provide a full proof in
the Appendix \ref{app:global-cost-sound}. In the formalisation, we use
$\unroll(G, \vec{k})$ to unroll all recursion variables, using the parameters
$\vec{k}$, i.e. $\unroll(G, \vec{k})$ is defined recursively on $G$, with the
only interesting case $\unroll(\gFix X. G, k \cdot \vec{k}) = \unroll^k(X,
(\unroll(G, \vec{k})))$. Function $\unroll(G, \vec{k})$ is only defined if there
are enough the size of $\vec{k}$ is that of the amount of recursion variables in
$G$.

\begin{definition}[Well-Formedness of Dependency Queues]\label{def:wf_dep_queue}
  A dependency queue $\Queue$ is well formed with respect to a global type $G$ if it
  \emph{only} contains the values required to compute the cost of $G$.

\vskip.2cm
$\begin{array}{r l}
   \WF(G, \Queue[\Rp\Rq \mapsto w])
   & \Longrightarrow \WF(\Rp \gMsgt \Rq \gTy{\tau \hasCost \ccc}. G, \Queue[\Rp\Rq \mapsto \ccc_\Rp \cdot w])
   \\
   \WF(G, \Queue[\Rp\Rq \mapsto \epsilon])
   & \Longrightarrow \WF(\Rp \gMsg \Rq \gTy{\tau \hasCost \ccc}. G, \Queue[\Rp\Rq \mapsto \epsilon])
   \\
   \WF(G, \Queue)
   & \Longrightarrow \WF(\Rp \gEval \gTy{\tau \hasCost \ccc}. G, \Queue)
   \\
   j \in I \wedge \WF(G_j, \Queue[\Rp\Rq \mapsto w])
   & \Longrightarrow \WF(\Rp \gMsgt \Rq \; j \; \{\aInj{i}. G_i\}_{i \in I}, \Queue[\Rp\Rq \mapsto \ccc \cdot w])
   \\
   \forall (i \in I),  \WF(G_i, \Queue[\Rp\Rq \mapsto \epsilon])
   & \Longrightarrow \WF(\Rp \gMsg \Rq \; \{\aInj{i}. G_i\}_{i \in I}, \Queue[\Rp\Rq \mapsto \epsilon])
   \\
   & \phantom{\Longrightarrow\;} \WF(\gEnd, [])
 \end{array}$
 \vskip.2cm

 \noindent
 We generally write $\WF(G, (\Time, \Queue)) = \WF(G, \Queue)$.

\end{definition}

\begin{restatable}[Preservation of $\WF$]{lemma}{queueWf}\label{lm:step_wf}
  If $\WF(G, (\Time, \Queue))$ and $G \xrightarrow{\ell} G'$, then $\WF(G' ,
  \cost(\ell)(\Time, \Queue))$.
\end{restatable}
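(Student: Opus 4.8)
The plan is to argue by induction on the derivation of $G \xrightarrow{\ell} G'$ (Def.~\ref{def:lts-global}), with the axioms GR1a, GR1b, GR2a, GR2b, GR2c as base cases and the congruence rules GR3, GR4a, GR4b, GR5a, GR5b, GR5c handled by the induction hypothesis. Two preliminary remarks streamline the argument. First, $\WF(G,(\Time,\Queue)) = \WF(G,\Queue)$, so the time component is irrelevant to the conclusion; it suffices to track how $\cost(\ell)$ rewrites the dependency queue. Second, rule GR3 unfolds $\gFix X. G$ to $G[\gFix X. G/X]$ before stepping, and both $\WF(\cdot,\Queue)$ and the single-action function $\cost(\ell)$ are determined by the unfolded type (recursion is guarded, so one unfolding exposes a genuine prefix); that case is therefore just the induction hypothesis applied to the premise $G[\gFix X. G/X] \xrightarrow{\ell} G'$.

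For each base case I would invert the syntax-directed derivation of $\WF(G,\Queue)$ to read off the shape of the channel entry on which $\ell$ acts, evaluate $\cost(\ell)(\Time,\Queue)$ using Def.~\ref{def:cost_trace}, and re-apply the matching clause of Def.~\ref{def:wf_dep_queue}. In GR1a, $G = \Rp \gMsg \Rq \gTy{\tau \hasCost \ccc}. G_0$ forces the $\Rp\Rq$-component of $\Queue$ to be $\epsilon$ with $\WF(G_0,\Queue)$; the send step changes that component to $\Time(\Rp)+\csend(\tau)$, and the $\gMsgt$-clause (head $\ccc_\Rp = \Time(\Rp)+\csend(\tau)$, tail $\epsilon$) gives $\WF(\Rp \gMsgt \Rq \gTy{\tau \hasCost \ccc}. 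G_0,\cost(\ell)(\Time,\Queue))$. GR1b is the same with the selection/branching clauses. In GR2a, $\WF(\Rp \gMsgt \Rq \gTy{\tau \hasCost \ccc}. G_0,\Queue)$ forces the $\Rp\Rq$-component to be $\ccc_\Rp \cdot w$ with $\WF(G_0,\Queue[\Rp\Rq\mapsto w])$; the receive pops that head and the $\gEval$-clause gives $\WF(\Rq \gEval(\tau \hasCost \ccc). G_0,\Queue[\Rp\Rq\mapsto w])$. GR2c is identical with a label (a branch pops a head just like a receive), and GR2b is immediate since the action $\Rq\tRun\ccc$ leaves the queue untouched and the $\gEval$-clause merely strips the prefix.

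The congruence cases carry the real content, and the main obstacle is a \emph{frame}-style argument: when $\ell$ fires strictly beneath an enclosing prefix, one must check that $\cost(\ell)$ does not disturb the queue entry that this prefix's $\WF$-clause owns, so that the clause can be re-wrapped around the derivative supplied by the induction hypothesis. For GR5a, the side condition $\Rq \notin \subj(\ell)$ rules out $\ell$ being a receive or a branch on the channel $\Rp\Rq$ (whose subject would be $\Rq$); inspecting Def.~\ref{def:cost_trace}, every remaining action either leaves the $\Rp\Rq$-component of the queue alone or appends one value to its tail, and both this appended value and the action's effect on the time component depend only on the time component and on queue entries other than $\Rp\Rq$. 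Hence running $\cost(\ell)$ on $\Queue$ and on $\Queue[\Rp\Rq\mapsto w]$ (the queue with the head $\ccc_\Rp$ popped) produces the same time component and queues that differ only by the preserved head $\ccc_\Rp$; the induction hypothesis yields $\WF(G_0',\Queue'[\Rp\Rq\mapsto w'])$, and the $\gMsgt$-clause rebuilds $\WF(\Rp \gMsgt \Rq \gTy{\tau \hasCost \ccc}. G_0',\Queue'[\Rp\Rq\mapsto \ccc_\Rp \cdot w'])$. GR5c is the same argument for a chosen branch; GR4a and GR4b are the easier variant where $\Rp,\Rq \notin \subj(\ell)$ makes $\ell$ perform neither an enqueue nor a dequeue on $\Rp\Rq$, so that component stays $\epsilon$; and GR5b is trivial because the $\gEval$-clause transports the whole queue unchanged. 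I expect this reconciliation — matching the GR4/GR5 side conditions against precisely which clauses of Def.~\ref{def:cost_trace} can modify a given channel's entry — to be the only delicate point, with everything else routine inversion and reassembly.
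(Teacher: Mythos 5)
Your proposal is correct and follows essentially the same route as the paper: induction on the derivation of $G \xrightarrow{\ell} G'$, inversion of the $\WF$ clauses in the base cases, and, for the congruence rules, the induction hypothesis combined with a frame argument showing that the side conditions on $\subj(\ell)$ prevent $\cost(\ell)$ from disturbing the queue entry owned by the enclosing prefix. The frame reasoning you spell out for GR5a/GR5c is exactly what the paper leaves implicit in this lemma (it isolates the corresponding facts as Lemmas~\ref{lm:action_queue} and~\ref{lm:action_cost}, which it invokes only in the proof of Lemma~\ref{lm:cost_preservation}), so making it explicit is a point in your favour rather than a deviation. The one place where you genuinely differ is GR3: the paper disposes of it by observing that Def.~\ref{def:wf_dep_queue} has no clause for $\gFix X. G$, so $\WF(\gFix X. G, \Queue)$ is underivable and the case is vacuous; your argument instead assumes that $\WF$ is determined by (i.e.\ closed under) unfolding, which the paper's inductive definition does not state. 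Since the hypothesis of the lemma cannot hold for a recursive head under the paper's definition, your reading does not endanger the conclusion, but as written that step appeals to a property of $\WF$ you have not established; the cleaner move is simply to note the case is impossible.
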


\begin{proof}
  By induction on the structure of $G \xrightarrow{\ell} G'$. See Appendix
\ref{app:global-cost-sound}.
\end{proof}

This lemma states that if $\Queue$ is well-formed with respect to $G$, and $G'$
results from taking a step in $G$, then the queue that results from
$\cost(\ell)(\Time, \Queue)$ is well formed with respect to $G'$.

\begin{restatable}[Cost Preservation]{lemma}{costPreserv}\label{lm:cost_preservation}
  If $G\xrightarrow{\ell} G'$, then $\cost(G')(\cost(\ell)(\Time, \Queue)) \leq
\cost(G)(\Time, \Queue)$.
\end{restatable}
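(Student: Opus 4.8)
The plan is first to \emph{strengthen} the statement so that it quantifies over every cost environment $\Time$ and dependency queue $\Queue$ with $\WF(G,(\Time,\Queue))$ --- a hypothesis that is implicit anyway, since the cost clauses for the in-flight constructs $\Rp\gMsgt\Rq$ pattern-match on the shape of $\Queue$, and which by Lemma~\ref{lm:step_wf} is carried along to $\WF(G',\cost(\ell)(\Time,\Queue))$. I would then induct on the derivation of $G\xrightarrow{\ell}G'$ and do a case analysis on the last rule. In the setting where this lemma is used (the soundness result, Theorem~\ref{thm:global-cost-sound}, works with $\unroll(G,\vec{k})$), $G$ is $\gFix$-free and none of the head or congruence rules reintroduces recursion, so [GR3] does not arise; the general case follows using that $\cost(\gFix X.G,k\cdot\vec{k})$ is \emph{defined} as $\cost(\unroll^k(X,G),\vec{k})$ with $\unroll^k(X,G)$ itself $\gFix$-free.

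For the head rules [GR1a], [GR1b], [GR2a], [GR2b], [GR2c] the two sides are computed directly from Definitions~\ref{def:cost_trace} and~\ref{def:global-cost}. For [GR1a], with $G=\Rp\gMsg\Rq\gTy{\tau\hasCost\ccc}.G_0$ and $\Queue$ having an empty $\Rp\Rq$-queue, $\cost(\Rp\Rq\tSend\tau)$ adds $\csend(\tau)$ to $\Time(\Rp)$ and appends the resulting value, call it $\ccc_\Rp$, to the $\Rp\Rq$-queue; feeding the result into the cost clause for $\Rp\gMsgt\Rq\gTy{\tau\hasCost\ccc}.G_0$ pops $\ccc_\Rp$ and performs $[\ccc_\Rp\mid\Rq\addcost\crecv(\tau)+\ccc]$, which --- because in the updated environment $\Time(\Rp)$ equals $\ccc_\Rp$ --- is exactly the single step $\Time[\Rp\addcost\csend(\tau)][\Rp\mid\Rq\addcost\crecv(\tau)+\ccc]$ taken by the cost clause for $G$; so the two sides are \emph{equal}. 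Cases [GR2a] and [GR2b] unfold to equalities as well, using that applying $[\ccc\mid\Rq\addcost\crecv(\tau)]$ and then $[\Rq\addcost\ccc]$ is the same as $[\ccc\mid\Rq\addcost\crecv(\tau)+\ccc]$; [GR2c] is immediate. The one head rule producing a strict inequality is [GR1b] (select): $\cost(G)$ of $\Rp\gMsg\Rq\{l_i.G_i\}_{i\in I}$ is a \emph{maximum} over $i\in I$, whereas after the selection of $l_j$ only the branch $G_j$ survives, and its cost is precisely one of those maximands, hence $\leq$.

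The congruence rules [GR4a], [GR4b], [GR5a], [GR5b], [GR5c] are discharged by the induction hypothesis, preceded by a \emph{commutation} step: the update that the cost clause for the prefix makes to $(\Time,\Queue)$ commutes with $\cost(\act)$. The key point is a ``footprint'' observation: the prefix clause writes only $\Time(\Rp)$ and/or $\Time(\Rq)$ (and, for an in-flight prefix, pops the head of the $\Rp\Rq$-queue), while $\cost(\act)$ writes only $\Time(\subj(\act))$ and, for a send, pushes at the tail of the queue whose sender is $\subj(\act)$, reading $\Time(\subj(\act))$ to compute the pushed value. The side conditions of the rules ($\Rp,\Rq\notin\subj(\act)$ for [GR4a] and [GR4b]; $\Rq\notin\subj(\act)$ for [GR5a] and [GR5c]; $\Rp\notin\subj(\act)$ for [GR5b]) make the written participants pairwise disjoint and ensure the one entry $\cost(\act)$ reads is not one the prefix writes; head-pop and tail-push on the same queue commute, and any other queue $\cost(\act)$ touches is distinct from the $\Rp\Rq$-queue. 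Hence the two updates commute \emph{exactly}, and the goal collapses to $\cost(G_0')(\cost(\act)(\Time^\ast,\Queue^\ast))\leq\cost(G_0)(\Time^\ast,\Queue^\ast)$ for the premise $G_0\xrightarrow{\act}G_0'$ and the prefix-updated $(\Time^\ast,\Queue^\ast)$, which is the induction hypothesis; [GR4b] additionally takes a pointwise maximum over the branch index, which is monotone.

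I expect the main obstacle to be exactly this commutation bookkeeping for the in-flight-message congruences [GR5a] and [GR5c]: there the prefix clause both mutates $\Time$ and dequeues from the $\Rp\Rq$-queue, while $\act$ --- which may have $\Rp$ (though not $\Rq$) as subject --- can itself be a send from $\Rp$ to $\Rq$ that enqueues on the \emph{same} $\Rp\Rq$-queue, so one must check that head-dequeue and tail-enqueue on that queue, together with the interleaved $\Time$-updates and the fact that the recorded value $\Time(\Rp)+\csend(\cdot)$ is unaffected by the dequeue, all line up. Once this case-precise analysis is in place the remaining work is routine unfolding plus elementary $\max$ and $+$ arithmetic; the complete proof is deferred to Appendix~\ref{app:global-cost-sound}.
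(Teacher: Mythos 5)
Your proposal is correct and follows essentially the same route as the paper's proof: an induction on the derivation under the implicit well-formedness hypothesis (which also rules out [GR3], matching the paper's dismissal via $\neg\WF(\gFix X.G,\Queue)$), direct unfolding for the head rules with the only inequality arising from the maximum in the choice case, and the induction hypothesis for the congruence rules after commuting the prefix update with $\cost(\act)$. Your ``footprint'' commutation observation is exactly what the paper factors out as Lemmas~\ref{lm:action_queue} and~\ref{lm:action_cost}, including the delicate head-pop/tail-push interaction on the $\Rp\Rq$-queue in [GR5a] and [GR5c].
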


\begin{proof}
  By induction on the structure of the derivation of $G \xrightarrow{\ell} G'$.
\end{proof}

This is the main lemma, that states that if $G$ transitions to $G'$ with action
$\ell$, then, given an initial cost/queue $(\Time, \Queue)$, the cost of $G'$ on
an initial cost after running $\ell$ on $(\Time, \Queue)$ is less or equal than
the cost of $G$ with initial cost $(\Time, \Queue)$. The reason why this cost is
less or equal, rather than equal, is that a branching may take a lower cost path
in the protocol.

\begin{restatable}[Bounded-Cost Soundness]{theorem}{globalcostsound}%
  \label{thm:global-cost-sound}
  If $\unroll(G, \vec{k}) \xrightarrow{\vec{\act}} \gEnd$, then
  $\cost(\vec{\act}) \leq \cost(G, \vec{k})$.
\end{restatable}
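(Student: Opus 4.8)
The plan is to prove the statement by induction on the length of the trace $\vec{\act}$, after two routine reductions. First I would observe that it suffices to treat the case where the global type is recursion-free: since the $\cost$ function resolves each $\gFix$ by the very same unrolling that appears in $\unroll(G,\vec{k})$, a straightforward structural induction on $G$ gives $\cost(G,\vec{k})([],[]) = \cost(\unroll(G,\vec{k}))([],[])$ (both sides fully unroll every recursion variable before doing any cost accounting, so the order in which nested $\gFix$'s are unrolled is irrelevant). Write $G_0 = \unroll(G,\vec{k})$, which contains no recursion variables and no intermediate constructs ($\gMsgt$, $\gEval$). Second, I would strengthen the induction hypothesis to allow an arbitrary well-formed initial cost environment and dependency queue, since a single transition does not preserve the property ``the initial state is empty''. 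Concretely, the claim to prove by induction is: for every recursion-free $G$ and every $(\Time,\Queue)$ with $\WF(G,(\Time,\Queue))$, if $G \xrightarrow{\vec{\act}} \gEnd$ then $\cost(\vec{\act})(\Time,\Queue) \leq \cost(G)(\Time,\Queue)$.

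For the induction itself, the base case $\vec{\act} = \epsilon$ forces $G = \gEnd$, and both $\cost(\epsilon)(\Time,\Queue)$ and $\cost(\gEnd)(\Time,\Queue)$ equal $(\Time,\Queue)$, so the inequality holds with equality. In the inductive case $\vec{\act} = \ell \cdot \vec{\act}'$ with $G \xrightarrow{\ell} G' \xrightarrow{\vec{\act}'} \gEnd$, I would apply Lemma~\ref{lm:step_wf} to obtain $\WF(G', \cost(\ell)(\Time,\Queue))$, so the induction hypothesis applies to $G'$ with initial state $\cost(\ell)(\Time,\Queue)$ and yields $\cost(\vec{\act}')(\cost(\ell)(\Time,\Queue)) \leq \cost(G')(\cost(\ell)(\Time,\Queue))$. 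Lemma~\ref{lm:cost_preservation} gives $\cost(G')(\cost(\ell)(\Time,\Queue)) \leq \cost(G)(\Time,\Queue)$, and by Definition~\ref{def:cost_trace} the left-hand side of the first inequality is exactly $\cost(\ell \cdot \vec{\act}')(\Time,\Queue)$; chaining the two inequalities by transitivity of the order on cost environments closes the case.

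Finally I would instantiate the strengthened claim with $G \coloneqq G_0$ and $(\Time,\Queue) \coloneqq ([],[])$. The empty dependency queue is well-formed for $G_0$ because $G_0$ is a source global type: it matches the clauses of Definition~\ref{def:wf_dep_queue} for $\gEnd$, for $\Rp \gMsg \Rq \gTy{\tau \hasCost \ccc}. G$, and for choices, all of which require exactly empty queues and impose no $\gMsgt$/$\gEval$ obligations. Combining this with the first reduction, $\cost(\vec{\act}) = \cost(\vec{\act})([],[]) \leq \cost(G_0)([],[]) = \cost(G,\vec{k})$, which is the theorem.

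As for the main obstacle: essentially all of the technical weight sits in Lemma~\ref{lm:cost_preservation} --- the claim that one LTS step can only decrease (never increase) the global-type cost estimate, the decrease happening for instance when a branching commits to a cheaper continuation --- which I am assuming here. Within the present argument the only points needing genuine care are (a) the bookkeeping that $\cost$'s built-in unrolling agrees with $\unroll(G,\vec{k})$, so that the reduction to the recursion-free case is legitimate, and (b) verifying via Lemma~\ref{lm:step_wf} that the dependency queue stays well-formed along the whole trace, so that every intermediate expression $\cost(G', \cdot)$ appearing in the induction is in fact defined (the dependency queue is consulted precisely at the $\gMsgt$ constructs). Both are routine given the definitions.
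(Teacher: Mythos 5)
Your proposal matches the paper's own proof essentially step for step: the same generalised statement over an arbitrary well-formed $(\Time,\Queue)$, the same induction on the length of the trace using Lemma~\ref{lm:step_wf} to propagate well-formedness and Lemma~\ref{lm:cost_preservation} for the per-step inequality, and the same final specialisation to $G=\unroll(G,\vec{k})$ with empty initial state. The only difference is that you spell out the bookkeeping (agreement of $\cost$'s built-in unrolling with $\unroll(G,\vec{k})$, well-formedness of the empty queue for source global types) that the paper leaves implicit, which is fine.
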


\begin{proof}
We prove the following generalised statement.  If $G \xrightarrow{\vec{\ell}}
\gEnd$ and $\WF(G, (\Time, \Queue))$ then $\cost(\vec{\ell}) \leq \cost(G)$. To
recover the original statement, we need to specialise this statement with $G =
\unroll(G',\vec{k})$, and $\Queue=[]$.  By induction on the length of
$\vec{\ell}$:

\noindent
\underline{\textbf{Case} $\vec{\ell}=\epsilon$}: $G \xrightarrow{\epsilon} G'$
implies that $G = G'$, therefore $G = \gEnd$. $\cost(\epsilon) (\Time, \Queue) =
\cost(\gEnd) (\Time, \Queue) = (\Time, \Queue)$.

\noindent
\underline{\textbf{Case} $\vec{\ell}=\ell_1\cdot\vec{\ell'}$}: $G
\xrightarrow{\ell_1\cdot\vec{\ell'}} G'$ implies that there is a $G''$ s.t. $G
\xrightarrow{\ell_1} G'' \xrightarrow{\vec{\ell'}} G'$. By Lemma
\ref{lm:step_wf}, we know that $\WF(G'',(\cost(\ell_1)(\Time,
\Queue)))$. Therefore, by the IH, $\cost(\vec{\ell'})(\cost(\ell_1)(\Time,
\Queue)) \leq \cost(G'', \cost(\ell_1)(\Time, \Queue))$.  The proof is completed
by Lemma \ref{lm:cost_preservation}, that allows us to derive that $\cost(G'',
\cost(\ell_1)(\Time, \Queue)) \leq \cost(G)(\Time, \Queue)$.
\end{proof}

%

\section{Cost for Multiparty Session Protocols (2): Latency of Recursion}
\label{sec:thro}

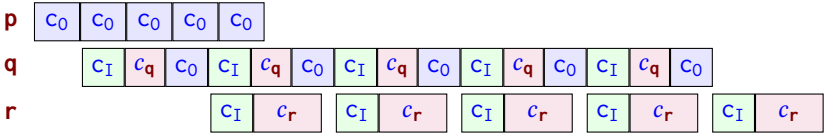
\begin{figure}
    \begin{tikzpicture}
        [ step1/.style={rectangle,minimum height=0.5cm, minimum width=.6cm,draw},
          step2/.style={rectangle,minimum height=0.5cm ,minimum width=.3, draw},
          step3/.style={rectangle,minimum height=0.5cm ,minimum width=.3, draw}
        ]
      \node[minimum width=0.6cm] (p0) at (0, 1.2) { $\Rp$ } ;
      \node[minimum width=0.6cm] (q0) at (0,  .6) { $\Rq$ } ;
      \node[minimum width=0.6cm] (r0) at (0, 0  ) { $\Rr$ } ;

      \node[step1, right=0cm of p0, fill=blue!10] (p1) { $\csend$ };
      \foreach \i [evaluate=\i as \next using int(\i+1)] in {1,...,4}{%
        \node[step1, right=0cm of p\i, fill=blue!10] (p\next) { $\csend$ };
      };

      \node (q1) [right=.39cm of q0] {};
      \foreach \i
        [evaluate=\i as \next using int(\i+1)] in {1,...,5}{%
        \node[step2, right=0cm of q\i, fill=green!10] (q1\next) { $\crecv$ };
        \node[step2, right=0cm of q1\next, fill=purple!10] (q2\next) { $\ccc_\Rq$ };
        \node[step2, right=0cm of q2\next, fill=blue!10] (q\next) { $\csend$ };
      };

      \node (r1) [right=1cm of r0] {};
      \foreach \i
        [evaluate=\i as \next using int(\i+1)] in {1,...,5}{%
        \node[step3, right=1.1cm of r\i, fill=green!10] (r\next) { $\crecv$ };
        \node[step3, minimum width=0.9cm, right=0cm of r\next, fill=purple!10] {$\ccc_\Rr$};
      };

    \end{tikzpicture}
  \caption{Latency of a parallel pipeline.}
  \label{fig:pipe-thro}
\end{figure}

Previous section presented cost models for multiparty session protocols with
bounded recursion. In this section, we extend the cost models for multiparty
session protocols with two notions:
\begin{enumerate}
  \item The \emph{average cost per iteration} of a protocol, which we call
    \textbf{\textsl{latency}} ($\costw(G)$).
  \item The \textbf{\textsl{latency relative to $\role{p}$}}, denoted by
    $\costw_{\role{p}}(G)$, as the latency of a global type, divided by the number of
    messages exchanged by participant $\role{p}$ per iteration.
\end{enumerate} These cost models are useful for scenarios where we do not know
how many iterations the protocol is going to run, or this number of iterations
is large. For example, consider the protocol for a parallel program following a
master-worker pattern, where the master ($\role{m}_1$) distributes a stream of
tasks to a series of workers ($\role{w}_i$), and then collects the results
($\role{m}_2$):
\[
  \small
\begin{array}{@{}l@{}}
mw(n) = \gFix X.
\role{m}_1 \gMsg \{\role{m}_2, \role{w}_1, \ldots, \role{w}_n \}
  \\ \hspace{2cm}
\left\{
  \begin{array}{@{}l@{}}
    \aInj{1}. \;
    \begin{array}[t]{@{}l@{}}
      \role{m}_1 \gMsg \role{w}_1 \gTy{\tau \hasCost \ccc}.
      \ldots
      \role{m}_1 \gMsg \role{w}_n \gTy{\tau \hasCost \ccc}.
      \role{w}_n \gMsg \role{m}_2 \gTy{\tau \hasCost \ccc}.
      \ldots
      \role{w}_1 \gMsg \role{m}_2 \gTy{\tau \hasCost \ccc}.
      X
    \end{array}
    \\
    \aInj{2}. \; \gEnd
  \end{array}
\right\}
\end{array}
\]
\noindent
When such protocols are run in practice, they are aimed at speeding up
some computation on a large, potentially unbounded, stream of tasks. Therefore,
computing $\cost(mw(n), k)$ can be computationally very expensive, or impossible if
$k$ is unknown. These scenarios are where the average cost per iteration, or
latency ($\costw(G)$) is more useful. The key property of $\costw(G)$ is that
approximates to $\cost(G, k)/k$ as $k$ grows.  In the protocol above, it is
clear that $\costw(mw(n)) > \costw(mw(m))$ if $n > m$, due to the greater number
of interactions that involve $\role{m}_1$ and $\role{m_2}$ per
iteration. However, unless the cost of the extra interactions outweigh the cost
of a computation performed by $\role{w}_i$, it is preferable to use $mw(n)$ than
$mw(m)$, subject to the available resources. This is where the latency
relative to a participant is useful, since it provides a better
measurement about how fast is $mw(n)$ processing tasks.

We explain the intuition behind $\costw(G)$ and $\costw_{\role{p}}(G)$ using the
master-worker protocol, and $\cost(mw(n), k)$ from \S \ref{sec:cost}. For
simplicity, we omit branching:
\[
\small
\begin{array}{ll}
mw(n) = \gFix X.\!\!\!&
      \role{m}_1 \gMsg \role{w}_1 \gTy{\tau \hasCost \ccc}.
      \ldots
      \role{m}_1 \gMsg \role{w}_n \gTy{\tau \hasCost \ccc}.
      \role{w}_n \gMsg \role{m}_2 \gTy{\tau \hasCost \ccc}.
      \ldots
      \role{w}_1 \gMsg \role{m}_2 \gTy{\tau \hasCost \ccc}.
      X
\end{array}
\]
\noindent
To simplify the calculations, we also assume that all $\role{w}_i$ only do their
actions after $\role{m}_1$ has finished sending all tasks, although the cost
models in \S \ref{sec:cost} will predict a lower cost for $\role{w_i}$ than
$\role{w_j}$ if $i < j$, since they only need to wait until their required data
has been sent. We start with $mw(2)$ and $k=1$:

\noindent
{
\small
$
\delimiterfactor=500
\delimitershortfall=50pt
\cost(mw(2), 1)
\leq
\left[
\begin{array}{@{}l@{\;}l@{}}
  \role{m_1} & \mapsto 2 \times \overbracket[.5pt][2pt]{\csend(\tau_1)}^{\Time_1};
  \role{w_i} \mapsto 2 \times \Time_1 + \overbracket[.5pt][2pt]{\crecv(\tau_1) + \ccc_1 + \csend(\tau_2)}^{\Time};
  \role{m_2} \mapsto 2 \times \Time_1 + \Time + 2 \times \underbracket[0.5pt][2pt]{(\crecv(\tau_2) + \ccc_2)}_{\Time_2}
\end{array}
\right]
$
}

\noindent
We named the relevant parts as $\Time_1$, $\Time$ and $\Time_2$ for readability's
sake.
%
%
For any arbitrary $k>1$, we compute the cost $\cost(mw(2), k)$ as:
\[
  \small
\cost(mw(2), k)
\!\! \leq \!\!
\left[
  \begin{array}{@{}l@{\;}l@{}}
  \role{m_1} & \mapsto 2 \times \Time_1 + (k - 1) \times 2 \times \Time_1;
  \role{w_i} \mapsto 2 \times \Time_1 + \Time + (k - 1) \times \max(2 \times \Time_1, \Time);
  \\
  \role{m_2} & \mapsto 2 \times \Time_1 + \Time + 2 \times \Time_2 +
  (k - 1) \times \max(2 \times \Time_1, \Time, 2 \times \Time_2)
\end{array}
\right]
\]
There are two parts in this cost that can be distinguished, the \emph{fixed}
cost that corresponds to the initial and final stages of the protocol:
$ [
  \role{m_1} \mapsto 2 \times \Time_1; \;
  \role{w_i} \mapsto 2 \times \Time_1 + \Time; \;
  \role{m_2} \mapsto 2 \times \Time_1 + \Time + 2 \times \Time_2]
$, and the \emph{latency}, which is the cost that increases the more
iterations we take.
%
%
In general, for an arbitrary $n$, the latency is:
\[
\small
\begin{array}{ll}
  \costw(mw(n))
  = &
  [
    \role{m_1} \mapsto n \times \Time_1; \;
    \role{w_i} \mapsto \max(n \times \Time_1, \Time);
    \role{m_2} \mapsto \max(n \times \Time_1, \Time, n \times \Time_2)
  ]
\end{array}
\]
\noindent
If we keep increasing the number of workers, the latency will indicate a
greater cost. However, in this particular protocol what matters is the cost
\emph{per message interaction of} $\role{m}_i$, which are the workers that
respectively distribute tasks and collect
the results. We use $\costw_{\role{m_2}}(mw(n)) = \costw(mw(n)) / i$,
where $i$ is the number of message exchanged by $\role{m_2}$ per iteration:
\[
\small
  \costw_{\role{m}_2}(mw(n))
  =
  \left[
    \begin{array}{@{}l@{}}
    \role{m_1} \mapsto \Time_1; \;
    \role{w_i} \mapsto \max(\Time_1, \Time / n);
    \role{m_2} \mapsto \max(\Time_1, \Time / n, T_2)
    \end{array}
  \right]
\]
\noindent Since $\costw_{\role{m}_2}(mw(n))$ is less than
$\costw_{\role{m}_2}(mw(m))$ if $m < n$, then the latency relative to
$\role{m_2}$ is a
better measurement to compare how \emph{fast} a protocol processes tasks. In the
remainder of this section, we define $\costw$ and $\costw_{\role{p}}$; and prove
that they approximate $\cost(G, k)$ for a sufficiently large $k$.

\subsection{Latency of Nested Recursive Protocols}
\label{sec:nested}
The master-worker protocol above contains only one recursion variable. In
general, recursive protocols can have multiple nested recursive
sub-protocols. Intuitively, to compute $\costw$ of $\mu X. G$, we need to
estimate the total execution time of a single iteration of $G$. If $G$ contains
recursive sub-protocols, this implies that we need to know how many iterations
they will run, before recursive variable $X$ is found. We illustrate this with
the recursive global type below:
\[
\small
\begin{array}{@{}l@{}}
  G = \mu X . \; \Rp \gMsg \Rq \gTy{\tau \hasCost \ccc_1} .
  \mu Y. \; \Rq \gMsg \Rp \left\{l_1. \; Y; \; l_2. \; X \right\}
\end{array}
\]
\noindent
To compute $\costw(G)$, we need to know how many times the branch that ends in
recursion variable $Y$ will be taken. Since this depends on the particular
implementation of the protocol, we parameterise such recursion variables with
some $k$, and defer its instantiation. To produce an equation to estimate the
latency that is parametric in this $k$, we split the protocol $G$ into two
sub-protocols:
\[
\small
\begin{array}{@{}l@{}}
  G_Y = \mu Y . \; \Rq \gMsg \Rp \left\{l_1. \; Y; \; l_2. \; Z \right\}
\quad
  G_X = \mu X . \; \Rp \gMsg \Rq \gTy{\tau \hasCost \ccc_1} . [X/Z] G_Y
\end{array}
\]
\noindent
If $\costw(G_Y)$ contains another recursion variable, then we keep splitting it
until we have a set of global types, each of which defined using at most one
bound recursion variable.  To compute $\costw(G_X)$ we require a parameter $k$,
and we will use $k \times \costw([\gEnd/Z] G_Y)$ for the cost of any participant
in the inner sub-protocol. In the remainder of this section, we focus on
recursive protocols with at most one recursion variable.

\subsection{Cost Recurrences}
\label{sec:recurrences}

Computing $\costw$ is done in two steps. First, we build a system of
recurrence equations, $\Time(n)$, that capture the execution costs after $n$
iterations of the recursive protocol. Then, we build the difference equations
$\Delta(n)$, where $\Delta(n)(\Rp) = \Time(n+1)(\Rp) - \Time(n)(\Rp)$, and
estimate the value of $\Delta(n)$, as $n$ grows. We observe that, for the
recurrences that we generate, with $n \geq 2$, $\Delta(n)$ stabilises.

\begin{definition}[Cost Recurrences]\label{def:cost-recur}
  We use $\cost(G)$ from Def. \ref{def:global-cost}.
  Given a recursive global type $\gFix X. G$, we define its cost recurrence, $\Time(n)$, as follows:
  $\Time(n+1) = \cost([\gEnd/X]G, \Time(n))$, $\Time(0) = []$.
\end{definition}

Consider the following parallel pipeline:
\[
\small
\begin{array}{@{}l@{}}
  G = \mu Y . \;
  \Rp \gMsg \Rq \gTy{\tau_1 \hasCost \ccc_1}. \;
  \Rq \gMsg \Rr \gTy{\tau_2 \hasCost \ccc_2}. \;
  Y
  \end{array}
\]
We show below an example of the system of recurrence equations that we
generate. We take the resulting cost environment, and we produce a different
equation $\Time_\Rp$, $\Time_\Rq$, and $\Time_\Rr$ for every participant in the
protocol:
\[
\small
\begin{array}[t]{@{}r@{\;}l@{}}
  \Time_\Rp(n+1)
   = \Time_\Rp(n) + \csend(\tau_1) \quad
  \Time_\Rq(n+1)
  & = \max (\Time_\Rp(n+1), \Time_\Rq(n)) + \crecv(\tau_1) + \ccc_1 + \csend(\tau_2)
  \\
  \Time_\Rr(n+1)
  & = \max (\Time_\Rq(n+1), \Time_\Rr(n)) + \crecv(\tau_2) + \ccc_2
  \end{array}
\]

\begin{definition}[Cost Difference Equations]\label{def:delta}
  Given a recursive global type $\gFix X. G$, with cost recurrence $\Time$, we
define its cost difference equation $\Delta$, as $\Delta(n) = \Time(n+1) -
\Time(n)$.
\end{definition}

The cost difference equation provides an estimate on how much the cost increases
for each participant after running the protocol one additional iteration.

\begin{definition}[Latency per Iteration]\label{def:latency}
  The latency of a recursive protocol $\gFix X. G$ with cost difference
  $\Delta(n)$ is defined as the cost expression $\ccc$ that is the least upper bound
  of the difference equation $\Delta(n)$, for a sufficiently large $n$:
  \[
  \begin{array}{@{}l@{}}
    \costw(\mu X. G) =
    \ccc \quad \text{s.t.}\; \exists k, \forall n \geq k, \; \ccc \geq \Delta(n)
  \end{array}
\]
\end{definition}
Suppose that we want to compute the latency of the previous parallel pipeline.
On average, excluding the initialisation of the protocol, the latency for
$\Rr$ must be the maximum of the times for $\Rp$, $\Rq$ and $\Rr$, as usual in
parallel pipelines. This is because the actions of $\Rp$, $\Rq$ and $\Rr$ are
independent \emph{across iterations}.
The solution of $\Delta(0)$ shows that the cost is the addition of all
individual costs. However, by solving $\Delta(1)$, we obtain the expected
result, where $\Time_\Rp = \csend(\tau_1)$, $\Time_\Rq = \max(\Time_\Rp, \crecv(\tau_1) +
\ccc_1 + \csend(\tau_2))$, and $\Time_\Rr = \max(\Time_\Rq, \crecv(\tau_2) + \ccc_2)$.

When the actions of a recursive protocol are not independent across iterations,
i.e.\ the send/receive dependency graph forms a cycle, then all participants
will need to synchronise. An example of this is the protocol:
\[
\small
\begin{array}{@{}l@{}}
  G = \mu Y . \;
  \Rp \gMsg \Rq \gTy{\tau_1 \hasCost \ccc_1}. \;
  \Rq \gMsg \Rp \gTy{\tau_2 \hasCost \ccc_2}. \;
  Y
\end{array}
\]
\noindent
In the first iteration, we will have that $\Rp$ sends $\tau_1$ to $\Rq$, which
needs to wait for the message, and then takes $\ccc_1$ time. At this point, we have
that $\Rp$ spent $\csend(\tau_1)$, and $\Rq$ took $\csend(\tau_1) +
\crecv(\tau_1) + \ccc_1$. Next, $\Rq$ sends $\tau_2$ to $\Rp$, which is completed
after $\csend(\tau_1) + \Time_{\role{q}}$, where $\Time_{\role{q}} = \crecv(\tau_1) + \ccc_1 +
\csend(\tau_2)$. Then, $\Rp$ needs to receive $\tau_2$ and take $\ccc_2$ of local
computation time. Since the accumulated time by $\Rp$ is $\csend(\tau_1) <
\csend(\tau_1) + \Time_{\role{q}}$, we increase the total time spent by $\Rp$: $\Time_{\role{q}} + \Time_{\role{p}}$,
where $\Time_{\role{p}} = \csend(\tau_1) + \crecv(\tau_2) + \ccc_2$.  In the next iteration, we
have that $\Rp$ takes $\Time_{\role{p}} + \Time_{\role{q}} + \csend(\tau_1)$.  Next, $\Rq$ takes
$\max(\csend(\tau_1) + \Time_{\role{q}}, \Time_{\role{p}} + \Time_{\role{q}} + \csend(\tau_1)) + \Time_{\role{q}} = \Time_{\role{p}} + \Time_{\role{q}} +
\csend(\tau_1) + \Time_{\role{q}}$, and finally $\Rp$ will take $2 \times (\Time_{\role{p}} + \Time_{\role{q}})$. After
$k$ iterations, the cost for $\Rp$ is $k \times (\Time_{\role{p}} + \Time_{\role{q}})$,
while the cost for $\Rq$ is $(k-1) \times (\Time_{\role{p}} +
\Time_{\role{q}}) + \csend(\tau_1) + \Time_{\role{q}}$, which approximates $\Time_{\role{p}} + \Time_{\role{q}}$.

\begin{definition}[Latency with respect to $\Rp$]
  We define $\costw_\Rp(\gFix X. G) = \costw(\gFix X. G)(\Rp) / \mathsf{count}(\Rp,
G)$, where $\mathsf{count}(\Rp, G)$ is the number of interactions of $G$ in
which $\Rp$ occurs.
\end{definition}

\subsection{Correctness}
We guarantee that the latency correctly approximates the bounded cost of a
protocol. Moreover, given an arbitrary trace that is the
result of a $k$-unrolling of a recursive global type $G$, $k \times \costw(G)$
will approximate the cost of the full trace.

\begin{restatable}[Cost Latency Correspondence]{theorem}{costThro}\label{thrm:cost-thro}
  Given a sufficiently large $k_2$, for all $k_1 > k_2$, $\cost(G, k_1) - \cost(G, k_2) \leq (k_1 - k_2) \times \costw(G)$.
\end{restatable}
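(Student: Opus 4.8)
The plan is to reduce the statement to two pieces: a \emph{bridge lemma} identifying $\cost(G,k)$ with the $k$-th term of the recurrence $\Time$ of Definition~\ref{def:cost-recur}, and then a routine telescoping of $\Time(k_1)-\Time(k_2)$ against $\Delta$ and hence against $\costw(G)$. Write $G=\gFix X.\,G'$; as in \S\ref{sec:nested} I assume $G'$ contains only the single recursion variable $X$, the nested case following by the splitting described there. Two structural facts about the syntactic cost function of Definition~\ref{def:global-cost} will be used, each proved by induction on the type argument (including the intermediate constructs $\gMsgt$, $\gEval$): it is \emph{inflationary}, $\cost(K)(\Time,\Queue)\geq(\Time,\Queue)$ per participant, since every clause only extends a participant's cost via $[\cdot\mid\cdot\addcost\cdot]$; and it is \emph{monotone} in its cost environment, $\Time\leq\Time'\Rightarrow\cost(K)(\Time,\Queue)\leq\cost(K)(\Time',\Queue)$, since $\max$ and $+$ are monotone.

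For the bridge lemma I would prove, by induction on $k$, the slightly more general $\cost(\unroll^k(X,G'))(\Time,\Queue)=\Time_{(\Time,\Queue)}(k)$, where $\Time_{(\Time,\Queue)}$ denotes the recurrence of Definition~\ref{def:cost-recur} started from $(\Time,\Queue)$ instead of from $[]$; instantiating $(\Time,\Queue)=([],[])$ then gives $\cost(G,k)=\Time(k)$ by the $\gFix$-clause of Definition~\ref{def:global-cost}. The inductive step rests on a \emph{substitution lemma}: for every $X$-free, $\gEnd$-terminated $M$,
\[
  \cost([M/X]G')(\Time,\Queue)=\cost(M)\bigl(\cost([\gEnd/X]G')(\Time,\Queue)\bigr),
\]
i.e.\ ``run one iteration of the body, then $M$'' computes exactly the cost of the one-step unfolding. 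Granting this, the step is $\cost(\unroll^{k+1}(X,G'))(T)=\cost([\unroll^k(X,G')/X]G')(T)=\cost(\unroll^k(X,G'))\bigl(\cost([\gEnd/X]G')(T)\bigr)=\Time_{\cost([\gEnd/X]G')(T)}(k)=\Time_T(k{+}1)$, using the hypothesis and that $\cost([\gEnd/X]G')(T)=\Time_T(1)$ and that the recurrence composes.

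With the bridge lemma in hand the rest is arithmetic on cost environments. By Definition~\ref{def:delta}, $\Time(n+1)=\Time(n)+\Delta(n)$ per participant, so for $k_1>k_2$ we have $\Time(k_1)-\Time(k_2)=\sum_{n=k_2}^{k_1-1}\Delta(n)$. By Definition~\ref{def:latency} there is a threshold $k^{*}$ with $\Delta(n)\leq\costw(G)$ for every $n\geq k^{*}$; choosing $k_2\geq k^{*}$ --- this is the ``sufficiently large $k_2$'' of the statement --- makes every summand bounded by $\costw(G)$, so $\Time(k_1)-\Time(k_2)\leq(k_1-k_2)\times\costw(G)$. Combining with $\cost(G,k_i)=\Time(k_i)$ gives $\cost(G,k_1)-\cost(G,k_2)\leq(k_1-k_2)\times\costw(G)$, as required.

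I expect the bridge lemma --- and inside it the branching case of the substitution lemma --- to be the main obstacle. Branching is handled by $\cost$ via a per-participant $\max$ over branches, and one must show that ``run the body, then the rest'' reproduces the cost of the fully substituted unrolling \emph{exactly}, not merely up to an over-approximation; this requires using that all branches receive the same post-selection cost environment, together with well-formedness of $G$ (mergeability of the projections), to argue that the branch re-entering the loop dominates, so the nested maxima collapse consistently. A secondary nuisance is threading the dependency queue $\Queue$ through these equalities, though for the top-level statement the queue starts empty and remains suitably controlled (cf.\ Definition~\ref{def:wf_dep_queue} and Lemma~\ref{lm:step_wf}), so it does not affect the essential reasoning.
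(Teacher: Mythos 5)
Your proposal takes essentially the same route as the paper: your bridge lemma is the paper's Proposition~\ref{prop:recur} ($\cost(\mu X.G,n)=\Time(n)$), and the remainder is the paper's argument that, by Definitions~\ref{def:delta} and~\ref{def:latency}, the per-participant difference $\Time(k_1)-\Time(k_2)$ is bounded by $(k_1-k_2)\times\costw(G)$ once $k_2$ exceeds the threshold where $\costw(G)\geq\Delta(n)$. Your telescoping with the inequality $\Delta(n)\leq\costw(G)$ (instead of the paper's sketched identity $k\times\Delta(n)=\Time(n+k)-\Time(n)$, which relies on the observed stabilisation of $\Delta$) and your explicit substitution/composition lemma behind the bridge are more careful renderings of steps the paper leaves implicit, not a different decomposition.
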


This result follows directly from our definition of $\costw$, since the latency
approximates $\Delta(n)$ (with $\Delta(n) = \Time(n+1) - \Time(n)$) for a
sufficiently large $n$, and that $\Time(n+1)$ is the recurrence that
approximates $\cost(G, n)$. We need to show that $k \times \Delta(n) =
\Time(n+k) - \Time(n)$, and then take $k_2 = n$ and $k_1 = n + k$.

\begin{proposition}\label{prop:recur}
  Given $\mu X. G$, let $\Time(n+1) = \cost([\gEnd/X] G, \Time(n))$ and $\Time(0)=[]$. Then,
$\cost(\mu X. G, n) = T(n)$.
\end{proposition}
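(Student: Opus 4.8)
The plan is to prove the statement by induction on $n$, reducing the whole argument to a single substitution lemma about how $\cost$ interacts with one unrolling step. Recall that, by Definition~\ref{def:global-cost} together with the conventions introduced after it, $\cost(\mu X. G, n)$ abbreviates $\cost(\unroll^n(X,G))([],[])$, where $\unroll^{n+1}(X,G) = [\unroll^n(X,G)/X]G$ and $\unroll^0(X,G) = \gEnd$, and where $\cost(\gEnd)$ is the identity on $(\Time,\Queue)$. Since each $\unroll^n(X,G)$ is a proper (non-intermediate) global type --- it never contains the auxiliary constructs $\gMsgt$ or $\gEval$ used in the semantics --- the dependency queue is merely threaded through unchanged, so throughout the argument it suffices to track the cost-environment component and treat $\cost(G)$ as a function on $(\Time,\Queue)$.

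The crux is the following \textbf{substitution lemma}: for every non-intermediate $G$ with $X$ its only (possibly) free recursion variable, every closed non-intermediate $H$, and every $(\Time,\Queue)$,
\[
  \cost([H/X]G)(\Time,\Queue) \;=\; \cost(H)\bigl(\cost([\gEnd/X]G)(\Time,\Queue)\bigr).
\]
I would prove this by structural induction on $G$. The case $G = X$ is immediate: both sides equal $\cost(H)(\Time,\Queue)$, because $\cost(\gEnd)$ is the identity; the case $G = \gEnd$ is trivial since the substitutions do nothing. The prefix case $G = \role{p}\gMsg\role{q}\gTy{\tau \hasCost \ccc}. G'$ follows directly from the induction hypothesis, because the environment update $\Time[\role{p}\addcost\csend(\tau)][\role{p}\mid\role{q}\addcost\crecv(\tau)+\ccc]$ that $\cost$ performs for the prefix is syntactically the same on both sides and leaves the queue untouched.

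The \textbf{branching case} $G = \role{p}\gMsg\role{q}\{l_i. G_i\}_{i\in I}$ is the step I expect to be the main obstacle, since here $\cost$ takes a per-participant maximum over the branches and $X$ need not occur in every $G_i$, so replacing $X$ by $H$ rather than by $\gEnd$ changes some branches and not others. To make the induction go through I would first record two routine structural-induction facts about $\cost$ on non-intermediate types: it is \emph{monotone} in the cost environment, and it is \emph{inflationary}, i.e.\ $\cost(G)(\Time)\ge\Time$ (using that the annotations $\csend,\crecv,\ccc$ denote non-negative quantities); both reduce to the fact that the primitive update $\Time[\ccc'\mid\role{r}\addcost\ccc]$ is monotone, inflationary, and commutes with pointwise $\max$. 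Using that $\cost(H)$ commutes with pointwise $\max$, the right-hand side of the lemma becomes $\max_{i}\cost(H)\bigl(\cost([\gEnd/X]G_i)(\Time')\bigr)$; for branches $G_i$ containing $X$ this matches the left-hand side by the induction hypothesis, and for branches $G_i$ not containing $X$ one must check that the extra outer $\cost(H)$ does not enlarge the maximum. This is automatic when $X$ occurs in every branch (the non-terminating protocols of this section), and more generally holds whenever the non-recursive branches incur no further cost past the choice point --- as in the master--worker pattern here, where the $l_2$-branch ends immediately --- so that their contribution is already dominated by a recursive branch's, to which monotonicity of $\cost(H)$ then applies. Granting the substitution lemma, a clean induction on $n$ (instantiating it with $H = \unroll^n(X,G)$ and using the induction hypothesis) gives $\cost(\unroll^n(X,G)) = \cost([\gEnd/X]G)^{n}$ as functions on $(\Time,\Queue)$, the $n$-fold iterate; evaluating at $([],[])$ yields exactly the recurrence $\Time(n)$ of Definition~\ref{def:cost-recur}, which is the claim.
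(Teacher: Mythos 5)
Your proof is correct for the class of protocols the recurrence is intended for, and it follows the same top-level route as the paper: induction on $n$, with the inductive step being the identity that composes one body-unrolling with the cost of the remaining unrollings. The paper's own proof is exactly that two-line induction, asserting $\cost([\gEnd/X]G, \cost(\mu X. G, m)) = \cost(\mu X. G, m+1)$ without further argument; your substitution lemma $\cost([H/X]G)(\Time,\Queue) = \cost(H)(\cost([\gEnd/X]G)(\Time,\Queue))$ is precisely what that step tacitly relies on, so your version is a more explicit rendering of the same argument rather than a genuinely different one. Your reservation about the branching case is also legitimate, and it applies to the paper's statement as much as to your lemma: if $X$ is absent from a branch whose continuation incurs cost, the right-hand side re-processes that branch's contribution through $\cost(H)$ while the left-hand side does not, so the equality degrades to an inequality and $\Time(n)$ then only over-approximates $\cost(\mu X. G, n)$. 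Both proofs therefore implicitly restrict attention to recursion bodies whose relevant branches all return to $X$ (the pipelines and master--worker protocols of \S\ref{sec:thro}); the paper does so silently, while you at least name the side condition.
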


\begin{proof}
  By induction on $n$, the base case is straightforward: $\Time(0) = \cost(\mu
  X. G, 0) = \cost(\gEnd) = []$. If $n = m + 1$, then $\Time(m + 1) = \cost([\gEnd/X] G, \Time(m))
  = \cost([\gEnd/X] G, \cost(\mu X. G, m)) = \cost(\mu X. G, m + 1)$.
\end{proof}

Proposition \ref{prop:recur} states that given a recursive protocol,
instantiating its recurrence with some number $n$ yields the same cost as
unrolling the protocol $n$ times and computing its cost. We use Proposition
\ref{prop:recur} in combination with Definition \ref{def:delta} to derive the
following. Assume $\Delta$ is the difference equation for recursive protocol
$G$. Then, the following equality holds:
\begin{equation}\label{eqn:recur}
\Delta(n) = \cost(G, n + 1) - \cost(G, n)
\end{equation}

\begin{restatable}[Latency Soundness]{theorem}{throSoundness}\label{thm:thro-sound}
  There exists $k'$ such that for all $k$, if $\unroll^{k}(G) \xrightarrow{\vec{\act}}\gEnd$,
  then $\cost(\vec{\act}) \leq k \times \costw(G) + k'$.
\end{restatable}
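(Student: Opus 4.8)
\textbf{Proof plan for Theorem~\ref{thm:thro-sound} (Latency Soundness).}
The plan is to push the statement down to a bound on the static global-type cost $\cost(G,k)$, and then control the growth of that cost using Theorem~\ref{thrm:cost-thro} and the defining property of $\costw$. Throughout I read $\le$, $+$ and $\times k$ on cost environments pointwise, and I restrict to protocols with a single recursion variable: by the splitting construction of \S\ref{sec:nested} the general nested case reduces to this one, and then the parameter vector $\vec k$ collapses to the single $k$ appearing in $\unroll^{k}(G)$.

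First I would invoke Theorem~\ref{thm:global-cost-sound} (Bounded-Cost Soundness): from $\unroll^{k}(G)\xrightarrow{\vec\act}\gEnd$ we obtain $\cost(\vec\act)\le\cost(G,k)$, so it suffices to exhibit a cost environment $k'$, independent of $k$ and of $\vec\act$, with $\cost(G,k)\le k\times\costw(G)+k'$. Let $k_2$ be the ``sufficiently large'' threshold provided by Theorem~\ref{thrm:cost-thro}, which depends only on $G$, and set $k'=\cost(G,k_2)$. For every $k>k_2$, Theorem~\ref{thrm:cost-thro} gives $\cost(G,k)-\cost(G,k_2)\le(k-k_2)\times\costw(G)$, and since all cost expressions are non-negative this yields $\cost(G,k)\le\cost(G,k_2)+k\times\costw(G)=k\times\costw(G)+k'$, as required. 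The mechanism behind this step is exactly Proposition~\ref{prop:recur} ($\cost(G,k)=\Time(k)$) together with Equation~\eqref{eqn:recur} ($\Delta(n)=\cost(G,n+1)-\cost(G,n)$) and Definition~\ref{def:latency} ($\Delta(n)\le\costw(G)$ for all $n\ge k_2$), telescoping $\Time(k)=\Time(k_2)+\sum_{n=k_2}^{k-1}\Delta(n)$; I would recall this so the argument is self-contained even if one does not wish to cite Theorem~\ref{thrm:cost-thro} as a black box.

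It remains to cover $k\le k_2$, where I would use monotonicity of the cost recurrence. A short induction on the structure of global types shows $\cost(H,\Time)\ge\Time$ pointwise for any $H$: the $\gEnd$ clause leaves $\Time$ unchanged, every action clause modifies $\Time$ only through operations of the form $\Time[\role{r}\addcost\ccc]$ (resp.\ $\Time[\ccc'|\role{r}\addcost\ccc]$), each of which replaces $\Time(\role{r})$ by $\max(\Time(\role{r}),\cdot)+\ccc\ge\Time(\role{r})$ since $\ccc\ge 0$, and the branching clause takes a $\max$ over branches each of which is $\ge\Time$ by the induction hypothesis. Hence $\Time(n+1)=\cost([\gEnd/X]G,\Time(n))\ge\Time(n)$, so for $k\le k_2$ we have $\cost(\vec\act)\le\cost(G,k)=\Time(k)\le\Time(k_2)=\cost(G,k_2)=k'\le k\times\costw(G)+k'$. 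Thus the bound holds uniformly in $k$ with the single $k'=\cost(G,k_2)$.

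The main obstacle is not the arithmetic but the bookkeeping: keeping straight that $\costw(G)$, $\Delta(n)$, $\Time(n)$ and $k'$ are all \emph{cost environments}, so that every inequality, sum and $\max$ must be read per participant; proving the pointwise monotonicity lemma for $\cost$ cleanly; and—most importantly for correctness—checking that the threshold $k_2$ (hence $k'$) is genuinely a function of $G$ alone, which is where the proof relies on Definition~\ref{def:latency}/Theorem~\ref{thrm:cost-thro} and not on anything trace-specific. The reduction from nested to single-recursion protocols via \S\ref{sec:nested} is the one remaining place where some care is needed, since the inner sub-protocols contribute a $k$-multiplied constant that must be folded into $k'$ or into the per-iteration latency appropriately.
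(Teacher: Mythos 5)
Your proposal is correct and follows essentially the same route as the paper: reduce to bounding $\cost(G,k)$ via Theorem~\ref{thm:global-cost-sound}, take $k'=\cost(G,k_0)$ for the threshold $k_0$ beyond which $\costw(G)\geq\Delta(n)$, and split on $k\leq k_0$ versus $k>k_0$. The only cosmetic differences are that you discharge the $k>k_0$ case by appealing to Theorem~\ref{thrm:cost-thro} (equivalently, telescoping $\Delta(n)$) where the paper redoes this step as an explicit induction on $k$, and that you spell out the pointwise monotonicity of $\cost$ that the paper treats as immediate in the $k\leq k_0$ case.
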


\begin{proof}
  By Definition \ref{def:delta}, we know that there exists some $k_0$ such that
  for all $n \geq k_0$,
\begin{equation}\label{eqn:costw-delta}
\costw(G) \geq \Delta(n).
\end{equation}

We show that $k'$ is $\cost(G, k_0)$.  By Theorem
\ref{thm:global-cost-sound}, we know that $\cost(\vec{\act}) \leq \cost(G,
k)$. Therefore, it is sufficient to show that for all $k$, $\cost(G, k) \leq k
\times \costw(G) + \cost(G, k_0)$. We proceed by case analysis:

\noindent
\textbf{Case} $k \leq k_0$ straightforward, since $\cost(G, k) \leq \cost(G,
k_0)$ if $k \leq k_0$.

\noindent
\textbf{Case} $k > k_0$:  By induction on $k$. All
cases $\leq k_0$ are straightforwardly true.

$\bullet$ Case $k = k_0 + 1$ follows from $\cost(G, k_0) \leq \cost(G, k_0 + 1)$:
\[
  \begin{array}{@{}r@{\;}l@{\hspace{-.3cm}} r@{}}
    \cost(G, k_0)
    & \leq \cost(G, k_0 + 1)
    & {\small\{\text{multiply $k_0$}\}}
    \\
    k_0 \times \cost(G, k_0)
    & \leq k_0 \times \cost(G, k_0 + 1)
    & {\small\{\text{add $\cost(G, k_0 + 1)$}\}}
    \\
    \cost(G, k_0 + 1) + k_0 \times \cost(G, k_0)
    & \leq \cost(G, k_0 + 1) + k_0 \times \cost(G, k_0 + 1)
    & {\small\{\text{sub $k_0 \times \cost(G, k_0)$}\}}
    \\
    \cost(G, k_0 + 1)
    & \leq (k_0 + 1) \times \cost(G, k_0 + 1) - k_0 \times \cost(G, k_0)
    & {\small\{\text{cancel $\cost(G, k_0)$}\}}
    \\
    \cost(G, k_0 + 1)
    & \leq (k_0 + 1) \times (\cost(G, k_0 + 1) - \cost(G, k_0)) + \cost(G, k_0)
    & {\small\{\text{by (\ref{eqn:recur}) and (\ref{eqn:costw-delta})}\}}
    \\
    \cost(G, k_0 + 1)
    & \leq (k_0 + 1) \times \costw(G) + \cost(G, k_0)
    &
  \end{array}
\]

$\bullet$ $k = k_2 + 1$, with $k_2 > k_0.$ Assume the induction hypothesis
$\cost(G, k_2) \leq k_2 \times \Delta(k_2) + \cost(G, k_0)$:
\[
  \begin{array}{@{}r@{\;}l r@{}}
    \cost(G, k_2)
    & \leq k_2 \times \Delta(k_2) + \cost(G, k_0)
    & {\small\{\text{by (\ref{eqn:recur})}\}}
    \\
    \cost(G, k_2)
    & \leq k_2 \times \cost(G, k_2 + 1) - k_2 \times \cost(G, k_2) + \cost(G, k_0)
    & {\small\{\text{add $\cost(G, k_2 + 1)$}\}}
    \\
    \cost(G, k_2 + 1) + \cost(G, k_2)
    & \leq (k_2 + 1) \times \cost(G, k_2 + 1) - k_2 \times \cost(G, k_2) + \cost(G, k_0)
    & {\small\{\text{sub $\cost(G, k_2)$}\}}
    \\
    \cost(G, k_2 + 1)
    & \leq (k_2 + 1) \times (\cost(G, k_2 + 1) - \cost(G, k_2)) + \cost(G, k_0)
    & {\small\{\text{by (\ref{eqn:recur}) and (\ref{eqn:costw-delta})}\}}
    \\
    \cost(G, k_2 + 1)
    & \leq (k_2 + 1) \times \costw(G) + \cost(G, k_0)
    &
  \end{array}
\]
\end{proof}

This implies that the latency approximates the cost of a trace of a
$k$-unrolling of a recursive protocol, and it follows from Theorems
\ref{thrm:cost-thro} and \ref{thm:global-cost-sound}. To illustrate this,
consider the
average cost per recursion iteration, $\cost(\vec{\act}) / k$. By Theorem
\ref{thm:thro-sound}, we know that $\cost(\vec{\act}) / k \leq \costw(G) + k' /
k$. Since $k'$ does not depend on $k$, for a sufficiently large $k$, the term
$k' / k$ will become smaller, and the upper bound of $\cost(\vec{\act}) / k$
will be approximately $\costw(G)$.

\section{Asynchronous Message Optimisation}
\label{sec:extn}
This section illustrates one of the key features of \CAMP, the
formulation and its soundness of \emph{asynchronous message optimisations}.
We extend the cost equations in \S\ref{sec:cost} and \S\ref{sec:thro} to
tackle protocols in which certain actions have been permuted for optimisation
purposes. Parallel programs often make use of parallel pipelines to overlap
computation and communication, as far as the overlapping does not interfere
with data dependencies. The overlapping can reduce stall time due to blocking
wait in the asynchronous communication model. Under the \CAMP{} theory,
optimisation should preserve the deadlock-freedom and produce the same
outcome, while ensuring less cost for the overall calculation.

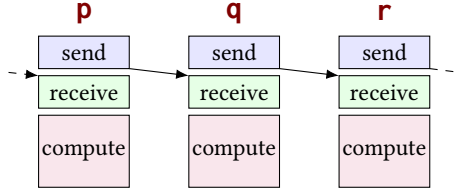
\begin{wrapfigure}{r}{.5\textwidth}
  \vskip-.6cm
  \begin{tikzpicture}[every node/.style={font=\small,
                                         minimum height=0.42cm,
                                         minimum width=1.2cm}]

   \node [matrix, very thin,column sep=0.4cm,row sep=0.1cm] (matrix) at (0,0) {
    &
    \node(0,0) (stage10) {}; & &
    \node(0,0) (stage20) {}; & &
    \node(0,0) (stage30) {}; & \\
    &
    \node(0,0) (stage11) {}; & &
    \node(0,0) (stage21) {}; & &
    \node(0,0) (stage31) {}; & \\
    &
    \node(0,0) (stage12) {}; & &
    \node(0,0) (stage22) {}; & &
    \node(0,0) (stage32) {}; & \\
    &
    \node(0,0) (stage13) {}; & &
    \node(0,0) (stage23) {}; & &
    \node(0,0) (stage33) {}; & \\
    &
    \node(0,0) (stage14) {}; & &
    \node(0,0) (stage24) {}; & &
    \node(0,0) (stage34) {}; & \\
  };

  \fill
    (stage10) node[fill=white] {\Large $\Rp$}
    (stage20) node[fill=white] {\Large $\Rq$}
    (stage30) node[fill=white] {\Large $\Rr$};

  \filldraw[fill=blue!10]
    (stage11.north west) rectangle (stage11.south east)

    (stage21.north west) rectangle (stage21.south east)

    (stage31.north west) rectangle (stage31.south east)
    ;
  \filldraw[fill=green!10]
    (stage12.north west) rectangle (stage12.south east)

    (stage22.north west) rectangle (stage22.south east)

    (stage32.north west) rectangle (stage32.south east)
    ;
  \filldraw[fill=purple!10]
    (stage13.north west) rectangle (stage14.south east)

    (stage23.north west) rectangle (stage24.south east)

    (stage33.north west) rectangle (stage34.south east)
    ;

  \fill
    (stage11) node { send }
    (stage21) node { send }
    (stage31) node { send }
    (stage12) node { receive }
    (stage22) node { receive }
    (stage32) node { receive }
    (stage13) node [yshift=-.25cm] { compute }
    (stage23) node [yshift=-.25cm] { compute }
    (stage33) node [yshift=-.25cm] { compute }
    ;

  \draw [-Latex] (stage11.south east) -- (stage22.north west);
  \draw [-Latex] (stage21.south east) -- (stage32.north west);
  \draw [-, dashed] (stage31.south east) -- ($(stage31.south east) + (.4, -.05)$);
  \draw [-Latex, dashed]  ($(stage11.south west) + (-.4, -.05)$) -- (stage12.north west);
  \end{tikzpicture}
  \vskip-0.4cm
  \caption{Optimised ring trace.}
  \label{fig:ring-async}
\end{wrapfigure}

\noindent
Fig.\ref{fig:ring-async} shows a safe and efficient ring protocol, in which stage
$i$ shares data with stage $(i+1) \mod 3$, and then proceed to do some local
computation. This protocol behaves similarly to that of Fig.
\ref{fig:ring-proto} in \S \ref{sec:mpst}, but the output actions have been
permuted so that they are performed first, thus reducing the amount of
synchronisation required. The optimised version, however, is more difficult
to check against a standard global type, because of the permuted actions.
This can be illustrated by comparing the optimised and un-optimised local
types of $\Rq$:

\vskip.2cm
\begin{center}
$
\begin{array}{@{}r@{\hspace{1cm}}l@{}}
  \begin{array}{@{}l@{}}
    L_{\Rq} = \lFix X. \Rp \lRecv \tau \hasCost \ccc. \Rr \lSend \tau. X
  \end{array}
  &
  \begin{array}{@{}l@{}}
    L_{\Rq}' = \lFix X. \Rr \lSend \tau. \Rp \lRecv \tau \hasCost \ccc. X
  \end{array}
\end{array}
$
\end{center}
\vskip.2cm

\noindent
$L_{\Rq}$ is the unoptimised local type, and $L_{\Rq}'$ is the optimised
version. Both local types represent a similar communication pattern. However,
in the left version $L_{\Rq}$, the send action only happens after receiving,
and computing (with cost $c$), while the right version first sends a value of
type $\tau$, and then performs the receive and local computation. This
removes unnecessary synchronisation, and allows $\Rr$ to continue with its
interactions before $\Rq$ finishes its own local computation.

Only certain message permutations are valid. For example, if instead of swapping the
send and receive actions for $L_{\Rq}$, we permute the actions for participant $\Rp$,
then we end up in the following (incorrect) situation:

\vskip.2cm
\begin{center}
$
\begin{array}{@{}c@{\hspace{1cm}}c@{\hspace{1cm}}c@{}}
  \begin{array}{@{}l@{}}
    L_{\Rp}' = \lFix X. \Rr \lRecv \tau \hasCost \ccc. \Rq \lSend \tau. X
  \end{array}
  &
  \begin{array}{@{}l@{}}
    L_{\Rq}' = \lFix X. \Rp \lRecv \tau \hasCost \ccc. \Rr \lSend \tau. X
  \end{array}
  &
  \begin{array}{@{}l@{}}
    L_{\Rr}' = \lFix X. \Rp \lRecv \tau \hasCost \ccc. \Rp \lSend \tau. X
  \end{array}
\end{array}
$
\end{center}
\vskip.2cm

\noindent
This is a clear deadlock situation, since all participants are waiting for a
message from each other. To avoid such situations, we define the
\emph{Asynchronous Message Optimisation} for global types, and show its
soundness:


\begin{definition}[Asynchronous Message Optimisation]
\label{def:async}
We first extend the syntax of global types to include
send ($\gSend$) and receive ($\gRecv$) actions as:
$G \Coloneqq \Rp \Rq \gSend \gTy{\tau}.G \mid \Rp \Rq \gRecv
\gTy{\tau}.G \mid \ldots$
The \emph{asynchronous optimisation} relation, $G_1 \preceq G_2$
(read: $G_1$ is more optimal than $G_2$), with
$\Rp_1 \neq \Rp_2$ or $\Rq_1 \neq \Rq_2$
is the transitive closure of the rules below:
{\small
\[
    \begin{array}{@{} l @{\quad} r @{\;} c @{\;} l @{}}
\rulename{Init}      &
        \Rp\Rq \gSend \gTy{\tau}. \Rq\Rp  \gRecv \gTy{\tau} . G
      & \gSubtype &
        \Rp \gMsg \Rq \gTy{\tau} . G
    \\[1mm]
\rulename{Out}    & \Rp_1\Rq_1  \gSend \gTy{\tau_1}. \Rp_2\Rq_2  \gSend \gTy{\tau_2}. G
      & \gSubtype &
        \Rp_2\Rq_2  \gSend \gTy{\tau_2}. \Rp_1\Rq_1  \gSend \gTy{\tau_1}. G
    \\[1mm]
\rulename{In}  & \Rp_1\Rq_1  \gRecv \gTy{\tau_1}. \Rp_2\Rq_2  \gRecv \gTy{\tau_2}. G
      & \gSubtype &
        \Rp_2\Rq_2  \gRecv \gTy{\tau_2}. \Rp_1\Rq_1  \gRecv \gTy{\tau_1}. G
        \\[1mm]
\rulename{Opt}
      & \Rp_1\Rq_1  \gSend \gTy{\tau_1}. \Rq_2\Rp_2  \gRecv \gTy{\tau_2}. G
      & \gSubtype &
        \Rq_2\Rp_2  \gRecv \gTy{\tau_2} . \Rp_1\Rq_1  \gSend \gTy{\tau_1}. G
        \\[1mm]
\rulename{OBra} & \Rp_2\Rq_2  \gSend \gTy{\tau_2}. \Rp_1 \gMsg \Rq_1 \{l_i . G_i \}_{i\in I}
      & \gSubtype
      & \Rp_1 \gMsg \Rq_1 \{l_i . \Rp_2\Rq_2  \gSend \gTy{\tau_2}. G_i \}_{i\in I}
    \\[1mm]
\rulename{IBra}  & \Rp_1 \gMsg \Rq_1 \{l_i . \Rq_2\Rp_2  \gRecv \gTy{\tau_2}. G_i \}_{i\in I}
      & \gSubtype
      & \Rq_2\Rp_2  \gRecv \gTy{\tau_2}. \Rp_1 \gMsg \Rq_1 \{l_i . G_i \}_{i\in I}
    \\[1mm]
\rulename{Cong}      & G \gSubtype G' \Rightarrow \quad\quad\ E[G] &
\gSubtype & E[G']
\end{array}
\]
}
where $E::= \ [ \ ] \mid \Rp \gMsg \Rq \gTy{\tau} . E
\mid \Rp \gMsg \Rq (\{l.  E \}\cup \{l_k.  G_k \}_{k \in K})
\mid \gFix X. E$.
\end{definition}
The optimisation starts first splitting, by \rulename{Init}, the message to a
sending and receiving operation; \rulename{Out} permute two outputs to two
different participants; \rulename{In} is its dual; \rulename{OBra} permutes a
send and a branch; \rulename{IBra} is dual; and \rulename{Cong} is a
congruence rule. The key rules
are \rulename{Opt},
\rulename{OBra}
and \rulename{IBra},
that perform permutations that allow communication and computation to overlap. This is because
the rules permute send actions to the left, and receive actions to the right.
We prove that whenever $G_2$ is deadlock-free, then $G_1
\gSubtype G_2$ must also be deadlock free. Moreover, we show that $G_1
\gSubtype G_2$ is decidable.
Notice that: (1) our definition is different from the literature
asynchronous subtyping for session types, motivated from more practical
use cases; (2) our cost models can be applied even whenever we do not have
that $G_1 \gSubtype G_2$, in which case, safety can be guaranteed by using any
method from the literature. See \S\ref{sec:relw}.

\begin{restatable}[Asynchronous Message Optimisation]{theorem}{asyncOpt}
\label{thm:opt-sound}
\phantom{linebreak}

\begin{enumerate}
\item {\em (Soundness)} Suppose
$G_2$ is a deadlock-free global type and $G_1 \gSubtype G_2$. Then
$G_1$ is deadlock-free.
\item {\em (Decidability)}
Given $G_1$ and $G_2$, it is decidable whether $G_1 \gSubtype G_2$ or
not.
\end{enumerate}
\end{restatable}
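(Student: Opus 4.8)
The plan is to prove both parts by induction on the derivation of $\gSubtype$, reducing everything to the six base rules and $\rulename{Cong}$. Since $\gSubtype$ is a transitive closure, and the property ``$G_2$ deadlock-free $\Rightarrow$ $G_1$ deadlock-free'' composes along transitivity, for Soundness it suffices to treat a single base-rule step, possibly occurring inside an evaluation context $E$ via $\rulename{Cong}$. I would handle $\rulename{Cong}$ first: the contexts $E$ are built only from message prefixes, single-hole branch prefixes, and recursion binders $\gFix X.\,E$, none of which interacts with the permuted region in a way that can create a deadlock; concretely, after projecting and running the (finite, up to one unfolding of each $\gFix$) prefix of $E$ in the configuration LTS of Def.~\ref{def:lts_local}, one reaches exactly the projection-configuration of the hole, so both clauses of Def.~\ref{def:deadlock} for $E[G_1]$ (resp.\ $E[G_2]$) reduce to the same clauses at the hole, and deadlock-freedom transfers from the hole outward. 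Along the way one also records that each rule preserves well-formedness (projectability), by exactly the independence side conditions that license it, so that deadlock-freedom of $G_1$ is even well-posed. This reduces Soundness to the six base rules on closed positions.

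For the base rules the heart of the argument is coinductive: for each $G_1 \gSubtype G_2$ I would exhibit a relation $\mathcal{R}$ between the configurations reachable from $[\Rp \mapsto G_1 \project \Rp]$ and from $[\Rp \mapsto G_2 \project \Rp]$ such that (i) the initial configurations are related; (ii) $\mathcal{R}$ is a weak bisimulation up to commutation of the permuted (independent) actions; and (iii) $\mathcal{R}$ preserves both deadlock-freedom clauses — ``all local types are $\lEnd$ and all queues empty'', and ``some transition is enabled''. For $\rulename{Out}$ and $\rulename{In}$ the permuted actions have distinct subjects on distinct FIFO channels, so in the asynchronous semantics they already commute (a diamond), and $\mathcal{R}$ is essentially the identity, a strong bisimulation. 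For $\rulename{Init}$ the combined message $\Rp \gMsg \Rq \gTy{\tau}$ is semantically the sequential composition of its send and its receive ($\rulename{GR1a}$/$\rulename{GR2a}$, cf.\ Remark~\ref{rem:syntax}), so $G_1$ and $G_2$ are bisimilar by construction. The genuinely new cases are $\rulename{Opt}$, $\rulename{OBra}$, $\rulename{IBra}$, where a send is moved strictly earlier (or a receive strictly later) past a pivot action of another participant; here $\mathcal{R}$ is an \emph{expansion}, pairing a $G_1$-configuration in which the early send has already fired — one extra message on the relevant FIFO, the sender's local type one prefix ahead — with the $G_2$-configuration in which that send is still pending. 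Because sends are non-blocking ([LR1] has no queue precondition), the extra in-flight message can never disable a transition, and every $G_2$-step is matched by $G_1$ after possibly performing the postponed send: this is exactly where optimisation only \emph{reduces} synchronisation, so clause~(2) is preserved, and clause~(1) holds because a terminal $G_2$-configuration forces all postponed sends to have fired. Chaining the base cases through transitivity and $\rulename{Cong}$ gives Soundness.

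For Decidability, the one-step relation ``$G_1$ is $G_2$ after one base rule inside some $E$'' is clearly decidable: the syntax tree of $G_2$ is finite, there are finitely many positions and rules, and recursion is handled up to a single unfolding since each base rule rewrites only a contiguous prefix inside one $\gFix$-body. The only subtlety is that $\gSubtype$ is a transitive closure and $\rulename{Out}$/$\rulename{In}$ are locally reversible, so a naive search need not terminate; I would tame this with a well-founded measure plus a finiteness argument. $\rulename{Init}$ is the unique rule changing the multiset of $\gMsg$/$\gSend$/$\gRecv$ prefixes, strictly decreasing the number of $\gMsg$ prefixes; $\rulename{Opt}$, $\rulename{OBra}$, $\rulename{IBra}$ strictly move a $\gSend$ leftward (resp.\ a $\gRecv$ rightward) across a fixed pivot and are never undone; $\rulename{Out}$, $\rulename{In}$ only permute a finite, fixed multiset of prefixes at one level. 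Hence, for fixed $G_2$, the set $\{G_1 : G_1 \gSubtype G_2\}$ is finite and computable by a terminating worklist saturation over the one-step relation, and to decide $G_1 \gSubtype G_2$ one tests membership. Equivalently, one can compute a canonical form of $G$ by exhaustively applying $\rulename{Init}$, $\rulename{Opt}$, $\rulename{OBra}$, $\rulename{IBra}$ (confluent modulo the $\rulename{Out}$/$\rulename{In}$ commutation equivalence on a finite set of reorderings) and reduce $G_1 \gSubtype G_2$ to a decidable comparison of canonical forms.

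The step I expect to be the main obstacle is the base-case analysis for $\rulename{Opt}$, $\rulename{OBra}$, $\rulename{IBra}$ in part~(1): defining the expansion relation precisely on configurations of the \emph{extended} intermediate syntax (the $\gSend/\gRecv$ constructs and their LTS rules), checking it is closed under \emph{all} transitions — not only the permuted send/receive pair but also unrelated actions firing under the prefix and recursion unfoldings — and verifying that both clauses of Def.~\ref{def:deadlock}, in particular the ``empty queues at termination'' clause, survive when $G_1$ carries extra in-flight messages relative to $G_2$. The decidability obstacle is subordinate: exhibiting one measure that simultaneously bounds $\rulename{Init}$-splitting, the one-directional $\rulename{Opt}$/$\rulename{OBra}$/$\rulename{IBra}$ moves, and the reversible $\rulename{Out}$/$\rulename{In}$ permutations, uniformly across recursion binders.
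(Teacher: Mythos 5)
Your proposal is correct and follows essentially the same route as the paper: soundness goes by induction on the derivation of $\gSubtype$, reasoning on the projected configurations under the local-type LTS, with the only delicate case being \rulename{Opt} (and its branching variants), and your expansion relation — the optimised side is one non-blocking send ahead, the extra in-flight message cannot disable any receive, and both sides converge to the same configuration — is exactly the paper's argument in bisimulation clothing. For decidability the paper likewise relies on the rules never unrolling recursion and on \rulename{Out}/\rulename{In} being the only reversible steps, resolving the latter by orienting them with the side condition $\Rp_1 < \Rp_2$ so as to obtain a terminating, confluent rewriting system with unique normal forms — i.e., your canonical-form alternative rather than the worklist saturation.
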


\begin{proof}

  (1) By induction on the derivation of $G_1 \gSubtype G_2$.
Assume $G_1 \gSubtype G_2$ and $G_2$ is deadlock-free and
$C_i= [\role{p} \mapsto G_i \project \role{p}]_{\role{p} \in \roles}$
with $i=1,2$. We prove
if $\langle C_2, Q \rangle$ is deadlock-free, then
$\langle C_1, Q \rangle$ is deadlock-free.
To do this proof, we extend the projection for global types as follows. $\Rp
\Rq \gSend \gTy{\tau}.G \project \Rr = \Rq \lSend \gTy{\tau}.(G \project
\Rr)$ if $\Rp = \Rr$, and $G \project \Rr$ otherwise. $\Rp \Rq \gRecv
\gTy{\tau}.G \project \Rr = \Rq \lRecv \gTy{\tau}.(G \project \Rr)$ if $\Rp =
\Rr$, and $G \project \Rr$ otherwise.
All cases except \rulename{Opt} is obvious. The \rulename{Opt}
states:
$\Rp_1\Rq_1 \gSend \gTy{\tau_1}. \Rq_2\Rp_2 \gRecv \gTy{\tau_2}. G \gSubtype
\Rq_2\Rp_2 \gRecv \gTy{\tau_2} . \Rp_1\Rq_1 \gSend \gTy{\tau_1}. G$.
We know that $\langle C_2, Q \rangle$ is deadlock free.
Note that $\Rp_i \neq \Rq_i$, otherwise $G_2$ cannot be proven
deadlock free (it is either ill-formed, or the optimisation of an ill-formed
global type).
There are two cases, considering the side conditions for the rules:
\begin{enumerate*}[label=\alph*)]
\item if $\Rp_1 \neq \Rq_2$, straightforward
  since these subject of both interactions are different;
\item if $\Rp_1 = \Rq_2 = \Rp$, then we have $C_1\setminus \Rp = C_2 \setminus \Rp$,
  $C_2(\Rp) =\Rp_2 \gRecv \gTy{\tau_2} . \Rq_1 \gSend \gTy{\tau_1}. L$, and
  $C_1(\Rp) =\Rq_1 \gSend \gTy{\tau_1}. \Rp_2 \gRecv \gTy{\tau_2} . L$. Since
  $\langle C_2, Q \rangle$ is deadlock free, then $Q(\Rp_2\Rp) = w \cdot
  \tau_2$.
  Therefore,
  $\langle C_1, Q \rangle \xrightarrow{\ast} \langle C_1', Q' \rangle$,
  $\langle C_2, Q \rangle \xrightarrow{\ast} \langle C_2', Q' \rangle$,
  with $C_1'(\Rp) = C_2'(\Rp) = L$, and $Q'(\Rp_2\Rp) = w$. Since
  $\langle C_2, Q \rangle$ is deadlock free, then $\langle C_2', Q' \rangle$ must
  also be deadlock free, and $\langle C_1', Q' \rangle$ as well.
\end{enumerate*}

(2) We consider a normal form which is derived applying \rulename{Out,In}
with the side condition $\Rp_1 < \Rp_2$; and all other rules except
\rulename{Init} as much as possible until no rule is applicable, and finally
applying \rulename{Init} to all pairs of send/receive.
Then if $G_1 \gSubtype G_2$, there exists a unique global type $G$ such that
$G_i \gSubtype G$ derivable applying the above rules finitely. This means
interpreting $G_1 \gSubtype G_2$ as a term rewriting system. The term
rewriting system is terminating because
\begin{enumerate*}[label=\alph*)]
  \item the terms are finite, since we do not unroll recursion; and
  \item the only potential rewrite cycle appears in rules \rulename{Out,In},
  which is prevented by the additional side condition that $\Rp_1 < \Rp_2$
\end{enumerate*}
The repeated application of these rules permute the send and receive actions
to their rightmost and leftmost position respectively. By the side conditions
of the rules, it is straightforward to show that any critical pairs can be
unified, since no rule can prevent another rule from being applied.
\end{proof}

Finally, we prove that, ignoring sending costs, if $G_1 \gSubtype G_2$, then
the cost of $G_1$ is less than the cost of $G_2$. The reason why we need to
ignore sending costs for this proof is that permuting two output actions may
introduce delays in a later computation stage. Note that this property is a
statement about the \emph{synchronisation} costs, not an algorithm for
optimising a protocol. To illustrate this case, consider the following global
types:

\vskip.2cm
{\small
\begin{center}
$
\begin{array}{@{}l@{}}
  G_1 = {\Rp \Rq_1 \gSend \tau .
        \Rp \Rq_2 \gSend \tau .}
        \Rq_1 \Rp \gRecv \tau .
        \Rq_2 \Rp \gRecv \tau .
        \Rq_1 \tRun \ccc_1 .
        \Rq_2 \tRun \ccc_2 .
        \Rq_2 \Rr \gSend \tau' .
        \Rr\Rq_2  \gRecv \tau' .
        \gEnd
\\
  G_2 = {\Rp \Rq_2 \gSend \tau .
        \Rp \Rq_1 \gSend \tau .}
        \Rq_1 \Rp \gRecv \tau .
        \Rq_2 \Rp \gRecv \tau .
        \Rq_1 \tRun \ccc_1 .
        \Rq_2 \tRun \ccc_2 .
        \Rq_2 \Rr \gSend \tau' .
        \Rr\Rq_2  \gRecv \tau' .
        \gEnd
\end{array}
$
\end{center}
}
\vskip.2cm

\noindent
It is clear that $G_1 \gSubtype G_2$, by \rulename{Out}. However, whenever
$\ccc_2 \geq \ccc_1$, then $\cost(G_1) \geq \cost(G_2)$, since $\Rq_2$ must wait
longer in $G_1$ than in $G_2$ before receiving the message of type $\tau$.
Note that, even if $\ccc_1 = \ccc_2$, the cost of the global protocol will be
greater in $G_1$, since $\Rr$ is the participant that takes longer in the
protocol, and needs to wait for $\Rq_2$.
The implications of this result are twofold:
\begin{enumerate*}[label=(\alph*)]
  \item we know that whenever $G_1 \gSubtype G_2$, $G_1$ contains \emph{less}
  overhead due to synchronisation; and
  \item for a given $G_2$, choosing an optimal $G_1 \gSubtype G_2$ is not
  straightforward, and depends on actual local computation costs and
  communication latencies.
\end{enumerate*}

\begin{restatable}[Optimisation Cost]{theorem}{optCost}
\label{thm:opt}
Suppose $G_2$ is a well-formed global type and $G_1 \gSubtype G_2$. If
the sending cost is $0$, then $\cost(G_1) \leq \cost(G_2)$.
\end{restatable}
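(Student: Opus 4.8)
The plan is to prove, by induction on the derivation of $G_1 \gSubtype G_2$, the stronger statement that
$\cost(G_1, \vec k)(\Time, \Queue) \le \cost(G_2, \vec k)(\Time, \Queue)$
for every recursion-parameter vector $\vec k$, every cost environment $\Time$, and every dependency queue $\Queue$ well-formed with respect to $G_1$ (equivalently $G_2$: the relation $\gSubtype$ preserves the set of channels and the relative order of matched send/receive, so it does not affect which queues are admissible, cf.\ Def.~\ref{def:wf_dep_queue}). Theorem~\ref{thm:opt} is the instance $\Time = \Queue = []$. Since $\gSubtype$ is a transitive closure, it is enough to treat the six base rules and the congruence rule \rulename{Cong}; the transitivity step is immediate from transitivity of $\le$ on cost environments.

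The backbone of the argument is a monotonicity lemma for $\cost$: if $\Time \le \Time'$ pointwise and $\Queue \le \Queue'$ (same queue shape, pointwise on the recorded times), then $\cost(G, \vec k)(\Time, \Queue) \le \cost(G, \vec k)(\Time', \Queue')$, with the inequality holding on the resulting environments \emph{and}, so the induction carries through, on the resulting queues. This is a routine structural induction on $G$ (with the customary inner induction on the unrolling count for $\gFix$): the operations $\Time[\Rp \addcost \ccc]$, $\Time[\ccc' \mid \Rp \addcost \ccc]$, the pointwise $\max$ on environments, and queue enqueue/dequeue are all monotone, while $\csend$, $\crecv$ and the cost annotations are constants.

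Each base rule is discharged by rewriting both sides of the target inequality as $\cost(G, \vec k)$ applied to two states $(\Time_1, \Queue_1) \le (\Time_2, \Queue_2)$ and invoking the monotonicity lemma; the side conditions ensure the permuted send and receive act on distinct FIFO channels, so queue contents keep their relative order, and well-formedness ensures a channel is empty immediately before the send that feeds it. The hypothesis that the sending cost is $0$ is used in an essential way here: a send step then leaves the sender's accumulated cost untouched ($\Time[\Rp \addcost 0] = \Time$) and enqueues exactly the sender's \emph{current} accumulated time, so moving a send to the left (\rulename{Opt}, \rulename{OBra}) can only lower the timestamp it records, and moving a receive to the right (\rulename{Opt}, \rulename{IBra}) leaves the receiver's cost unchanged, since the receiver waits for its data in either order. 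For \rulename{Init}, \rulename{Out} and \rulename{In} the two states actually coincide; for \rulename{Opt} one splits as in the proof of Theorem~\ref{thm:opt-sound}: when $\Rp_1 \ne \Rq_2$ the states again coincide, and when $\Rp_1 = \Rq_2$ the unoptimised type first performs the receive, advancing the cost of $\Rp_1 = \Rq_2$ to $\max(\Time(\Rp_1), \ccc_0) + \crecv(\tau_2) \ge \Time(\Rp_1)$ (with $\ccc_0$ the arrival time of $\Rq_2$'s message) before recording that value on the channel to $\Rq_1$, whereas the optimised type records the smaller value $\Time(\Rp_1)$ there, so $(\Time_1, \Queue_1) \le (\Time_2, \Queue_2)$; \rulename{OBra} and \rulename{IBra} are analogous, also using monotonicity of the per-branch $\max$.

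The case \rulename{Cong} is where the main obstacle lies. By induction on the evaluation context $E$ one shows that, if $\cost(G, \vec k)(\Time, \Queue) \le \cost(G', \vec k)(\Time, \Queue)$ holds for all $\vec k, \Time, \Queue$, then so does the analogue for $E[G]$ and $E[G']$. The cases $E = \Rp \gMsg \Rq \gTy{\tau}. E'$ and $E = \Rp \gMsg \Rq(\{l. E'\} \cup \{l_k. G_k\}_{k})$ are immediate, as the leading step transforms the state independently of the subterm at the hole (for branching one also uses monotonicity of $\max$ and that the non-hole branches contribute equally). The difficult case is $E = \gFix X. E'$: here $\cost(\gFix X. E'[G], k \cdot \vec k) = \cost(\unroll^k(X, E'[G]), \vec k)$ unfolds $E'[G]$ — and with it, when $G$ mentions $X$, the subterm $G$ itself — into a nested term containing several copies of $G$. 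I would deal with this via a substitution lemma: (i) $\gSubtype$ is closed under substitution of recursion variables by global types, since no rule inspects such variables and the $E$-contexts already include $\gFix$; and (ii) $\cost$ is monotone under substitution, i.e.\ if $\cost(K, \vec k)(\Time, \Queue) \le \cost(K', \vec k)(\Time, \Queue)$ for all $\vec k, \Time, \Queue$ then the same holds for $[K/X]M$ and $[K'/X]M$, for any global type $M$ — proved by induction on $M$ with an inner induction on the unrolling count (after $\alpha$-renaming so that $\unroll^k(Y, [K/X]M') = [K/X]\,\unroll^k(Y, M')$), using the monotonicity lemma. Putting (i), (ii), the monotonicity lemma and the inner induction on $k$ together, each copy of $G$ inside $\unroll^k(X, E'[G])$ may be rewritten to $G'$ without increasing the cost, giving $\cost(\unroll^k(X, E'[G]), \vec k) \le \cost(\unroll^k(X, E'[G']), \vec k)$ and hence the \rulename{Cong} case. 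I expect establishing the substitution lemma (ii) — in particular making the nested induction on term structure and unrolling count go through cleanly, and handling possible capture — to be the main technical hurdle.
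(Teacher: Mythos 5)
Your proposal is correct and follows essentially the same route as the paper: induction on the derivation of $G_1 \gSubtype G_2$, using the zero-send-cost assumption to make \rulename{Init}, \rulename{Out}, \rulename{In} cost-neutral and (implicit in the paper, explicit in your plan) monotonicity of $\cost$ in the accumulated times and queued timestamps to handle \rulename{Opt}, \rulename{OBra}, \rulename{IBra}, with the same case split on whether the sender of the permuted send coincides with the permuted receiver. The paper's argument is only a sketch — it never states the monotonicity lemma and does not discuss \rulename{Cong} or the recursion context at all — so your explicit treatment of those points (including the substitution lemma for $\gFix$) elaborates, rather than departs from, its proof.
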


\begin{proof}
  By induction on the derivation of $G_1 \gSubtype G_2$. Most cases are
  permutations of independent interactions, and all independent interactions
  can be permuted with no effect on the cost. Since we assume zero send
  costs, the cost of sending two actions is the same, independently of the
  order. The reasoning is similar for receiving interactions. The only rules
  that we need to consider are \rulename{Opt}, \rulename{OBra} and
  \rulename{IBra}. Notice that in all the cases, the left hand side contains
  a sending (or choice) at an earlier position than the right hand side. We
  show the proof for case \rulename{Opt}, but all cases follow a similar
  structure. The cost of
   $\Rp_1\Rq_1 \gSend \{\tau_1 \} . \Rp_2\Rq_2 \gRecv \{\tau_2 \} . G$
    is the cost of $G$, where the message queue for
  $\Rp_1\Rq_1$ contains the current execution time for $\Rp_1$. If $\Rp_1
  \neq \Rp_2$, then the cost will be the same in both cases. But if $\Rp_1 =
  \Rp_2$, then the cost in the right hand side will contain the accumulated
  cost for $\Rp_1$, plus the cost of receiving from $\Rq_2$. Since the costs
  recorded at the message queue are greater, then the cost of the
  continuation must also be greater.
\end{proof}

We illustrate how this optimisation reduces synchronisation time with one
iteration of a ring protocol of size 2: {\small
$
\Rp \gMsg \Rq \gTy{\tau_1 \hasCost \ccc_1} . \Rq \gMsg \Rp \gTy {\tau_2 \hasCost \ccc_2} . \gEnd.
$
}
The only possible trace for running such protocol is: { \small
$
\Rp\Rq \tSend \tau_1 \cdot
\Rp\Rq \tRecv \tau_1 \cdot
\Rq \tRun \ccc_1 \cdot
\Rq\Rp \tSend \tau_2 \cdot
\Rq\Rp \tRecv \tau_2 \cdot
\Rp \tRun \ccc_2.
$
}
This trace and the derived cost imply that computation costs $\ccc_1$
and $\ccc_2$ cannot happen in parallel: { \small
$
\Time_\Rp = \csend(\tau_1) + \crecv(\tau_1) + \ccc_1 + \csend(\tau_2) + \crecv(\tau_2) + \ccc_2.
$
}

\noindent
In cases where such interactions are independent, we can permute the
send/receive actions of $\Rq$ to remove the synchronisation cost from $\Rp$, and
allow \emph{any trace that is an interleaving of the following sub-traces}, where
the send operations happen before the matching receive:

\vskip.1cm
\begin{center}
$
\begin{array}{@{}l@{}}
  tr_\Rp = \Rp\Rq \tSend \tau_1 \cdot \Rq\Rp \tRecv \tau_2 \cdot \Rp \tRun \ccc_2
  \hspace{1cm}
  tr_\Rq = \Rq\Rp \tSend \tau_2 \cdot \Rp\Rq \tRecv \tau_1 \cdot \Rq \tRun \ccc_1
\end{array}
$
\end{center}
\vskip.1cm

\noindent
Such optimisations is represented by the following type: {\small
$
\Rp \Rq \gSend \gTy{\tau_1} . \Rq \Rp \gSend \gTy {\tau_2} .
\Rq \Rp \gRecv \gTy{\tau_1 \hasCost \ccc_1} . \Rp \Rq \gRecv \gTy {\tau_2 \hasCost \ccc_2} .
\gEnd.
$
}
\vskip.1cm

\noindent
This scenario will have the cost that we show below, which is smaller than
the original cost.

\vskip.1cm
\begin{center}
$
\begin{array}{@{}l@{}}
  \Time_\Rp = \max(\csend(\tau_1), \csend(\tau_2)) + \crecv(\tau_2) + \ccc_2
  \quad
  \Time_\Rq = \max(\csend(\tau_1), \csend(\tau_2)) + \crecv(\tau_1) + \ccc_1
\end{array}
$
\end{center}
\vskip.1cm

\section{Implementation}
\label{sec:impl}
We implemented a library in Haskell for describing global types augmented
with size and cost information, from which we can derive cost equations for
protocols.

\subsection{Resource Contention}
\label{subsec:contention}
\CAMP{} addresses the issue that multiple participants may need to
share computational resources. We model the cases in which the
participants of a protocol are mapped to distinct \emph{nodes} of a
distributed system, where each node may contain multiple \emph{cores}. This
requires:
\begin{enumerate*}[label=\alph*)]
  \item a \emph{target hardware} specification, and
  \item a \emph{mapping} from participants to nodes.
\end{enumerate*}
The target hardware specification describes the amount of nodes available,
the cores per-node, and the communication latencies between nodes. The
mapping from participants to nodes assigns each participant of the
distributed system to a different node. Our assumptions are:
\begin{enumerate*}[label=\alph*)]
  \item there is no mechanism for process migration;
  \item processes can be pinned to specific nodes, but not to specific cores; and
  \item an optimistic scheduling scenario, in which participants will run as
  soon as possible, whenever a core becomes available.
\end{enumerate*}

\begin{definition}[Target Hardware Specification]
  The target hardware is specified as an indexed set of node descriptions,
  and the communication latencies between nodes:
  $\{\core_\node\}_{\node \in \nodes}$
  and
  $\{\latency_{\node_1\node_2}\}_{\node_1, \node_2 \in \nodes}$.
  Here, $\nodes$ is the set of \emph{node identifiers}, $\core_\node$ is a
  natural number that describes the number of available cores for node
  $\node$, and $\latency_{\node_1\node_2}$ is a function from a size to the
  amount of time it takes to transmit a value from $\node_1$ to $\node_2$.
\end{definition}

\begin{definition}[Participant Mapping]
  The participant mapping associates each participant with a specific node.
  We say that participants are \emph{pinned} to nodes
  $\mapping : \roles \to \nodes$.
\end{definition}

For example, consider the master-worker example, where we have $1$ master and
$5$ workers:
\[
\gFix X. \; \role{m} \gMsg \role{w_1}\gTy{\tau_1}. \ldots . \role{m} \gMsg
\role{w_5}\gTy{\tau_1}. \role{w}_1 \gMsg \role{m}\gTy{\tau_2}. \ldots .
\role{w_5} \gMsg \role{m}\gTy{\tau_2}. X
\]
\noindent
First, we need to know which is the target hardware. In our case, this is a
distributed system with two nodes, $\node_1$ and $\node_2$, with $1$ and $4$
cores respectively. That is: with $\core_{\node_1} = 1$ and $\core_{\node_2}
= 4$. Suppose that the communication latency between $\node_1$ and $\node_2$
is a known function on the size of the messages, $l$. Then,
$\latency_{\node_1\node_2} = \latency_{\node_2\node_1} = l$. Our hardware
description is completed by $\{\core_\node\}_{\node \in \{\node_1,
\node_2\}}$, and $\{\latency_{\node\node'}\}_{\node, \node' \in \{\node_1,
\node_2\}}$.
Finally, we require to map our participants to the different nodes in the
architecture. In our example, we may want to run $\role{m}$ in $\node_1$, and
$\role{w}_i$ in $\node_2$: $M(\role{m}) = \node_1$ and $M(\role{w}_i) =
\node_2$.

To compute the cost in this specific scenario, we use the \emph{resource
bounded} cost equations. The key difference is that, as well as keeping track
of the accumulated time per-role, we keep the accumulated time per node,
using a \emph{core-availability} time, which is the earliest time at which a
core becomes available. The resource-bounded cost equations are obtained
using $\cost(G)(\Time, \Sys, \Queue)$, where $\Sys$ accumulates the cost at
each core and each node of the system. We assume a hardware specification and
mapping. The rules are now modified in the following way:
\[
\cost(\role{p}\gMsg \role{q} \gTy{\tau}. G, \vec{k}) (\Time, \Sys, \Queue)
= \cost(G, \vec{k}) ( \Time[ \Rp \mapsto \Sys_1(\mapping(\Rp)_\ccc)
                          , \Rq \mapsto \Sys_2(\mapping(\Rq)_\ccc)
                          ]
                    , \Sys_2
                    , \Queue
\]
where
    $\Sys_1 = \Sys[\mapping(\Rp)_{\ccc_1} \addcost \csend(\tau)]$,
    $\Sys_2 = \Sys_1[\mapping(\Rq)_{\ccc_2} \addcost \crecv(\tau) + \latency_{M(\Rp)M(\Rq)}]$,
  {\small
    $\forall \ccc, \Sys[\mapping(\Rp)_{\ccc_1}] \leq \Sys[\mapping(\Rp)_\ccc]$
  }
and
    $\forall \ccc, \Sys_1[\mapping(\Rq)_{\ccc_2}] \leq \Sys[\mapping(\Rq)_\ccc]$.
In this definition, we update the accumulated cost of $\Rp$ and $\Rq$ to the
total accumulated cost of the lowest cost core of the node to which they are
mapped. The definition of $\Sys[\node_\ccc \addcost \ccc]$ is the same as in
\S\ref{sec:cost}.

\subsection{A Monadic Interface for Global Types}
\label{sec:eval-monadic}

We develop a deep embedding of the global types of \S\ref{sec:mpst} in Haskell,
and provide a monadic interface on top as a simpler interface for
representing protocols. We call this monadic interface \lstinline|GTM|, for
\emph{Global Type Monad}. In \lstinline|GTM|, there is an implicit $\gEnd$ at the end
of each sequence of interactions. An interaction is specified using function
\lstinline|message|, and participants are created using \lstinline|mkRole|. Function
\lstinline|gclose| runs the code in the \lstinline|GTM| monad, and produces the resulting
global type (\texttt{CGT}). We show below the Haskell code that generates an
$n$-stage pipeline, and a recursive $2$-stage pipeline generated using the
following code:

\vskip.2cm
\noindent
{\small
$
\begin{array}{l}
  \mathtt{pipe} :: [(\mathtt{SType}, \mathtt{Cost})] \to
    \mathtt{Role} \to \mathtt{GTM}\;() \to \mathtt{GTM}\;()
  \\
  \begin{array}{@{}l@{\;}l@{\;}l@{\;}l}
    \mathtt{pipe} \; [] & p & k & = k
    \\
    \mathtt{pipe} \; ((t, c) : r) & p & k & =
    \mathtt{mkRole} >>= \lambda q \to  \mathtt{message} \; p \; q \; t \; (\mathtt{cost} \; t)
     >> \mathtt{pipe} \; r \; q \; k
  \end{array}

  \\[.4cm]
  \mathtt{rpipe2} :: \mathtt{CGT}
  \\
  \mathtt{rpipe2} = \mathtt{gclose} \; \$ \; \mathtt{mkRole}
    >>= \lambda r \to \mathtt{grec} \; \$ \; \lambda x \to \mathtt{pipe}
      \; [(t_1, c_1), (t_2, c_2)] \; r \; x
\end{array}
$
}
\vskip.2cm

\noindent
The code for \lstinline|rpipe2| produces the following global type:
{
$\gFix X. \role{p} \gMsg
\role{q} \gTy{\text{\tt t1}\hasCost \text{\tt c1}}. \role{q} \gMsg \role{r}
\gTy{\text{\tt t2}\hasCost \text{\tt c2}}. X$.
}
Notice that embedding a global type language in Haskell allows us to compute
topologies based on any input parameters, such as the number of stages of a
pipeline, that would otherwise require the use of extensions to \MPST, e.g.\
parameterised roles \cite{CHJNY:2019, DYBH12}. However, to check
well-formedness, we need to instantiate the parameters.

We provide functions \lstinline|cost| and \lstinline|latency|, both of
type $\mathtt{CGT} \to \mathtt{Time}$, to compute the set of equations that
describe the cost (latency) of an input global type. To obtain a
particular prediction, the user needs to provide an instantiation of all free
size and cost variables in the equations, including the transmission costs
between participants.

\section{Evaluation}
\label{sec:eval}
This section presents a number of benchmarks used to evaluate the predictive
power of \CAMP. Our benchmarks are taken from multiple different sources,
mostly \MPST-based tools \cite{NCY2015, CY:2020, CHJNY:2019,INY19,OOPSLA20FStar},
but also a subset of the Savina actor benchmarking suite \cite{IS14}. We
categorise our benchmarks following the structure of the Savina benchmarking
suite:
\begin{enumerate*}[label=(\roman*)]
      \item microbenchmarks,
      \item concurrency benchmarks, and
      \item parallel algorithms.
\end{enumerate*}
Microbenchmarks focus on different structures and protocols, and are aimed at
testing and evaluating the different features of \CAMP. Concurrency
benchmarks are aimed at evaluating the impact on communication and
synchronisation. This can be useful to, e.g.\ estimate server response times,
and set the appropriate timeouts in larger systems. In the context of
parallel algorithms, the main use of the cost models is to predict the
parallel speedups achieved by a particular parallelisation, without needing
to run or profile the application.

\subsection{Methodology}
We follow a series of steps in order to make our results as consistent as
possible. We will detail now these steps, highlighting which part is
automated, and which needs to be provided by the developer. Our methodology
is divided in two parts:
\begin{enumerate*}
\item characterising the target architecture; and
\item benchmark cost analysis.
\end{enumerate*}

\myparagraph{Characterising the target architecture.}
To tailor a cost analysis to a specific target architecture, we need to
characterise the costs of sending/receiving data between nodes. This requires
three steps:
\begin{enumerate*}
      \item specifying the amount of nodes, and the amount of processors/cores
      per node;
      \item estimate message latencies between nodes; and,
      \item profiling send/receive operations in the required
      languages/frameworks with inputs of different sizes.
\end{enumerate*}

We require the results of these steps to be stored in a \verb|.hs| file, as
an architecture description, that will be imported and used by \CAMP's cost
models. These steps must be performed only once per architecture and
programming language.

Additionally to our theory, the implementation allows programmers to specify
an overhead for running multiple participants in a single node. This is to
account for all factors that \CAMP{} is currently not considering for
deriving cost equations. See \S \ref{sec:concl} for a discussion.

\myparagraph{Benchmark cost analysis.} This is the main part of the cost
analysis. This part \emph{does not require that the target application is
implemented using an \MPST-based framework}. Assume that we start with a
target application, already implemented. The steps of our methodology are the
following:

\begin{enumerate}
\item \textbf{Write its global type.} Since most of our benchmarks are
derived from implementations in other \MPST-based tools, this step is
straightforward. For non \MPST-based implementations, the developer needs
to analyse the communication protocol and write it as a global type.

\item \textbf{Extract the sequential parts.} The sequential parts
must be extracted as self-contained implementations, that can
be run independently of the whole distributed system.

\item \textbf{Run the profiler on the sequential parts.} Our profiler
requires multiple input sizes, measures the execution costs of the sequential
parts on these input sizes, and performs cubic spline interpolation on the
gathered data. Note that the sequential cost is only valid for inputs of
sizes that are within the measured range. This part can be omitted when using
a static cost analysis, or the cost equations are known and provided
manually.

\item \textbf{Annotate the global type} and extract cost equations.

\item \textbf{Instantiate the cost} by feeding the profiling information for
both the target architecture and the sequential parts.
\end{enumerate}

\subsection{Benchmark Structure}
\label{sec:benchstructure}
We list and provide a brief explanation of all the benchmarks that we used
for the cost models. We used two different hardware configurations for the
evaluation. We name them \desktop{} and \hpc{}: \desktop{} is a 4-core
Intel(R) Core(TM) i7-6700 CPU @ 3.40GHz with hyperthreading, and \hpc{}
comprises 2 NUMA nodes, 12 cores per node and 62GB of memory, using Intel
Xeon CPU E5-2650 v4 @ 2.20GHz chips. \hpc{} is an HPC cluster that uses PBS
queuing mechanism. We made sure that we consistently selected the same
hardware for every execution. In the remainder of this section, we will
specify whether the benchmarks were run on \desktop{} or \hpc{}.

We used the benchmarks as defined in the different sources from where we took
the source code.  Overall, we used averages of > 50 repetitions for benchmarks
with large computation costs, and linear regression ($95\%$ CI) for smaller
(micro-benchmarks such as ping-pong, all-to-all, etc).

\myparagraph{Microbenchmarks.}
\textbf{\emph{Recursive Ping-Pong}} is the recursive ping-pong example.
We run both the Scala benchmark (\bench{pp-akka}) from the Savina
benchmarking suite \cite{IS14}, and the OCaml version taken from
\citet{INY19} on \desktop, on three different transports
(\bench{pp-ev}, \bench{pp-lwt}, \bench{pp-ipc-}$n$). Since
the cost of sending in the \bench{ipc} transport depends on the input
size, we use $n$ to differentiate different runs of this benchmark with
different input sizes.
We introduced an arbitrary computation to the Scala version to increase the
local computation costs. {\bf\emph{Thread Ring}} (\bench{ring}) is the
Scala version from \citet{IS14}, both with and without asynchronous message
optimisations, on
\desktop{}. {\bf\emph{Counting Actor}} (\bench{count}) is a
benchmark with two actors, one of which counts the number of messages
received from the other. This is the Savina microbenchmark \citet{INY19} on
\desktop.
{\bf\emph{One-to-All}},
{\bf\emph{All-to-One}} and {\bf\emph{All-to-All}}: we use the Go one-to-all, all-to-one and all-to-all Go
implementations
(\bench{1a}, \bench{a1} and \bench{aa})
in \cite{CHJNY:2019}, all run on \desktop.

\myparagraph{Concurrency Benchmarks.}
{\bf\emph{Two-Buyer Protocol}}  (\bench{twobuy}). We use an F$\star$ implementation taken from
\cite{OOPSLA20FStar}, and extracted into OCaml. {\bf\emph{Sleeping Barber}}x,
  {\bf\emph{Dining
Philosophers}} and {\bf\emph{Cigarette Smoker}} (\bench{barb},
\bench{dphil},
\bench{csmok}). These are the Savina Akka benchmarks in
\cite{IS14}, run on \desktop. {\bf\emph{K-Nucleotide}}, {\bf\emph{Spectral-Norm}} and
{\bf\emph{Regex-DNA}}
(\bench{kn}, \bench{sn}, \bench{dna}).
 These benchmarks are Go implementations taken from
\cite{CHJNY:2019}, based on the Computer Language Benchmarks Game, and use
different combinations of scatter, gather, choices and recursion.


\myparagraph{Parallel Algorithms.}
All these benchmarks were run on \hpc, and they were taken from two sources:
\citet{NCY2015}(NBody, linear equation solver, wordcount and adpredictor) and
\citet{CY:2020}(dot product, fast fourier transform and mergesort).

\myparagraph{\citet{NCY2015}.}
The work by \citet{NCY2015} has implemented representative
parallel benchmarks from \cite{dwarfs-cacm}.
{\bf\emph{NBody}}
(\bench{nb}) is a 2D NBody simulation in C+MPI which
is implemented as a thread ring with asynchronous communication
optimisations. {\bf\emph{Linear equation solver}}
(\bench{ls}) is parallelised using a
wraparound mesh. Similarly to the NBody example, we required the extension
with asynchronous communication optimisation. {\bf\emph{WordCount}}
(\bench{wc}) and
{\bf\emph{AdPredictor}}
(\bench{ap}) are parallelised using map-reduce.

\begin{figure}
  \footnotesize
      \begin{tikzpicture}[%
            ,source/.style={%
                  ,rectangle
                  ,draw
                  ,minimum height=1.4cm
                  ,minimum width=.15cm
                  ,xshift=-.7cm
            }
            ,sink/.style={%
                  ,rectangle
                  ,draw
                  ,minimum height=1.4cm
                  ,minimum width=.15cm
                  ,xshift=.7cm
            }
            ,buff1/.style={%
                  ,draw
                  ,rectangle
                  ,rounded corners
                  ,yshift=.5cm
                  ,minimum width=.3cm
                  ,minimum height=.5cm
            }
            ,buff2/.style={%
                  ,draw
                  ,rectangle
                  ,rounded corners
                  ,yshift=-.5cm
                  ,minimum width=.3cm
                  ,minimum height=.5cm
            }
            ,lblup/.style={%
                  ,above
                  ,midway
                  ,yshift=.55cm
            }
            ,lblbelow/.style={%
                  ,below
                  ,midway
                  ,yshift=-.65cm
            }
      ]

      \node (orig0) at (-.5,0) {};
      \node[buff1]  (ba0)  at (orig0) {\color{blue}\textbf{A}} ;
      \node[buff2]  (bb0)  at (orig0)  {\color{blue}\textbf{B}} ;
      \node[source] (src0) at (orig0) {};
      \node[sink]   (snk0) at (orig0) {};
      \draw ($(ba0.north west)+(-.15cm,.2cm)$)
            rectangle
            ($(bb0.south east)+(.15cm,-.2cm)$) ;
      \node[left=0.7cm of orig0] { (a) } ;

      \quad\quad      \quad\quad
      \draw[->,thick, blue]
            (ba0) -- ($(src0.center)+(0,.2)$)
            node[lblup]{\color{blue} \scriptsize ready } ;

      \node (orig1) at (2.375,0) {};
      \node[buff1]   (ba1) at (orig1) {\scriptsize \color{blue}\textbf{A}} ;
      \node[buff2]   (bb1) at (orig1)  {\scriptsize \color{blue}\textbf{B}} ;
      \node[source] (src1) at (orig1) {};
      \node[sink]   (snk1) at (orig1) {};
      \draw ($(ba1.north west)+(-.15cm,.2cm)$)
            rectangle
            ($(bb1.south east)+(.15cm,-.2cm)$) ;
      \node[left=0.7cm of orig1] { (b) } ;

      \quad\quad
      \draw[->,thick, purple]
            ($(src1.center)+(0,.2)$)
            --
            (ba1)
            node[lblup]{\color{purple} \scriptsize copy } ;
      \draw[->,thick, blue]
            (bb1)
            --
            ($(src1.center)+(0,-.2)$)
            node[lblbelow]{\color{blue} \scriptsize ready } ;
      \draw[->,thick, blue]
            ($(snk1.center)+(0,.2)$)
            --
            (ba1)
            node[lblup]{\color{blue} \scriptsize ready } ;

      \node (orig2) at (5.25,0) {};
      \node[buff1]   (ba2) at (orig2) {\color{blue}\textbf{A}} ;
      \node[buff2]   (bb2) at (orig2)  {\color{blue}\textbf{B}} ;
      \node[source] (src2) at (orig2) {};
      \node[sink]   (snk2) at (orig2) {};
      \draw ($(ba2.north west)+(-.15cm,.2cm)$)
            rectangle
            ($(bb2.south east)+(.15cm,-.2cm)$) ;
      \node[left=0.7cm of orig2] { (c) } ;

      \quad\quad
      \draw[->,thick,purple]
            ($(src2.center)+(0,-.2)$)
            --
            (bb2)
            node[lblbelow, yshift=-.06cm]{\color{purple} \scriptsize copy } ;
      \draw[->,thick, purple]
            (ba2)
            --
            ($(snk2.center)+(0,.2)$)
            node[lblup]{\color{purple} \scriptsize copy } ;

      \node (orig3) at (8.125,0) {};
      \node[buff1]   (ba3) at (orig3) {\color{blue}\textbf{A}} ;
      \node[buff2]   (bb3) at (orig3)  {\color{blue}\textbf{B}} ;
      \node[source] (src3) at (orig3) {};
      \node[sink]   (snk3) at (orig3) {};
      \draw        ($(ba3.north west)+(-.15cm,.2cm)$) rectangle
                   ($(bb3.south east)+( .15cm,-.2cm)$) ;
      \node[left=0.7cm of orig3] { (d) } ;

      \quad\quad
      \draw[->,thick,purple]
            ($(src3.center)+(0,-.2)$)
            --
            (bb3)
            node[lblbelow, yshift=-.06cm]{\color{purple} \scriptsize copy } ;

      \draw[->,thick,blue]
            (ba3)
            --
            ($(src3.center)+(0,.2)$)
            node[lblup]{\color{blue} \scriptsize ready} ;
      \draw[->,thick, blue]
            ($(snk3.center)+(0,-.2)$)
            --
            (bb3)
            node[lblbelow]{\color{blue} \scriptsize ready} ;

      \node (orig4) at (11,0) {};
      \node[buff1]   (ba4) at (orig4) {\color{blue}\textbf{A}} ;
      \node[buff2]   (bb4) at (orig4)  {\color{blue}\textbf{B}} ;
      \node[source] (src4) at (orig4) {};
      \node[sink]   (snk4) at (orig4) {};
      \draw        ($(ba4.north west)+(-.15cm,.2cm)$) rectangle
                   ($(bb4.south east)+( .15cm,-.2cm)$) ;
      \node[left=0.7cm of orig4] { (e) } ;

      \draw[->,thick,purple]
            ($(src4.center)+(0,.2)$)
            --
            (ba4)
            node[lblup]{\color{purple} \scriptsize copy} ;
      \draw[->,thick, purple]
            (bb4)
            --
            ($(snk4.center)+(0,-.2)$)
            node[lblbelow, yshift=-.06cm]{\color{purple} \scriptsize copy} ;

      \end{tikzpicture}
      \caption{Double-Buffering}
      \label{fig:double-buff}
\end{figure}
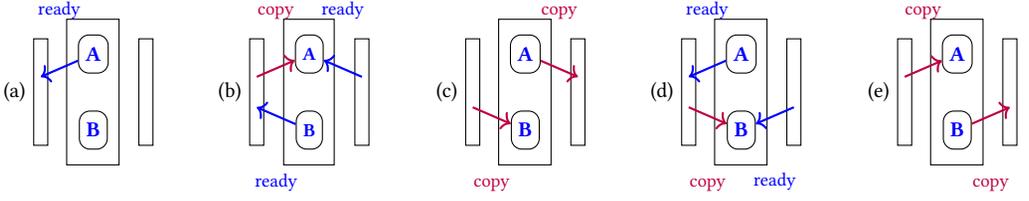

\myparagraph{Double-Buffering Algorithm} (\bench{dbuff})
\cite{doublebuffer} is a well-known technique for increasing the throughput
of a device that has two buffers. To accurately represent a double-buffering
protocol, we use \CAMP's extension with asynchronous message
optimisations. We show the protocol below, using participants $\Rp$ for \emph{source},
$\Rq$ for \emph{sink} and $\Rr$ for the \emph{service}:\\[1mm]
\centerline{
{\small
  $
  \begin{array}{lrl}
\Rr \Rp \gSend \gTy{r_1} .
\Rr \Rp \gSend \gTy{r_2} .
\gFix X. &\hspace{-2mm}
   \Rp \Rr \gRecv \gTy{r_1} . 
   \Rp \gMsg \Rr \gTy{s_1} .
   \Rq \gMsg \Rr \gTy{t_1} .
   \Rr \gMsg \Rq \gTy{u_1} .
   \Rr \Rp \gSend \gTy{r_1} .
   \Rp \Rr \gRecv \gTy{r_2} . 
   \\
   &
   \Rp \gMsg \Rr \gTy{s_2} .
   \Rq \gMsg \Rr \gTy{t_2} .
   \Rr \gMsg \Rq \gTy{u_2} .
   \Rr \Rp \gSend \gTy{r_2} .
   & \hspace{-2mm} X
   \end{array}
$
}}
\vspace{1mm}
Fig. \ref{fig:double-buff} illustrates this protocol. Suppose a
streaming service with two buffers ({\color{blue} \textbf{A}} and
{\color{blue} \textbf{B}}), a source (left) and a sink (right). First
\textbf{(a)}, buffer {\color{blue} \textbf{A}} is ready to copy an element
(message $r_1$),
and so it notifies the source. Then \textbf{(b)}, an element is copied into
{\color{blue} \textbf{A}} (message $s_1$). Meanwhile, both the sink and
{\color{blue} \textbf{B}} can notify the service and the source
(respectively) that they are ready to copy (messages $r_2$ and $t_1$). This
implies that, next \textbf{(c)}, both the service and
the sink can copy an element \emph{in parallel}
(messages $s_1$ and $u_1$). Note that using a single
buffer, this would not be possible, since we would risk overwriting the
buffer before the sink copied it. In the next iteration \textbf{(d)},
we can swap the buffers, and repeat the process. By swapping the buffers,
both the service (buffer {\color{blue} \textbf{A}}) and the sink can notify
that they are ready, even if data is still being copied to buffer
{\color{blue} \textbf{B}} (messages $r_2$ and $t_2$). Finally \textbf{(e)},
buffer {\color{blue} \textbf{A}} and the sink can copy the respective next
elements, again in parallel (messages $s_2$ and $u_2$).

\begin{myremark}[Double-buffering]\rm
  \label{rem:DB}
{\bf (1)}
Definition~\ref{def:async} does not directly
check the asynchronous subtyping of local types from the above global type
as our rules do not include unrolling recursive global types (to obtain
the decidability result). However we can apply any (sound) asynchronous
subtyping relation from the literature since the cost
calculation does not related to well-formedness of
global types. For example,
local types that behave as the projections of this global type are known as
deadlock-free \cite{YoshidaVPH08,mostrous_yoshida_honda_esop09}.
\
{\bf (2)}
The syntax of the global type in \cite{mostrous_yoshida_honda_esop09} uses
the explicit channels in global types. We translated them to
corresponding labels, which does not affect the cost calculation, and
our end-point implementation is essentially as identical as
one in \cite{YoshidaVPH08}.
\end{myremark}

\myparagraph{\citet{CY:2020}.} {\bf\emph{Mergesort}}
(\bench{ms}) follows a divide-and-conquer
protocol. {\bf\emph{Fast Fourier Transform}}
(\bench{fft}) is the Cooley-Tukey
fast-fourier transform algorithm, implemented in C using pthreads,
parallelised using a butterfly topology as illustrated in
Fig.~\ref{fig:butterfly}. It uses a divide-and-conquer strategy based on the
following equation (we use $\omega_N^{2k}=\omega_{N/2}^k$):\\[1mm]
\centerline
{\small
$
\begin{array}{rcl}
  X_k  =  \sum_{j=0}^{N-1}x_j\,\omega_N^{jk}
   = \sum_{j=0}^{N/2-1}x_{2j}\,\omega_{N/2}^{jk}
  + \omega_N^k\sum_{j=0}^{N/2-1}x_{2j+1}\,\omega_{N/2}^{jk}
\end{array}
$
}
\vspace{1mm}
\noindent
Each of the two separate sums are DFT of half of the original vector members,
separated into even and odd. Recursive calls can then divide the input set
further based on the value of the next binary bits.
Fig.~\ref{fig:butterfly}(a) illustrates this
recursive principle, called {\em butterfly}, where two
different intermediary values can be computed in constant time from the results
of the same two recursive calls.
The complete algorithm for a size-8 is illustrated by the diagram from
Fig.\ref{fig:butterfly}(b).
The global type in Fig.\ref{fig:butterfly}(c) shows the
resulting global type, in terms of the indices of the participants that need
to communicate. We use keyword $\mathsf{foreach}$ to
represent that the body must be expanded for all natural numbers
that satisfy the condition (similarly to
parameterised \MPST{} \cite{DYBH12}).
\CAMP{} uses a recursive definition that expands into a butterfly of the
required size.
Fig.\ref{fig:butterfly}(d) shows the (abstract) code of our implementation
for participants $0$ and $1$. We show the high-level structure, in terms of
$\mathsf{send}$ and $\mathsf{receive}$. Suppose that participants receive as
initial value $v$, the deinterleaving of the input vector. Then, they all
start applying a sequential $\mathsf{fft}$, and communicate the result to the
appropriate participants. Then, they apply the necessary addition and
subtraction to compute their part of the result, and communicate it
accordingly.

\begin{figure}
\begin{tabular}{@{}c@{}l@{}}
\begin{minipage}{17.5em}
{\bf \textsf (a) Butterfly pattern}

\hspace{0em}
\xymatrix@C=15pt@R=15pt{
{\footnotesize \text{$x_{k-N/2}$}}
\ar[dr]\ar@{.>}[r]
 & {\hole}\save[]+<1.3cm,-0.2cm>*\txt<4cm>{\small
   $X_{k-N/2}=x_{k-N/2}+$\\ \hspace{6em} $x_{k}*\omega_N^{k-N/2}$}\restore\\
{\footnotesize \text{$x_{k}$}}\ar[ur]
\ar@{.>}[r] 
& {\hole}\save[]+<1.6cm,0cm>*{\text{\small 
   $X_k=x_{k-N/2}+x_{k}*\omega_N^{k}$}}\restore\\
}

\smallskip
{\bf \textsf (b) FFT diagram}\\ 
{\footnotesize
$   
\begin{minipage}{15em}
\centering
\xymatrix@C=18pt@R=15pt{
  \ar[r]^ {x_0}
  & *+[o][F]{0} \ar[dr]\ar@{.>}[r]^1
  & *+[o][F]{0} \ar[ddr]\ar@{.>}[r]^2
  & *+[o][F]{0} \ar[ddddr]\ar@{.>}[r]^3
  & *+[o][F]{0} \ar[r]^{X_0}
  & \\
  \ar[r]^{x_4}
  & *+[o][F]{1} \ar[ur]\ar@{.>}[r]
  & *+[o][F]{1} \ar[ddr]\ar@{.>}[r]
  & *+[o][F]{1} \ar[ddddr]\ar@{.>}[r]
  & *+[o][F]{1} \ar[r]^{X_1}
  & \\
  \ar[r]^{x_2}
  & *+[o][F]{2} \ar[dr]\ar@{.>}[r]
  & *+[o][F]{2} \ar[uur]\ar@{.>}[r]
  & *+[o][F]{2} \ar[ddddr]\ar@{.>}[r]
  & *+[o][F]{2} \ar[r]^{X_2}
  & \\
  \ar[r]^{x_6}
  & *+[o][F]{3}  \ar[ur]\ar@{.>}[r]
  & *+[o][F]{3}  \ar[uur]\ar@{.>}[r]
  & *+[o][F]{3} \ar[ddddr]\ar@{.>}[r]
  & *+[o][F]{3} \ar[r]^{X_3}
  & \\
  \ar[r]^{x_1}
  & *+[o][F]{4}  \ar[dr]\ar@{.>}[r]
  & *+[o][F]{4}  \ar[ddr]\ar@{.>}[r]
  & *+[o][F]{4} \ar[uuuur]\ar@{.>}[r]
  & *+[o][F]{4} \ar[r]^{X_4}
  & \\
  \ar[r]^{x_5}
  & *+[o][F]{5}  \ar[ur]\ar@{.>}[r]
  & *+[o][F]{5} \ar[ddr]\ar@{.>}[r]
  & *+[o][F]{5} \ar[uuuur]\ar@{.>}[r]
  & *+[o][F]{5} \ar[r]^{X_5}
  & \\
  \ar[r]^{x_3}
  & *+[o][F]{6}  \ar[dr]\ar@{.>}[r]
  & *+[o][F]{6}  \ar[uur]\ar@{.>}[r]
  & *+[o][F]{6}  \ar[uuuur]\ar@{.>}[r]
  & *+[o][F]{6} \ar[r]^{X_6}
  & \\
  \ar[r]^{x_7}
  & *+[o][F]{7}  \ar[ur]\ar@{.>}[r]
  & *+[o][F]{7}  \ar[uur]\ar@{.>}[r]
  & *+[o][F]{7} \ar[uuuur]\ar@{.>}[r]
  & *+[o][F]{7} \ar[r]^{X_7}
  & \\
}
\end{minipage} 
$}
\end{minipage}

&

      \begin{minipage}{.49\textwidth}
            \noindent
            \textbf{(c) Global type}
            \[ \begin{array}{@{}l@{}}
                  \Pi n. \mathsf{foreach} (i < 2^n)\lbrace
                  \\ \;
                  \mathsf{foreach}(l < n)\lbrace
                  \\ \;\;
                  \mathsf{foreach}(i < 2^l)\lbrace
                  \\ \;\;\;
                  \mathsf{foreach}(j < 2^{n-l-1})\lbrace
                  \\ \;\;\;\;
                  \mathsf{foreach}(k < 2)\lbrace
                  \\ \;\;\;\;\;
                  \mathsf{foreach}(k' < 2)\lbrace
                  \\ \;\;\;\;\;\;
                  \Rp_{i \times 2^{n - l} + k \times 2 ^ {n - l - 1} + j}
                  \\ \;\;\;\;\;\;\;
                  \gMsg
                  \Rp_{i \times 2^{n - l} + k' \times 2 ^ {n - l - 1} + j}
                  \rbrace
                  \rbrace
                  \rbrace
                  \rbrace
                  \rbrace
                  \rbrace
               \end{array}
            \]

            \noindent
            \textbf{(d) Programs}
            {\small
            \[\begin{array}{@{}l l@{}}
                  \mathsf{P}_0(v) \Coloneqq
                  &
                  \mathsf{P}_1(v) \Coloneqq
                  \\
                  \quad
                  \begin{array}{@{}l@{}}
                        x \leftarrow \mathtt{fft}(v); \\
                        \mathtt{send}\;\mathsf{P_1} \; x; \\
                        y \leftarrow \mathtt{recv} \; \mathsf{P_1} ; \\
                        x \leftarrow \mathtt{zip}_{+}(x,y) ; \\
                        \mathtt{send}\;\mathsf{P_2} \; x; \\
                        y \leftarrow \mathtt{recv}\;\mathsf{P_2}; \\
                        x \leftarrow \mathtt{zip}_{+}(x,y) ; \\
                        \mathtt{send}\;\mathsf{P_4} \; x; \\
                        y \leftarrow \mathtt{recv}\;\mathsf{P_4}; \\
                        x \leftarrow \mathtt{zip}_{+}(x,y) ; \\
                        \mathtt{return} (x)
                  \end{array}
                  &
                  \quad
                  \begin{array}{@{}l@{}}
                        x \leftarrow \mathtt{fft'}(v); \\
                        \mathtt{send}\;\mathsf{P_0} \; x; \\
                        y \leftarrow \mathtt{recv} \; \mathsf{P_0} ; \\
                        x \leftarrow \mathtt{zip}_{-}(x,y) ; \\
                        \mathtt{send}\;\mathsf{P_3} \; x; \\
                        y \leftarrow \mathtt{recv}\;\mathsf{P_3}; \\
                        x \leftarrow \mathtt{zip}_{+}(x,y) ; \\
                        \mathtt{send}\;\mathsf{P_5} \; x; \\
                        y \leftarrow \mathtt{recv}\;\mathsf{P_5}; \\
                        x \leftarrow \mathtt{zip}_{+}(x,y) ; \\
                        \mathtt{return} (x);
                  \end{array}
              \end{array}
            \]
            }
      \end{minipage}
      \end{tabular}
\caption{Butterfly Network Topology for Fast Fourier Transform}
      \label{fig:butterfly}
\end{figure}

\begin{figure}
   {
    \scriptsize
    \begin{tabular}{@{}l@{\qquad\qquad}l@{}}
      \toprule
    \begin{tabular}[t]{@{}l@{\quad} l @{\quad} l@{\quad} l@{\quad} l @{}}
      \textbf{\textsl{Benchmark}}
      &
        \textbf{\textsl{Protocol}}
      &
        \textbf{\textsl{Cost (s)}}
      &
        Real (s)
      &
        Diff (\%)
      \\ \midrule
      \multicolumn{5}{@{}l}{\OCaml{}~\cite{INY19}}
      \\ \midrule
      \benchscript{pp-ev}
      & PP
      & 6.39e-6
      & 6.29e-6
      & $2.07$
      \\
      \benchscript{pp-lwt}
      & PP
      & 4.20e-7
      & 4.07e-7
      & $3.29$
      \\
      \benchscript{pp-ipc-0}
      & PP
      & 6.27e-6
      & 5.95e-6
      & $5.40$
      \\
      \benchscript{pp-ipc-1}
      & PP
      & 6.28e-6
      & 6.12e-6
      & $2.66$
      \\
      \benchscript{pp-ipc-2}
      & PP
      & 6.42e-6
      & 6.19e-6
      & $3.67$
      \\
      \benchscript{pp-ipc-3}
      & PP
      & 7.96e-6
      & 7.80e-6
      & $2.08$
      \\
      \benchscript{pp-ipc-4}
      & PP
      & 2.54e-5
      & 2.09e-5
      & $21.9$
      \\
      \benchscript{pp-ipc-5}
      & PP
      & 2.20e-4
      & 2.19e-4
      & $0.62$
      \\ \midrule
      \multicolumn{5}{@{}l}{Go \cite{CHJNY:2019}}
      \\ \midrule
      \benchscript{aa-2}
      & AA
      & 2.42e-6
      & 2.13e-6
      & $14$
      \\
      \benchscript{aa-4}
      & AA
      & 4.85e-6
      & 4.45e-6
      & $8.8$
      \\
      \benchscript{1a-2}
      & S
      & 6.28e-6
      & 4.46e-6
      & $0.41$
      \\
      \benchscript{1a-3}
      & S
      & 8.12e-6
      & 7.64e-6
      & $6.42$
      \\
      \benchscript{1a-4}
      & S
      & 9.98e-6
      & 9.85e-6
      & $1.34$
      \\
      \benchscript{a1-2}
      & G
      & 2.8e-6
      & 2.14e-6
      & 30.66
      \\
      \benchscript{a1-3}
      & G
      & 3.27e-6
      & 2.86e-6
      & 14.09
      \\
      \benchscript{a1-4}
      & G
      & 3.74e-6
      & 3.30e-6
      & 13.22
      \\
      \benchscript{sn-1}
      & SG, CR
      & 11.62
      & 11.58
      & 0.37
      \\
      \benchscript{sn-2}
      & SG, CR
      & 5.87
      & 5.81
      & 1.05
      \\
      \benchscript{sn-3}
      & SG, CR
      & 3.98
      & 3.95
      & 0.79
      \\
      \benchscript{sn-4}
      & SG, CR
      & 3.06
      & 3.05
      & 0.08
      \\
      \benchscript{kn-1}
      & SG
      & 10.88
      & 10.65
      & 2.16
      \\
      \benchscript{kn-2}
      & SG
      & 11.93
      & 11.13
      & 7.15
      \\
      \benchscript{kn-3}
      & SG
      & 14.01
      & 13.01
      & 7.69
      \\
     \benchscript{kn-4}
      & SG
      & 17.28
      & 17.17
      & 0.66
      \\
      \benchscript{dna-1}
      & SG
      & 3.00
      & 2.93
      & 2.38
      \\
      \benchscript{dna-2}
      & SG
      & 3.34
      & 3.39
      & 1.38
      \\
      \benchscript{dna-3}
      & SG
      & 3.68
      & 3.66
      & 0.48
      \\
      \benchscript{dna-4}
      & SG
      & 4.02
      & 4.01
      & 0.24
      \\ \midrule
      \multicolumn{5}{@{}l}{Savina \cite{IS14}}
      \\ \midrule
      \benchscript{pp-akka}
      & PP
      & 4.4e-5
      & 3.99e-5
      & 10.28
      \\
      \benchscript{ring}
      & Ring
      & 7.09-3
      & 5.04e-3
      & 40.67
      \\
      \benchscript{ring-opt}
      & Ring
      & 5.24e-4
      & 5.4e-4
      & 2.8
      \\
      \benchscript{count}
      & CR
      & 1.98e-4
      & 1.53e-4
      & 29.41
      \\
      \benchscript{barb}
      & CR
      & 3.5e-4
      & 3.36e-4
      & 4.16
      \\
      \benchscript{dphil}
      & CR
      & 2.03e-4
      & 1.92e-4
      & 5.75
      \\
      \benchscript{csmok}
      & CR
      & 1.05e-4
      & 1.03e-4
      & 1.6
      \\
    \end{tabular} &
    \begin{tabular}[t]{@{}l@{\quad} l @{\quad} l@{\quad} l@{\quad} l @{}}
      \textbf{\textsl{Benchmark}}
      &
      \textbf{\textsl{Protocol}}
      &
        \textbf{\textsl{Cost (s)}}
      &
        Real (s)
      &
        Diff (\%)
      \\ \midrule
      \multicolumn{5}{@{}l}{C-MPI \cite{NCY2015}}
      \\
      \midrule
      \benchscript{nb-1}
      & Ring
      & 177.91
      & 177.91
      & 2e-6
      \\
      \benchscript{nb-4}
      & Ring
      & 45.17
      & 44.71
      & 1.02
      \\
      \benchscript{nb-16}
      & Ring
      & 12.07
      & 11.10
      & 8.79
      \\
      \benchscript{nb-32}
      & Ring
      & 6.69
      & 7.84
      & 15
      \\
      \benchscript{nb-64}
      & Ring
      & 4.29
      & 4.28
      & 0.086
      \\
      \benchscript{ls-1}
      & Mesh
      & 10.98
      & 10.58
      & 3.78
      \\
      \benchscript{ls-4}
      & Mesh
      & 4.34
      & 4.44
      & 2.23
      \\
      \benchscript{ls-16}
      & Mesh
      & 1.88
      & 1.72
      & 9.67
      \\
     \benchscript{ls-32}
      & Mesh
      & 1.19
      & 1.30
      & 8.79
      \\
      \benchscript{ls-64}
      & Mesh
      & 0.87
      & 0.72
      & 0.20
      \\
      \benchscript{wc-1}
      & MR
      & 57
      & 57
      & 1e-5
      \\
      \benchscript{wc-2}
      & MR
      & 31.8
      & 27.5
      & 17
      \\
      \benchscript{wc-8}
      & MR
      & 17
      & 16
      & 6.26
      \\
      \benchscript{wc-24}
      & MR
      & 17.5
      & 19.5
      & 10
      \\
      \benchscript{wc-64}
      & MR
      & 20.6
      & 23.0
      & 10
      \\
      \benchscript{ap-1}
      & MR
      & 657
      & 656
      & 7e-2
      \\
      \benchscript{ap-2}
      & MR
      & 330
      & 284
      & 16
      \\
      \benchscript{ap-8}
      & MR
      & 67
      & 65
      & 3.4
      \\
      \benchscript{ap-24}
      & MR
      & 51
      & 45
      & 13
      \\
      \benchscript{ap-64}
      & MR
      & 74
      & 64
      & 17
      \\ \midrule
      \multicolumn{5}{@{}l}{C-pthreads \cite{CY:2020}}
      \\
      \midrule
       \benchscript{fft}
      & Btfly
& 143.1
& 143.0
& 5.8e-2
      \\
      \benchscript{fft-2}
      & Btfly
& 74.3
& 74.1
& 1.7e-1
      \\
      \benchscript{fft-4}
      & Btfly
& 40.5
& 40.8
& 7.2e-1
      \\
      \benchscript{fft-8}
      & Btfly
& 24.3
& 21.8
& 12
      \\
      \benchscript{fft-32}
      & Btfly
& 13.6
& 12.4
& 9.3
      \\
      \benchscript{ms-2}
      & d\&c
& 53.6
& 53.2
& 7.3-1
      \\
      \benchscript{ms-4}
      & d\&c
& 31.39
& 31.33
& 1.3-1
      \\
      \benchscript{ms-8}
      & d\&c
& 20.1
& 18.1
& 11.3
      \\
      \benchscript{ms-16}
      & d\&c
& 14.6
& 14.2
& 2.5
      \\ \midrule
      \multicolumn{5}{@{}l}{OCaml \cite{OOPSLA20FStar}}
      \\
      \midrule
      \benchscript{twobuy}
      & CR
      & 4.0133
      & 4.0035
      & 0.24
      \\
      \\ \midrule
      \multicolumn{5}{@{}l}{C \cite{YoshidaVPH08}}
      \\
      \midrule
      \benchscript{dbuff}
      & Double Buffer
      & 2.54e-1
      & 2.12e-1
      & 19.7
      \\
    \end{tabular}
      \\ \bottomrule
    \end{tabular}
  }
\caption{Predicted vs real execution times: PP = Ping-Pong,
AA = All-to-All, S = Scatter, G = Gather,
    SG = Scatter-Gather, CR = choice with
recursion, MR = MapReduce, D\&C = parallel divide and
conquer.}
\label{fig:evaluation}
\vspace{-0.4cm}
\end{figure}
\subsection{Discussion of Predicted Execution Times}
\label{subsec:result}
Fig. \ref{fig:evaluation} shows a comparison, for each benchmark, of the
execution times compared with the predictions by our cost models. For most of
our examples, we get predictions with $< 15\%$ of error. Examples include
\bench{pp-ipc-4}, \bench{a1-2}, \bench{ring},
\bench{count},
\bench{nb-32}, \bench{wc-2}, \bench{ap-2},
\bench{ap-64}, and \bench{dbuff}.  We observe that the worst
predictions are those of the microbenchmarks, with very small execution times.
Here, communication costs dominate, and are repeated a large number of times.
With such small costs, a small error is amplified after a large enough number of
iterations. An example of this is \bench{ring}, that is a recursive
ring protocol that is run for $10^5$ iterations.

When we consider examples with larger local computation costs,
most of the predictions are with less than $10\%$ error.
There are a small number of examples above than $10\%$
where errors in the prediction are due to factors
that \CAMP's cost models do
not take into account, such as
scheduler costs, cost of thread creation, or
resource contention such as shared caches.
These details that the cost models do not take into account
can also explain why, in some cases, the cost models do not predict an upper
bound of the cost, since the real executions include slowdowns due to these
factors. Note, however, that \CAMP{} offers a
quick and static first assessment of the performance behaviour of concurrent and
distributed systems which use different transports and topologies, without the
need to deploy or profile the application.

\myparagraph{Asynchronous Communication Optimisations}
Algorithms
\bench{fft},
\bench{dbuff}, \bench{nb} and \bench{ring} all rely on
asynchronous communication optimisations. Both \bench{fft} and
\bench{dbuff} require to be specified using this extension. For
\bench{ring}, we take measurements to compare the optimised and unoptimised
global types. We can observe a speedup in the execution of the protocol that is
predicted by the cost models, which is consistent with Theorem \ref{thm:opt}.

\section{Related Work}
\label{sec:relw}

\textbf{Resource Analysis and Session Types.}
\Citet{DBLP:conf/lics/Das0P18} combine session types with amortised resource
analysis in a linear type system, to reason about resource usage of
message-passing processes, but their work focuses on binary sessions in a linear
type system, while we focus on multiparty session types, and the global
execution times of the protocol.  \Citet{DBLP:journals/pacmpl/Das0P18} extend a
system of binary session types in a Curry-Howard correspondence with
intuitionistic linear logic \cite{DBLP:conf/concur/CairesP10,
DBLP:journals/mscs/CairesPT16} with temporal modalities \emph{next},
\emph{always}, and \emph{eventually}, to prescribe the timing of the
communication.  A fundamental difference with our work is that
\citet{DBLP:journals/pacmpl/Das0P18} require the introduction of delays into the
processes, to match the specified cost. In our case, the processes are left
unmodified, and the cost is computed from the protocol descriptions. Finally,
their work are limited to theory,
while our work are readily applied to real use cases.

\textbf{Asynchronous Communication Optimisation.} The first idea of
asynchronous communication optimisation was found in Scribble \cite{scribble}
where a multiparty financial protocol with message ordering permutations is
informally described. Later this idea was formalised as \emph{asynchronous
session subtyping} for the $\pi$-calculus
\cite{mostrous_yoshida_honda_esop09,MostrousY15,DBLP:conf/tlca/MostrousY09,CDSY2017,cdy14}
and its denotational properties were studied in \cite{Dezani16,DemangeonY15}.
Concurrently, because of the need of asynchronous optimisation in multiparty
protocols, several applications inspired by asynchronous subtyping have been
developed in Java \cite{H2017}, C \cite{YoshidaVPH08} and MPI-C
\cite{NgYH12,NCY2015}, but without any formal theories.
Recently, this subtyping relation was found undecidable for \emph{binary}
session types \cite{BravettiCZ17,BravettiCZ18,LY2017} and its sound algorithm
for binary session communicating automata was proposed in \cite{BCLYZ2019}.
We have implemented a different and more practical
decidable optimisation relation based on
asynchronous subtyping for multiparty session types, recently proposed
in \cite{GPPSY2021}. None of the above work
has (1) developed a formal cost theory which can justify the optimisation;
and (2) measured and compared the cost of optimised/unoptimised applications
with a formal justification. \CAMP{} is the first framework which (1)
proposes a formal cost theory with asynchronous optimisation (Theorems
\ref{thm:opt-sound} and \ref{thm:opt}) and (2)
justifies the optimisation cost against real benchmarks using (1).

\textbf{Timed Session Types.} The notion of \emph{time} has been
introduced to session types \cite{DBLP:conf/esop/BocchiMVY19,DBLP:journals/lmcs/BartolettiCM17,DBLP:conf/concur/BocchiYY14,BLY2015}, to account for protocols
that require time specifications, originated from communicating timed
automata (CTA) \cite{CTA06}.
Session types and the $\pi$-calculus processes have been related in
terms of static
typing \cite{DBLP:conf/esop/BocchiMVY19,DBLP:conf/concur/BocchiYY14},
or timed session types are linked with compliments relations
\cite{DBLP:journals/lmcs/BartolettiCM17} or CTA
\cite{BLY2015}. Among them,
\cite{DBLP:conf/esop/BocchiMVY19,DBLP:journals/lmcs/BartolettiCM17}
are limited to binary or server-client session types.
All of the above works are theoretical only,
while the work in \cite{DBLP:conf/concur/BocchiYY14} was applied to
the runtime monitoring in Python \cite{NBY2017}.
The main difference is that
the above timed session types focus on ensuring that
deadlines or time constraints are
satisfied. In contrast,
our work does not enforce any time constraints, since we
are interested on the static estimation of execution costs, but not on enforcing
that timeouts and deadlines are respected.


\textbf{Type-Based and Amortised Cost Analysis.}
\citet{POPL20Refinement} use refinement types to reason about efficiency,
cost, of Haskell programs, but they do not consider concurrency or
parallelism. Sized types \cite{DBLP:conf/popl/HughesPS96} are one of the
successful techniques for cost analysis of programs
\cite{DBLP:journals/pacmpl/AvanziniL17, DBLP:conf/ifl/PortilloHLV02,
DBLP:phd/ethos/Vasconcelos08}. Most of the uses of sized types do not deal
with concurrency and distribution. Exceptions are
\cite{DBLP:conf/popl/GimenezM16}, that address space and space-time
complexity of parallel reductions of interaction-net programs using sized and
scheduled types, but they do not address message-passing and distributed
environments. The work \cite{DBLP:conf/esop/0002S15} extends earlier
amortised cost analyses \cite{DBLP:journals/toplas/0002AH12} to parallel
reductions. Their work focuses on parallel functional programs with explicit
parallel composition, but does not address message-passing. To our best
knowledge, none of the work above addresses the cost of message-passing
constructs or distributed environments.


\section{Conclusions and Future Work}
\label{sec:concl}
We have presented \CAMP{}, a framework for statically predicting the
\emph{cost}, execution times, of concurrent and distributed systems. \CAMP{}
augments global types from the theory of multiparty session types with
\emph{local computation costs}, and its trace semantics is extended with
local computation observable actions. We have developed a way to extract cost
equations from these instrumented protocol descriptions, that we can use for
estimating upper-bounds of the execution times required by the participants
of a protocol. \CAMP{} can be used to predict the \emph{latency}, i.e.\
the execution times that the participants of a protocol will require, on
average, per iteration of the protocol. Furthermore, we extended \CAMP{} to
address \emph{asynchronous communication optimisation}.
\CAMP's cost
analysis on top of multiparty session types gives us several benefits.
Firstly, we can use global types to reason about both \emph{correctness} and
\emph{performance} of concurrent and distributed systems. Secondly, the cost
analysis can be readily applied and integrated into any \MPST{} framework.
Thirdly, it can be used in non-session-based concurrency benchmarks
by simply providing \MPST{} protocols.
It suffices to describe the global type instrumented with cost, and instantiate
the derived cost equations with measured or estimated communication latencies,
and local computation costs. And, fourthly our prototype accounts for CPU/CORE
availability of the target hardware.

\CAMP{} addresses two main concerns when estimating execution costs of
concurrent and distributed systems: communication overheads, and
synchronisation. Although these factors are a main source of inefficiency,
there are more that we still do not take into account, such as the cost of
starting new threads, the cost of context switching/scheduling, or the cost
of resource contention such as shared caches \cite{Lea97}. We plan to study how
to extend \CAMP{} to take such factors into account as future work.
\CAMP{} considers distributed systems comprised of multiple nodes,
each of which with a number of CPUs/cores. We plan to extend \CAMP's hardware
descriptions to consider heterogeneity, e.g. CPU clusters, FPGAs, etc.
\CAMP's cost models take the maximum cost of the different possible branches
in a protocol. This is sufficient to compute a worst-case execution time of
non-recursive protocols. However, we can extend our costs to take into
account the \emph{weight} of different branches, so that our cost models
would compute an average cost based on the probability to take the different
branches. Moreover, since communication latencies may not be very
predictable, we plan to study the extension of \CAMP{} to use probabilistic
cost estimations.
Finally we plan to study the development of a performance analysis tool for
existing code, based on the \emph{inference} or \emph{extraction} of the
communication protocol followed by non-session-typed implementations such as
\cite{LNTY2018,NY16,GY2020}.

\begin{acks}                            
  We thank the OOPSLA reviewers for their careful reviews and suggestions; 
  and Lorenzo Gheri and Fangyi Zhou for their comments. Francisco Ferreira
  and Keigo Imai helped testing our artifact submission.  
  The work is supported by
  EPSRC EP/T006544/1, EP/K011715/1, EP/K034413/1, EP/L00058X/1, EP/N027833/1,
  EP/N028201/1, EP/T006544/1, EP/T014709/1 and EP/V000462/1, and NCSS/EPSRC VeTSS. 
\end{acks}

\bibliography{bibliography}

\pagebreak
\appendix
\section{Proof of Lemmas \ref{lm:step_wf} and \ref{lm:cost_preservation}}
\label{app:global-cost-sound}

This proof relies on the following definitions and lemmas.

\queueWf*

\begin{proof}
  By induction on the structure of $G \xrightarrow{\ell} G'$.

  \noindent
  \underline{\textbf{Case} [GR1a]}:
  $p \gMsg  \Rq \gTy{\tau \hasCost \ccc} . G
  \xrightarrow{\Rp\Rq\tSend \tau}
  \Rp \gMsgt \Rq \gTy{\tau \hasCost \ccc} . G$.

  By $\WF(\Rp \gMsg \Rq \gTy{\tau \hasCost \ccc} . G, (\Time, \Queue))$, $\Queue =
  \Queue[\Rp\Rq \mapsto \epsilon]$, and $\WF(G, \Queue[\Rp\Rq \mapsto \epsilon])|$. By
  Def.\ \ref{def:wf_dep_queue}, we have $\WF(\Rp \gMsgt \Rq \gTy{\tau \hasCost \ccc} . G,
  \Queue[\Rp\Rq \mapsto \epsilon \cdot \Time(\Rp) + \csend(\tau)])$. This implies $\WF(\Rp
  \gMsgt \Rq \gTy{\tau \hasCost \ccc} . G, \Queue[\Rp\Rq \mapsto \Time(\Rp) + \csend(\tau)
  \cdot \epsilon])$. By Def. \ref{def:cost_trace}, $\cost(\Rp\Rq\tSend\tau) (\Time,
  \Queue[\Rp\Rq\mapsto\epsilon]) = (\Time[\Rp\addcost \csend(\tau)], \Queue[\Rp\Rq \mapsto
  \epsilon \cdot \Time(\Rp) + \csend(\tau)]) = (\Time[\Rp\addcost \csend(\tau)],
  \Queue[\Rp\Rq \mapsto \Time(\Rp) + \csend(\tau) \cdot \epsilon])$.
  Therefore, $\WF(\Rp \gMsgt \Rq \gTy{\tau \hasCost \ccc} . G, \cost(\Rp\Rq\tSend
  \tau)(\Time, \Queue))$.

  \noindent
  \underline{\textbf{Case} [GR1b]}:
  $\Rp \gMsg  \Rq \{\aInj{i}. G_i\}_{i \in I}
  \xrightarrow{\Rp\Rq\tSelect j}
  \Rp \gMsgt \Rq \; j \; \{\aInj{i}. G_i\}_{i \in I} \quad (j \in I)
  $

  By $\WF(\Rp \gMsg  \Rq \{\aInj{i}. G_i\}_{i \in I}, (\Time, \Queue))$,
  $\Queue=\Queue[\Rp\Rq\mapsto \epsilon]$,
  and
  $\WF(G_i, \Queue[\Rp\Rq\mapsto \epsilon])$.
  By Def.\ \ref{def:wf_dep_queue}
  and
  $\ccc \cdot \epsilon = \epsilon \cdot \ccc = \ccc$,
  we have
  $\WF(\Rp \gMsgt \Rq \; j \; \{\aInj{i}. G_i\}_{i \in I},
  \Queue[\Rp\Rq\mapsto \Time(\Rp) + \csend(1) \cdot \epsilon])$.
  By Def. \ref{def:cost_trace},
  $\cost(\Rp\Rq\tSelect l_j)(\Time, \Queue[\Rp\Rq \mapsto \epsilon])
  = (\Time[\Rp\addcost \csend(1)], \Queue[\Rp\Rq \mapsto \Time(\Rp) + \csend(1) \cdot \epsilon])$.
  Therefore,
  $\WF(\Rp \gMsgt \Rq \; j \; \{\aInj{i}. G_i\}_{i \in I},
  \cost(\Rp\Rq\tSelect l_j)(\Time, \Queue))$.

  \noindent
  \underline{\textbf{Case} [GR2a]}:
  $\Rp \gMsgt \Rq \gTy{\tau \hasCost \ccc} . G
  \xrightarrow{\Rp\Rq\tRecv \tau}
  \Rq \gEval (\tau \hasCost \ccc) . G$

  By $\WF(\Rp \gMsgt  \Rq \gTy{\tau \hasCost \ccc}. G, (\Time, \Queue))$,
  we have
  $Queue = \Queue[\Rp\Rq \mapsto \ccc_\Rp \cdot w]$, and
  $\WF(G, (\Time, \Queue[\Rp\Rq \mapsto w]))$.
  By Def.\ \ref{def:wf_dep_queue},
  $\WF(\Rq \gEval (\tau \hasCost \ccc) . G, (\Time[\ccc_\Rp| \Rq\addcost \crecv(\tau) + \ccc], \Queue[\Rp\Rq \mapsto w]))$.
  By Def.\ \ref{def:cost_trace},
  $\WF(\Rp \gMsgt \Rq \gTy{\tau \hasCost \ccc} . G, \cost(\Rp\Rq\tRecv \tau)(\Time, \Queue[\Rp\Rq \mapsto \ccc_\Rp \cdot w]))$.

  \noindent
  \underline{\textbf{Case} [GR2b]}:
  $\Rq \gEval (\tau \hasCost \ccc) . G
  \xrightarrow{q\tRun c}
  G$

  By $\WF(\Rq \gEval (\tau \hasCost \ccc) . G, (\Time, \Queue))$,
  we have
  $\WF(G, (\Time, \Queue))$.

  \noindent
  \underline{\textbf{Case} [GR2c]}:
  $\Rp \gMsgt \Rq \; : j \; \{\aInj{i}. G_i\}_{i \in I}
  \xrightarrow{\Rp\Rq\tBranch l_j}
  G_j$

  By $\WF(\Rp \gMsgt \Rq \; : j \; \{\aInj{i}. G_i\}_{i \in I}, (\Time, \Queue))$,
  we have
  $\Queue = \Queue[\Rp\Rq \mapsto \ccc_\Rp \cdot w]$,
  and
  $\WF(G_j, (\Time[\ccc_\Rp| \Rq \addcost \crecv(1)], \Queue[\Rp\Rq \mapsto w]))$.
  By Def.\ \ref{def:cost_trace},
  $\WF(G_j, \cost(\Rp\Rq \tBranch l_j)(\Time, \Queue[\Rp\Rq \mapsto \ccc_\Rp \cdot w]))$.

  \noindent
  \underline{\textbf{Case} [GR3]}:
  $G[\gFix X. G/X] \xrightarrow{\act} G'
  \Longrightarrow \gFix X. G \xrightarrow{\act} G'$

  Impossible, since $\neg \WF(\gFix X, \Queue)$.

  \noindent
  \underline{\textbf{Case} [GR4a]}:
  $ G \xrightarrow{\act} G' \wedge \Rp, \Rq \not\in \subj(\act)
  \Longrightarrow
  \Rp \gMsg \Rq \gTy{\tau \hasCost \ccc} . G
  \; \xrightarrow{\act} \;
  \Rp \gMsg \Rq \gTy{\tau \hasCost \ccc} . G'$

  By $\WF(\Rp \gMsg \Rq \gTy{\tau \hasCost \ccc} . G, (\Time, \Queue)$,
  we have
  $\Queue = \Queue[\Rp\Rq \mapsto \epsilon]$,
  and
  $\WF(G, (\Time, \Queue[\Rp\Rq \mapsto \epsilon]))$.
  By the IH,
  $\WF(G', \cost(\act)(\Time, \Queue[\Rp\Rq \mapsto \epsilon]))$.
  Therefore, by Def\ \ref{def:wf_dep_queue} and Def.\ \ref{def:cost_trace},
  $\WF(\Rp \gMsg \Rq \gTy{\tau \hasCost \ccc} . G', \cost(\act)(\Time, \Queue[\Rp\Rq \mapsto \epsilon]))$.

  \noindent
  \underline{\textbf{Case} [GR4b]}:
  $ \forall i \in I,
  G_i \xrightarrow{\act} G_i'
  \wedge
  (\Rp, \Rq \not\in \subj(\act))
  \Longrightarrow
  \Rp \gMsg \Rq \{\aInj{i}. \; G_i\}_{i \in I}
  \; \xrightarrow{\act} \;
  \Rp \gMsg \Rq \{\aInj{i}. \; G_i'\}_{i \in I} $

  By $\WF(\Rp \gMsg \Rq \{\aInj{i}. \; G_i\}_{i \in I}, (\Time, \Queue)$,
  we have
  $\Queue = \Queue[\Rp\Rq \mapsto \epsilon]$,
  and
  $\WF(G_i, (\Time, \Queue[\Rp\Rq \mapsto \epsilon]))$,
  for all
  $i \in I$.
  By the IH,
  $\WF(G_i', \cost(\act)(\Time, \Queue[\Rp\Rq \mapsto \epsilon]))$.
  Therefore, by Def\ \ref{def:wf_dep_queue} and Def.\ \ref{def:cost_trace},
  $\WF(\Rp \gMsg \Rq \{\aInj{i}. \; G_i\}_{i \in I}, \cost(\act) (\Time ,
  \\ \Queue[\Rp\Rq \mapsto \epsilon]))$.

  \noindent
  \underline{\textbf{Case} [GR5a]}:
  $ G \xrightarrow{\act} G'
  \wedge
  \Rq \not\in \subj(\act)
  \Longrightarrow
  \Rp \gMsgt \Rq \gTy{\tau \hasCost \ccc} . G
  \; \xrightarrow{\act} \;
  \Rp \gMsgt \Rq \gTy{\tau \hasCost \ccc} . G'$

  By $\WF(\Rp \gMsgt \Rq \gTy{\tau \hasCost \ccc} . G, (\Time, \Queue)$,
  we have
  $\Queue = \Queue[\Rp\Rq \mapsto \ccc_\Rp \cdot w]$,
  and
  $\WF(G, (\Time, \Queue[\Rp\Rq \mapsto w]))$.
  By the IH,
  $\WF(G', \cost(\act)(\Time, \Queue[\Rp\Rq \mapsto w]))$.
  Therefore, by Def\ \ref{def:wf_dep_queue} and Def.\ \ref{def:cost_trace},
  $\WF(\Rp \gMsgt \Rq \gTy{\tau \hasCost \ccc} . G', \cost(\act)(\Time, \Queue[\Rp\Rq \mapsto \ccc_\Rp \cdot w]))$.

  \noindent
  \underline{\textbf{Case} [GR5b]}:
  $ G \xrightarrow{\act} G'
  \wedge
  \Rp \not\in \subj(\act)
  \Longrightarrow
  \Rp \gEval (\tau \hasCost \ccc) . G
  \; \xrightarrow{\act} \;
  \Rp \gEval (\tau \hasCost \ccc) . G' $

  By $\WF(\Rp \gEval (\tau \hasCost \ccc) . G, (\Time, \Queue)$,
  we have
  $\WF(G, (\Time, \Queue))$.
  By the IH,
  $\WF(G', \cost(\tau)(\Time, \Queue))$.
  Therefore, by Def\ \ref{def:wf_dep_queue} and Def.\ \ref{def:cost_trace},
  By $\WF(\Rp \gEval (\tau \hasCost \ccc) . G', \cost(\tau)(\Time, \Queue))$.

  \noindent
  \underline{\textbf{Case} [GR5c]}:
  $\begin{array}[t]{@{}l@{}}
     G_j \xrightarrow{\act} G_j'
     \wedge
     (\Rq \not\in \subj(\act))
     \wedge
     (\forall i \in I \setminus j, G_i = G_i')
     \Longrightarrow \\
     \hspace{1cm}
     \Rp \gMsgt \Rq \; j \; \{\aInj{i}. \; G_i \}_{i \in I}
     \; \xrightarrow{\act} \;
     \Rp \gMsgt \Rq \; j \; \{\aInj{i}. \; G_i' \}_{i \in I}
   \end{array}$

   By $\WF(\Rp \gMsgt \Rq \; j \; \{\aInj{i}. \; G_i\}_{i \in I}, (\Time, \Queue)$,
   we have
   $\Queue = \Queue[\Rp\Rq \mapsto \ccc_\Rp \cdot w]$,
   and
   $\WF(G_j, (\Time, \Queue[\Rp\Rq \mapsto w]))$.
   By the IH,
   $\WF(G_j', \cost(\act)(\Time, \Queue[\Rp\Rq \mapsto w]))$.
   Therefore, by Def\ \ref{def:wf_dep_queue} and Def.\ \ref{def:cost_trace},
   $\WF(\Rp \gMsgt \Rq \; j \; \{\aInj{i}. \; G_i'\}_{i \in I}, \cost(\act)(\Time, \Queue[\Rp\Rq \mapsto \ccc_\Rp \cdot w]))$.
\end{proof}

\begin{lemma}\label{lm:action_queue}
  If
  $\Rq \neq \subj(\ell)$,
  and $\cost(\ell)(\Time, \Queue[\Rp\Rq \mapsto w]) = (\Time', \Queue')$,
  then $\exists w'$ s.t. $\Queue'(\Rp\Rq) = w \cdot w'$.
\end{lemma}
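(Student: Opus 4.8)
The plan is a direct case analysis on the shape of the observable $\ell$, reading off from Def.~\ref{def:cost_trace} how $\cost(\ell)$ acts on the queue component. The key structural fact is that each clause of $\cost(\ell)$ modifies \emph{at most one} queue, namely the one named by the sender/receiver pair occurring in $\ell$, and the direction of the modification is fixed by the polarity of $\ell$: the output clauses ($\tSend$, $\tSelect$) append a single word at the tail of that queue, while the input clauses ($\tRecv$, $\tBranch$) dequeue from its head. So the only way $\Queue'(\Rp\Rq)$ could fail to be an extension $w\cdot w'$ of $w$ is if $\ell$ dequeues from the queue $\Rp\Rq$; the hypothesis $\Rq\neq\subj(\ell)$ will rule that out.

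First I would recall that $\ell$ ranges over the five forms $\Rp'\Rq'\tSend\tau$, $\Rp'\Rq'\tSelect l_k$, $\Rp'\Rq'\tRecv\tau$, $\Rp'\Rq'\tBranch l_k$, and $\Rp'\tRun\ccc$, and that $\cost(\ell)(\Time,\Queue[\Rp\Rq\mapsto w])$ is assumed defined. If $\ell=\Rp'\tRun\ccc$, or if $\ell$ is one of the first four forms with $(\Rp',\Rq')\neq(\Rp,\Rq)$, the relevant clause of Def.~\ref{def:cost_trace} leaves $\Queue(\Rp\Rq)$ untouched, so $\Queue'(\Rp\Rq)=w=w\cdot\epsilon$ and we take $w'=\epsilon$. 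The remaining cases are exactly those where $\ell$ addresses the queue $\Rp\Rq$, i.e.\ $(\Rp',\Rq')=(\Rp,\Rq)$. Here the two dequeuing forms are excluded by the hypothesis: by the definition of $\subj$ in \S\ref{sec:GTLTS} we have $\subj(\Rp\Rq\tRecv\tau)=\Rq=\subj(\Rp\Rq\tBranch l_k)$, contradicting $\Rq\neq\subj(\ell)$. Thus the only surviving clauses are the enqueuing ones, for which Def.~\ref{def:cost_trace} gives $\Queue'(\Rp\Rq)=w\cdot(\Time(\Rp)+\csend(\tau))$ when $\ell=\Rp\Rq\tSend\tau$, and $\Queue'(\Rp\Rq)=w\cdot(\Time(\Rp)+\csend(1))$ when $\ell=\Rp\Rq\tSelect l_k$; in both cases $w'$ is the singleton word appended at the tail. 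This exhausts all cases.

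There is no genuine obstacle here: the statement is a bookkeeping lemma recording that any action not performed by the receiver of a queue can only extend that queue, never shorten it. The single point requiring care is the interaction of the two naming conventions in play — the subscript on a queue $\Rp\Rq$ lists sender then receiver, whereas $\subj$ of a $\tRecv$/$\tBranch$ action is the receiver — so I would be explicit about which queue each clause of Def.~\ref{def:cost_trace} touches and what its subject is, to be sure the exclusion argument lands on the right cases. The lemma is then used in the [GR4a], [GR5a] and [GR5c] cases of Lemma~\ref{lm:cost_preservation}, where an action is observed under a prefix and one needs the pending value recorded in $\Queue$ to still be present (at the head) in the resulting queue.
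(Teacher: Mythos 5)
Your proof is correct and follows essentially the same route as the paper's: a case analysis on the five forms of $\ell$ against Def.~\ref{def:cost_trace}, using the hypothesis $\Rq\neq\subj(\ell)$ to exclude exactly the $\tRecv$/$\tBranch$ clauses that would dequeue from $\Rp\Rq$, with $w'=\epsilon$ when the queue is untouched and $w'$ the appended singleton in the $\tSend$/$\tSelect$ cases. The only difference is presentational (the paper names the action's participants and the queue's pair with distinct letters), which your explicit attention to the naming conventions already covers.
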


\begin{proof}
  By case analysis on $\ell$. We consider actions $\Rp\Rq$, and the queue elements:
  $[\Rr_1\Rr_2 \mapsto w]$.

  \noindent
  \underline{\textbf{Case} $\Rp\Rq\tRecv \tau$}.

  Since $\Rr_2 \neq \subj(\Rp\Rq\tRecv \tau)$, $q \neq \Rr_2$.

   $(\cost(\Rp\Rq\tRecv \tau) (\Time, \Queue[\Rp\Rq\mapsto \ccc_\Rp \cdot w_p][\Rr_1\Rr_2 \mapsto w]))
   = (\Time[\ccc_\Rp | \Rq \addcost \crecv(\tau)], \Queue[\Rp\Rq \mapsto w_p][\Rr_1\Rr_2 \mapsto w])
   $. Therefore, there exists $\epsilon$ s.t. $w = w \cdot \epsilon$.

  \noindent
  \underline{\textbf{Case} $\Rp\Rq\tBranch l_k$}.

  Since $\Rr_2 \neq \subj(\Rp\Rq\tBranch l_k)$, $\Rq \neq \Rr_2$.

   $(\cost(\Rp\Rq\tBranch l_k) (\Time, \Queue[\Rp\Rq\mapsto \ccc_\Rp \cdot
   w_\Rp][\Rr_1\Rr_2 \mapsto w])) = (\Time[\ccc_\Rp | \Rq \addcost
   \crecv(1)], \Queue[\Rp\Rq \mapsto w_\Rp][\Rr_1\Rr_2 \mapsto w]) $.
   Therefore, there exists $\epsilon$ s.t. $w = w \cdot \epsilon$.

  \noindent
  \underline{\textbf{Case} $\Rp\Rq\tSend \tau$}.

  Since $\Rr_2 \neq \subj(\Rp\Rq\tSend \tau)$, $\Rp \neq \Rr_2$. Consider two
  cases: if $\Rr_1 = \Rp$ and $\Rr_2 = \Rq$, then $\cost(\Rp\Rq\tSend \tau)
  (\Time, \Queue[\Rp\Rq \mapsto w]) = (\Time[\Rp\addcost \csend(\tau)],
  \Queue[\Rp\Rq \mapsto w \cdot \Time(\Rp) + \csend(\tau)]) $, and so there
  exists the singleton sequence $\Time(\Rp) + \csend(\tau)$ with
  $[\Rp\Rq\mapsto w \cdot \Time(\Rp) + \csend(\tau)]$. Otherwise,
  $[\Rr_1\Rr_2\mapsto w']$ cannot be $[\Rp\Rq\mapsto w]$, and we have
  $[\Rr_1\Rr_2 \mapsto w' \cdot \epsilon]$.

   \noindent
   \underline{\textbf{Case} $\Rp\Rq\tSelect l_k$}.

  Since $\Rr_2 \neq \subj(\Rp\Rq\tSelect l_k)$, $\Rp \neq \Rr_2$. Consider
  two cases: if $\Rr_1 = \Rp$ and $\Rr_2 = \Rq$, then $\cost(\Rp\Rq\tSelect
  l_k) (\Time, \Queue[\Rp\Rq \mapsto w]) = (\Time[\Rp\addcost \csend(1)],
  \Queue[\Rp\Rq \mapsto w \cdot \Time(\Rp) + \csend(1)]) $, and so there
  exists the singleton sequence $\Time(\Rp) + \csend(1)$ with $[\Rp\Rq\mapsto
  w \cdot \Time(\Rp) + \csend(1)]$. Otherwise, $[\Rr_1\Rr_2\mapsto w']$
  cannot be $[\Rp\Rq\mapsto w]$, and we have $[\Rr_1\Rr_2 \mapsto w' \cdot
  \epsilon]$.

   \noindent
   \underline{\textbf{Case} $\Rp \tRun \ccc$}.

   $\cost(\Rp \tRun \ccc) (\Time, \Queue[\Rr_1\Rr_2\mapsto w])
   = (\Time[\Rp\addcost \ccc], \Queue[\Rr_1\Rr_2\mapsto w])
   $, and so we have $\Queue[\Rr_1\Rr_2 \mapsto w \cdot \epsilon]$.

\end{proof}

For the next lemma, we define $(\Time, \Queue)[\Rp\addcost \ccc] = (\Time[\Rp\addcost
\ccc], \Queue)$ as notation to update the cost environment in a pair of cost
environment and dependency queue.

\begin{lemma}\label{lm:action_cost}
  If
  $\Rp, \Rq \notin \subj(\ell)$,
  then
  $(\cost(\ell)(\Time, \Queue))[\Rq|\Rp\addcost \ccc] =
  \cost(\ell)(\Time[\Rq|\Rp\addcost \ccc], \Queue)$.

\end{lemma}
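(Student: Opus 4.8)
The plan is to prove the statement by a case analysis on the shape of the observable $\ell$, exploiting a single structural fact about Definition~\ref{def:cost_trace}: for every form of $\ell$, the operation $\cost(\ell)$ modifies the \emph{cost-environment} component at exactly one participant, namely $\subj(\ell)$, and (for the send/select cases) the value it appends to a queue is a function of $\Time(\subj(\ell))$ only; it never reads or writes $\Time$ at any other role. On the other hand, unfolding the abbreviations, the update $[\Rq \mid \Rp \addcost \ccc]$ is exactly $\Time[\Rp \mapsto \max(\Time(\Rp),\Time(\Rq)) + \ccc]$: it reads $\Time$ only at $\Rp$ and $\Rq$, writes $\Time$ only at $\Rp$, and — by the convention $(\Time,\Queue)[\Rq\mid\Rp\addcost\ccc] = (\Time[\Rq\mid\Rp\addcost\ccc],\Queue)$ — leaves every queue untouched. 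The hypothesis $\Rp,\Rq\notin\subj(\ell)$ is precisely what makes these two operations act on disjoint slices of the state, so that they commute.

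First I would record the elementary bookkeeping lemma for finite maps: if $\role{a}\neq\role{b}$ and the written values $v_1,v_2$ do not mention the map being updated, then $\Time[\role{a}\mapsto v_1][\role{b}\mapsto v_2] = \Time[\role{b}\mapsto v_2][\role{a}\mapsto v_1]$, and moreover $(\Time[\role{a}\mapsto v_1])(\role{c})=\Time(\role{c})$ whenever $\role{c}\neq\role{a}$. Then I would go through the five forms of $\ell$. For $\ell=\role{a}\role{s}\tRecv\tau$ and $\ell=\role{a}\role{s}\tBranch l_k$, with $\role{s}=\subj(\ell)\notin\{\Rp,\Rq\}$: $\cost(\ell)$ rewrites the environment to $\Time[\ccc_0\mid\role{s}\addcost\crecv(\cdot)]$ (with $\ccc_0$ read from the head of queue $\role{a}\role{s}$) and dequeues that queue; since the two environment updates target the distinct keys $\role{s}$ and $\Rp$ and neither changes the environment at the keys the other reads, they commute, while $\ccc_0$ and the dequeue are untouched by $[\Rq\mid\Rp\addcost\ccc]$ because it does not act on queues. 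For $\ell=\role{s}\role{b}\tSend\tau$ and $\ell=\role{s}\role{b}\tSelect l_k$: $\cost(\ell)$ writes the environment at $\role{s}$ and appends $\Time(\role{s})+\csend(\cdot)$ to queue $\role{s}\role{b}$; since $\role{s}\neq\Rp$, the update $[\Rq\mid\Rp\addcost\ccc]$ leaves $\Time(\role{s})$ (and every queue) unchanged, so the enqueued value agrees on both sides, and the two environment writes commute as before. Finally $\ell=\role{s}\tRun\ccc'$ reduces immediately to the map-update commutation, since $\cost(\ell)=\Time[\role{s}\addcost\ccc']$ with the queue untouched.

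I do not expect any genuine difficulty here; the only point that needs care is keeping the ``external dependency'' argument straight. On the right-hand side $[\Rq\mid\Rp\addcost\ccc]$ is applied \emph{after} $\cost(\ell)$, so one must check that the environment component $\Time''$ of $\cost(\ell)(\Time,\Queue)$ satisfies $\Time''(\Rp)=\Time(\Rp)$ and $\Time''(\Rq)=\Time(\Rq)$ — which holds exactly because $\subj(\ell)\notin\{\Rp,\Rq\}$ — so that the value written at $\Rp$ on the right is the same $\max(\Time(\Rp),\Time(\Rq))+\ccc$ as the value written at $\Rp$ on the left. With that observation in place, every case collapses to an instance of the finite-map commutation fact, and the lemma follows.
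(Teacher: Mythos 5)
Your proof is correct and takes essentially the same route as the paper's: a case analysis on the five forms of $\ell$, where each case reduces to commuting environment updates at disjoint roles, using that $\cost(\ell)$ writes $\Time$ only at $\subj(\ell)$, reads it (and the queues) only there, and that $[\Rq\mid\Rp\addcost\ccc]$ writes only at $\Rp$ and never touches the queues. (Your closing remark swaps the labels \emph{left} and \emph{right} --- the update is applied after $\cost(\ell)$ on the left-hand side of the stated equation --- but the check you describe, namely that the environment produced by $\cost(\ell)$ agrees with $\Time$ at $\Rp$ and $\Rq$, is exactly the one needed.)
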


\begin{proof}
  By case analysis on $\ell$. We assume actions with subject $\Rp \neq \Rr_1, \Rr_2$:

  \noindent
  \underline{\textbf{Case} $\Rp\Rq\tRecv \tau$}.

   $(\cost(\Rp\Rq\tRecv \tau) (\Time, \Queue[\Rp\Rq \mapsto \ccc_\Rp \cdot w]))[\Rr_1|\Rr_2\addcost \ccc]
   = (\Time[\ccc_\Rp | \Rq \addcost \crecv(\tau)][\Rr_1|\Rr_2\addcost \ccc], \Queue[\Rp\Rq \mapsto w])
   = (\Time[\Rr_1|\Rr_2 \addcost \ccc][\ccc_\Rp | \Rq \addcost \crecv(\tau)], \Queue[\Rp\Rq \mapsto w])
   = \cost(\Rp\Rq\tRecv \tau) (\Time[\Rr_1|\Rr_2\addcost \ccc], \Queue[\Rp\Rq \mapsto \ccc \cdot w]))
   $

  \noindent
  \underline{\textbf{Case} $\Rp\Rq\tBranch l_k$}.

   $(\cost(\Rp\Rq\tBranch l_k) (\Time, \Queue[\Rp\Rq \mapsto \ccc_\Rp \cdot w]))[\Rr_1|\Rr_2 \addcost \ccc]
   = (\Time[\ccc_\Rp | \Rq \addcost \crecv(1)][\Rr_1|\Rr_2 \addcost \ccc], \Queue[\Rp\Rq \mapsto w])
   = (\Time[\Rr_1|\Rr_2 \addcost \ccc][\ccc_\Rp | \Rq \addcost \crecv(1)], \Queue[\Rp\Rq \mapsto w])
   = \cost(\Rp\Rq\tBranch l_k) (\Time[\Rr_1| \Rr_2 \addcost \ccc], \Queue[\Rp\Rq \mapsto \ccc \cdot w]))
   $

  \noindent
  \underline{\textbf{Case} $\Rp\Rq\tSend \tau$}.

   $(\cost(\Rp\Rq\tSend \tau) (\Time, \Queue[\Rp\Rq \mapsto w]))[\Rr_1|\Rr_2 \addcost \ccc]
   = (\Time[\Rp\addcost \csend(\tau)][\Rr_1|\Rr_2\addcost \ccc], \Queue[\Rp\Rq \mapsto w \cdot \Time(\Rp) + \csend(\tau)])
   = (\Time[\Rr_1|\Rr_2 \addcost \ccc][\Rp\addcost \csend(\tau)], \Queue[\Rp\Rq \mapsto w \cdot \Time(\Rp) + \csend(\tau)])
   = \cost(\Rp\Rq\tSend \tau) (\Time[\Rr_1|\Rr_2 \addcost \ccc], \Queue[\Rp\Rq \mapsto w]))
   $

   \noindent
   \underline{\textbf{Case} $\Rp\Rq\tSelect l_k$}.

   $(\cost(\Rp\Rq\tSelect l_k) (\Time, \Queue[\Rp\Rq \mapsto w]))[\Rr_1|\Rr_2 \addcost \ccc]
   = (\Time[\Rp\addcost \csend(1)][\Rr_1|\Rr_2\addcost \ccc], \Queue[\Rp\Rq \mapsto w \cdot \Time(\Rp) + \csend(1)])
   = (\Time[\Rr_1|\Rr_2 \addcost \ccc][\Rp\addcost \csend(1)], \Queue[\Rp\Rq \mapsto w \cdot \Time(\Rp) + \csend(1)])
   = \cost(\Rp\Rq\tSelect l_k) (\Time[\Rr_1|\Rr_2 \addcost \ccc], \Queue[\Rp\Rq \mapsto w]))
   $

   \noindent
   \underline{\textbf{Case} $p \tRun c$}.

   $(\cost(\Rp \tRun \ccc) (\Time, \Queue))[\Rr_1|\Rr_2 \addcost \ccc]
   = (\Time[\Rp\addcost \ccc][\Rr_1|\Rr_2 \addcost \ccc], \Queue)
   = (\Time[\Rr_1|\Rr_2 \addcost \ccc][\Rp\addcost \ccc], \Queue)
    = \cost(\Rp \tRun c) (\Time[\Rr_1|\Rr_2 \addcost \ccc], \Queue)
   $
\end{proof}

\costPreserv*

\begin{proof}
  By induction on the structure of the derivation of $G \xrightarrow{\ell} G'$:

  \noindent
  \underline{\textbf{Case} [GR1a]}:
  $p \gMsg  \Rq \gTy{\tau \hasCost \ccc} . G
  \xrightarrow{\Rp\Rq\tSend \tau}
  \Rp \gMsgt \Rq \gTy{\tau \hasCost \ccc} . G$.

  By
  $\mathsf{Wf}(\Rp \gMsg  \Rq \gTy{\tau \hasCost \ccc} . G, \Queue)$,
  we have that
  $\Queue = \Queue[\Rp\Rq \mapsto \epsilon]$.

  $\begin{array}{@{}l@{}}
     \cost(\Rp \gMsgt \Rq \gTy{\tau \hasCost \ccc} . G)
             (\cost(\Rp\Rq\tSend \tau)(\Time, \Queue))
     \\ =
     \cost(\Rp \gMsgt \Rq \gTy{\tau \hasCost \ccc} . G)
             (\Time[\Rp\addcost \csend(\tau)],
              \Queue[\Rp\Rq \mapsto \Time(\Rp) + \csend(\tau)])
     \\ =
     \cost(G) (\Time[\Rp\addcost \csend(\tau)]
                    [\Time(\Rp) + \csend(\tau)|\Rq\addcost \crecv(\tau) + \ccc],
               \Queue)
     \\ =
     \cost(G) (\Time[\Rp\addcost \csend(\tau)][\Rp| \Rq \addcost \crecv(\tau) + \ccc],
               \Queue)
     \\ =
     \cost(\Rp\gMsg \Rq \gTy{\tau\hasCost \ccc}. G)(\Time, \Queue)
   \end{array}$.

   \noindent
   \underline{\textbf{Case} [GR1b]}:
   $\Rp\gMsg  \Rq \{\aInj{i}. G_i\}_{i \in I}
   \xrightarrow{\Rp\Rq\tSelect j}
   \Rp \gMsgt \Rq \; : j \; \{\aInj{i}. G_i\}_{i \in I} \quad (j \in I)$

   By
   $\mathsf{Wf(\Rp\gMsg  \Rq \{\aInj{i}. G_i\}_{i \in I}, \Queue)}$,
   we know that
   $\Queue = \Queue[\Rp\Rq \mapsto \epsilon]$.

   $\begin{array}{@{}l@{}}
      \cost(\Rp \gMsgt \Rq \; : j \; \{\aInj{i}. G_i\}_{i \in I})
                (\cost(\Rp\Rq\tSelect j) (\Time, \Queue))
      \\ =
      \cost(\Rp \gMsgt \Rq \; : j \; \{\aInj{i}. G_i\}_{i \in I})
                (\Time[\Rp\addcost \csend(1)],
                 \Queue[\Rp\Rq \mapsto \Time(\Rp) + \csend(1)])
      \\ =
      \cost(G_j) (\Time[\Rp\addcost \csend(1)]
                      [\Time(\Rp) + \csend(1)|\Rq\addcost \crecv(1)],
                 \Queue[\Rp\Rq \mapsto \epsilon])
      \\ =
      \cost(G_j) (\Time[\Rp\addcost \csend(1)][\Rp| \Rq\addcost \crecv(1)], \Queue)
      \\ \leq
      \max \{\cost(G_i) (\Time[\Rp\addcost \csend(1)]
                             [\Rp| \Rq\addcost \crecv(1)], \Queue)]\}_{i \in I}
      \\ =
      \cost(\Rp\gMsg  \Rq \{\aInj{i}. G_i\}_{i \in I})(\Time, \Queue)
    \end{array}$

    \noindent
    \underline{\textbf{Case} [GR2a]}:
    $\Rp \gMsgt \Rq \gTy{\tau \hasCost \ccc} . G
    \xrightarrow{\Rp\Rq\tRecv \tau}
    \Rq \gEval (\tau \hasCost \ccc) . G$

    By
    $\mathsf{Wf(\Rp \gMsgt \Rq \gTy{\tau \hasCost \ccc} . G, \Queue)}$,
    we know that
    $\Queue = \Queue[\Rp\Rq \mapsto \ccc_\Rp \cdot w]$.

    $\begin{array}{@{}l@{}}
       \cost(\Rq \gEval (\tau \hasCost \ccc) . G)
              (\cost(\Rp\Rq\tRecv \tau)(\Time, \Queue[\Rp\Rq \mapsto \ccc_\Rp \cdot w]))
       \\ =
       \cost(\Rq \gEval (\tau \hasCost \ccc) . G)
              (\Time[\ccc_\Rp| \Rq \addcost \crecv(\tau)], \Queue[\Rp\Rq \mapsto w])
       \\ =
       \cost(G)
              (\Time[\ccc_\Rp| \Rq \addcost \crecv(\tau)][\Rq \addcost \ccc],
               \Queue[\Rp\Rq \mapsto w])
       \\ =
       \cost(G)
              (\Time[\ccc_\Rp| \Rq \addcost \crecv(\tau) + \ccc],
               \Queue[\Rp\Rq \mapsto w])
       \\ =
       \cost(\Rp \gMsgt \Rq \gTy{\tau \hasCost \ccc} . G)
                (\Time, \Queue[\Rp\Rq \mapsto \ccc_\Rp \cdot w])
     \end{array}$

     \noindent
     \underline{\textbf{Case} [GR2b]}:
     $\Rq \gEval (\tau \hasCost \ccc) . G
     \xrightarrow{\Rq\tRun \ccc}
     G$

     $\begin{array}{@{}l@{}}
        \cost(G)(\cost(\Rq\tRun \ccc)(\Time, \Queue))
        \\ =
        \cost(G)(\Time[\Rq \addcost \ccc], \Queue)
        \\ =
        \cost(\Rq \gEval (\tau \hasCost \ccc) . G)(\Time, \Queue)
     \end{array}$

     \noindent
     \underline{\textbf{Case} [GR2c]}:
     $\Rp \gMsgt \Rq \; : j \; \{\aInj{i}. G_i\}_{i \in I}
     \xrightarrow{\Rp\Rq\tBranch}
     G_j$

     By
     $\mathsf{Wf(\Rp \gMsgt \Rq \; j \; \{\aInj{i}. G_i\}_{i \in I}, \Queue)}$,
     we know that
     $\Queue = \Queue[\Rp\Rq \mapsto \ccc_\Rp \cdot w]$.

     $\begin{array}{@{}l@{}}
        \cost(G_j) (\cost(\Rp\Rq\tBranch)(\Time, \Queue[\Rp\Rq\mapsto \ccc_\Rp \cdot w]))
         \\ =
        \cost(G_j)(\Time[\ccc_\Rp | \Rq\addcost \crecv(1)], \Queue[\Rp\Rq\mapsto w])
        \\ =
        \cost(\Rp \gMsgt \Rq \; j \; \{\aInj{i}. G_i\}_{i \in I})
              (\Time[\Rq\addcost \crecv(1)], \Queue[\Rp\Rq\mapsto \ccc_\Rp \cdot w])
     \end{array}$

     \noindent
     \underline{\textbf{Case} [GR3]}:
     $G[\gFix X. G/X] \xrightarrow{\act} G'
     \Longrightarrow \gFix X. G \xrightarrow{\act} G'$

     Impossible, since $\neg \WF(\gFix X. G, \Queue)$.

     \noindent
     \underline{\textbf{Case} [GR4a]}:
     $ G \xrightarrow{\act} G' \wedge \Rp, \Rq \not\in \subj(\act)
     \Longrightarrow
     \Rp \gMsg \Rq \gTy{\tau \hasCost \ccc} . G
     \; \xrightarrow{\act} \;
     \Rp \gMsg \Rq \gTy{\tau \hasCost \ccc} . G'$

     By $\WF(\Rp \gMsg \Rq \gTy{\tau \hasCost \ccc} . G, \Queue)$,
     we know that
     $\Queue=\Queue[\Rp\Rq\mapsto \epsilon]$,
     and that
     $\WF(G, \Queue)$.

     $\begin{array}{@{}l@{\hspace{1cm}}r}
        \cost(\Rp \gMsg \Rq \gTy{\tau \hasCost \ccc} . G') (\cost(\act)(\Time, \Queue))
        \\ =
        \cost(G', \cost(\act)(\Time, \Queue)[\Rp\addcost \csend(\tau)]
                                            [\Rp | \Rq \addcost \crecv(\tau) + \ccc])
        & \{ \Rp, \Rq \not\in \subj(\act) z\; \text{and Lemma \ref{lm:action_cost}}\}
        \\ =
        \cost(G')(\cost(\act)(\Time[\Rp\addcost \csend(\tau)]
                                   [\Rp | \Rq \addcost \crecv(\tau) + \ccc],
                              \Queue))
        & \text{\{by the IH and \WF(G, \Queue) \}}
        \\ \leq
        \cost(G) (\Time[\Rp\addcost \csend(\tau)][\Rp | \Rq \addcost \crecv(\tau) + \ccc], \Queue)
        \\ =
        \cost(\Rp \gMsg \Rq \gTy{\tau \hasCost \ccc} . G)(\Time, \Queue)
     \end{array}$

     \noindent
     \underline{\textbf{Case} [GR4b]}:
     $ \forall i \in I,
     G_i \xrightarrow{\act} G_i'
     \wedge
     (\Rp, \Rq \not\in \subj(\act))
     \Longrightarrow
     \Rp \gMsg \Rq \{\aInj{i}. \; G_i\}_{i \in I}
     \; \xrightarrow{\act} \;
     \Rp \gMsg \Rq \{\aInj{i}. \; G_i'\}_{i \in I} $

     By $\WF(\Rp \gMsg \Rq \{\aInj{i}. \; G_i\}_{i \in I}, \Queue)$,
     we know that
     $\Queue=\Queue[\Rp\Rq\mapsto \epsilon]$,
     and that
     $\forall i \in I, \WF(G_i, \Queue)$.

     $\begin{array}{@{}l@{\hspace{1cm}}r}
        \cost(\Rp \gMsg \Rq \{\aInj{i}. \; G_i'\}_{i \in I})(\cost(\act)(\Time, \Queue))
        \\ =
        \max\{\cost(G_i') ((\cost(\act)(\Time, \Queue))
                               [\Rp\addcost \csend(1)]
                               [\Rq|\Rq \addcost \crecv(1)])\}_{i\in I}
        & \{ \Rp, \Rq \not\in \subj(\act) \text{and Lemma \ref{lm:action_cost}}\}
        \\ =
        \max\{ \cost(G_i')
                  (\cost(\act)(\Time[\Rp\addcost \csend(1)]
                                    [\Rq|\Rq \addcost \crecv(1)], \Queue)) \}_{i\in I}
        & \{\text{by the IH and} \; \WF(G_i, \Queue)\}
        \\ \leq
        \max\{ \cost(G_i)
                  (\Time[\Rp\addcost \csend(1)]
                        [\Rq|\Rq \addcost \crecv(1)], \Queue) \}_{i\in I}
        \\ =
        \cost(\Rp \gMsg \Rq \{\aInj{i}. \; G_i\}_{i \in I})(\Time, \Queue)
     \end{array}$

     \noindent
     \underline{\textbf{Case} [GR5a]}:
     $ G \xrightarrow{\act} G'
     \wedge
     \Rq \not\in \subj(\act)
     \Longrightarrow
     \Rp \gMsgt \Rq \gTy{\tau \hasCost \ccc} . G
     \; \xrightarrow{\act} \;
     \Rp \gMsgt \Rq \gTy{\tau \hasCost \ccc} . G'$

     By $\WF(\Rp \gMsgt \Rq \gTy{\tau \hasCost \ccc} . G, \Queue)$,
     we know that
     $\Queue=\Queue[\Rp\Rq\mapsto \ccc_\Rp \cdot w]$,
     and that
     $\WF(G, \Queue[\Rp\Rq \mapsto w])$.

     $\begin{array}{@{}l@{\hspace{1cm}}r}
        \cost(\Rp \gMsgt \Rq \gTy{\tau \hasCost \ccc} . G')
             (\cost(\act)(\Time, \Queue[\Rp\Rq\mapsto \ccc_\Rp \cdot w]))
        & \{\text{By}\; \Rq \notin \subj(\act) \;
            \text{and Lemma \ref{lm:action_queue}}\}
        \\ =
        \cost(G')((\cost(\act)(\Time, \Queue[\Rp\Rq\mapsto w]))[\ccc_\Rp| \Rq \addcost \crecv(\tau) + \ccc])
        & \{ \Rq \not\in \subj(\act) \text{and Lemma \ref{lm:action_cost}} \}
        \\ =
        \cost(G')(\cost(\act)(\Time[\ccc_\Rp| \Rq \addcost \crecv(\tau) + \ccc], \Queue[\Rp\Rq\mapsto w]))
        & \{ \text{by the IH and} \; \Wf(G, \Queue[\Rp\Rq\mapsto w])\}
        \\ \leq
        \cost(G)(\Time[\ccc_\Rp| \Rq \addcost \crecv(\tau) + \ccc], \Queue[\Rp\Rq\mapsto w])
        \\ =
        \cost(\Rp \gMsgt \Rq \gTy{\tau \hasCost \ccc} . G)(\Time, \Queue[\Rp\Rq\mapsto \ccc_\Rp \cdot w]))
     \end{array}$

     \noindent
     \underline{\textbf{Case} [GR5b]}:
     $ G \xrightarrow{\act} G'
     \wedge
     \Rp \not\in \subj(\act)
     \Longrightarrow
     \Rp \gEval (\tau \hasCost \ccc) . G
     \; \xrightarrow{\act} \;
     \Rp \gEval (\tau \hasCost \ccc) . G' $

     By $\WF(\Rp\gEval (\tau \hasCost \ccc) . G, \Queue)$,
     we know that
     $\WF(G, \Queue)$.

     $\begin{array}{@{}l@{\hspace{1cm}}r}
        \cost(\Rp\gEval (\tau \hasCost \ccc) . G')(\cost(\act)(\Time, \Queue))
        \\ =
        \cost(G')((\cost(\act)(\Time, \Queue))[\Rp\addcost \ccc])
        & \{\Rp \not\in \subj(\act)\}
        \\ =
        \cost(G')(\cost(\act)(\Time[\Rp\addcost \ccc], \Queue))
        & \{\text{by the IH and} \; \WF(G, \Queue)\}
        \\ \leq
        \cost(G)(\Time[\Rp\addcost \ccc], \Queue)
        \\ =
        \cost(\Rp \gEval (\tau \hasCost \ccc) . G)(\Time, \Queue)
     \end{array}$

     \noindent
     \underline{\textbf{Case} [GR5c]}:
     $\begin{array}[t]{@{}l@{}}
        G_j \xrightarrow{\act} G_j'
        \wedge
        (\Rq \not\in \subj(\act))
        \wedge
        (\forall i \in I \setminus j, G_i = G_i')
        \Longrightarrow \\
        \hspace{1cm}
        \Rp \gMsgt \Rq \; j \; \{\aInj{i}. \; G_i \}_{i \in I}
        \; \xrightarrow{\act} \;
        \Rp \gMsgt \Rq \; j \; \{\aInj{i}. \; G_i' \}_{i \in I}
     \end{array}$

     By $\WF(\Rp \gMsgt \Rq \; j \; \{\aInj{i}. \; G_i \}_{i \in I}, \Queue)$,
     we know that
     $\Queue=\Queue[\Rp\Rq\mapsto \ccc_\Rp \cdot w]$,
     and that \\
     $\WF(G_j, \Queue[\Rp\Rq \mapsto w])$.

     $\begin{array}{@{}l@{\hspace{1cm}}r}
        \cost(\Rp \gMsgt \; j \; \Rq \{\aInj{i}. \; G_i'\}_{i \in I})
            (\cost(\act)(\Time, \Queue[\Rp\Rq\mapsto \ccc_\Rp \cdot w]))
        & \{\text{By} \; \Rq \notin \subj(\act) \; \text{and Lemma \ref{lm:action_queue}}\}
        \\ =
        \cost(G_j')((\cost(\act)(\Time, \Queue[\Rp\Rq\mapsto w]))[c_\Rp | \Rq \addcost \crecv(1)])
        & \{ \Rq \not\in \subj(\act) \; \text{and Lemma \ref{lm:action_cost}}\}
        \\ =
        \cost(G_j')(\cost(\act)(\Time[\ccc_\Rp| \Rq \addcost \crecv(1)], \Queue[\Rp\Rq\mapsto w]))
        & \{\text{by the IH and} \; \WF(G_j, \Queue[\Rp\Rq\mapsto w])\}
        \\ \leq
        \cost(G_j)(\Time[\ccc_\Rp| \Rq \addcost \crecv(1)], \Queue[\Rp\Rq\mapsto w])
        \\ =
        \cost(\Rp \gMsgt \; j \; \Rq \{\aInj{i}. \; G_i\}_{i \in I})
                (\Time, \Queue[\Rp\Rq\mapsto \ccc_\Rp \cdot w])
      \end{array}
      $

\end{proof}

\end{document}